%% file: main.tex
\documentclass[11pt]{article}
\usepackage{fullpage}
\usepackage{graphicx}

\usepackage{amsthm}
\usepackage{amsmath}
\usepackage{amssymb}
\usepackage{nicefrac}
\usepackage{xcolor}
\definecolor{Darkblue}{rgb}{0,0,0.4}
\definecolor{Brown}{cmyk}{0,0.61,1.,0.60}
\definecolor{Purple}{cmyk}{0.45,0.86,0,0}
\definecolor{Darkgreen}{rgb}{0.133,0.543,0.133}
\usepackage[colorlinks,linkcolor=Darkblue,filecolor=blue,citecolor=blue,urlcolor=Darkblue,pagebackref]{hyperref}
\usepackage[nameinlink]{cleveref}
\usepackage{aliascnt}
\usepackage{color}
\usepackage{thm-restate}
\usepackage{appendix}
\usepackage{paralist}
\usepackage{multicol}
\usepackage{lineno}

\input{macros}

\begin{document}

\title{Near-Resolution of the Tradeoff Conjecture in \\Distributed Proof Labeling Schemes} 
\date{}
\author{
    Arnold Filtser \\
	\small Bar-Ilan University \\
	\small arnold.filtser@biu.ac.il \\
	\and
	Orr Fischer \\
	\small Bar-Ilan University \\
	\small orr.fischer@biu.ac.il \\}
\maketitle

\input{abstract}

        \thispagestyle{empty}
	\newpage
	\pagenumbering{arabic}

\thispagestyle{empty}
\clearpage
\setcounter{page}{1}


\section{Introduction}
In distributed proof labeling schemes (the PLS model), our goal is to verify that a network of nodes satisfies a given property $P$. The PLS model was first introduced by \cite{KKP10}, and has been studied in the context of distributed verification, with the aim of locally certifying that a network is in a valid state while storing compact labels on the nodes. Beyond verification, the PLS model has direct applications in the design of self-stabilizing protocols and in the study of local algorithms (See \cite{F21} for a survey). 

Formally, for a graph $G = (V,E)$ with some auxiliary inputs $I$ (e.g. edge weights), a proof labeling scheme certifying a property $P$ of $(G,I)$ consists of a pair $\Pi = (\Prov,\Ver)$, where $\Prov$ is a prover strategy that assigns a label to each node, and $\Ver$ is a verification algorithm executed by every node following a label assignment. Each node $v$ decides to accept or reject based on its label and the label of its neighbors. The scheme must satisfy two conditions: 
\begin{itemize}
\item \emph{Completeness:} if $(G,I)$ satisfies $P$, all nodes accept given the labeling of the prover $\Prov$. 
\item \emph{Soundness:} if $(G,I)$ does not satisfy $P$, then for every labeling at least one node rejects. 
\end{itemize} 
The cost of a scheme $\Pi$ is the maximum label size assigned by the prover $\Prov$, as a function of the graph size $|V(G)| = n$. 

In the generalized $t$-PLS model, nodes may decide to accept or reject based on their $t$-hop neighborhood and the labels of the nodes in this neighborhood, instead of just their immediate neighborhood. This extension is motivated by the study of space-time tradeoffs, as it allows the verifier to use $t$ rounds of communication (instead of one) to potentially reduce the label sizes. In this model, many tradeoffs have been studied for specific problems, such as acyclicity \cite{O17}, MST and shortest path problems \cite{FFHPP21}, colorings \cite{AMCFNPJ23,FKMSE24}, and dominating sets \cite{FKMSE24}. On the other hand, some works investigated space-time tradeoffs of any property $P$ \cite{O17,FFHPP21,FOS21,BFZ25}. The focus of our work is on this latter category.

The $t$-PLS model was first defined by
Ostrovsky, Perry, and Rosenbaum \cite{O17}, who showed that any property $P$ can be certified by a $t$-PLS with cost $O(\min(n^2,m\log{n})/t)$. Following this work, Feuilloley, Fraigniaud, Hirvonen, Paz, and Perry \cite{FFHPP21} showed that for trees, cycles, and grids, any $1$-PLS for a predicate $P$ with cost $p$ can be transformed into a $t$-PLS for $P$ with cost $O(\lceil p/t \rceil)$. They raised the question of whether this tradeoff can be obtained in general, in a conjecture later known as the Tradeoff Conjecture. The conjecture was raised again as an open question in \cite{FOS21,AMCFNPJ23,FKMSE24,BFZ25} and the survey \cite{F21}.

\begin{conjecture}[The Tradeoff Conjecture \cite{FFHPP21}] If there exists a $1$-PLS for a predicate $P$ on all graphs with cost $p=p(n)$, then for any $t \geq 1$, there exists a $t$-PLS for $P$ with cost $O(\lceil p/t\rceil)$.
\end{conjecture}
We resolve the weak version of the conjecture (mentioned as Problem 1 in \cite{FFHPP21}).
Specifically, we prove that the Tradeoff Conjecture holds, up to a single \emph{multiplicative} $\log{n}$ factor. To our knowledge, this is the first tradeoff for any $P$ and for all graphs depending on the cost of the $1$-PLS. In the following, a configuration family $(\cG,\cI)$ is a pair consisting of a graph family $\cG$ and a set of possible auxiliary inputs $\cI$. 

\begin{restatable}{theorem}{thmgeneral}
\label{thm:general_main}
    Let $(\cG,\cI)$ be a configuration family, and $P$ a predicate on $(\cG,\cI)$. For $G \in \cG$ with $|V(G)| = n$, if there exists a $1$-PLS for $P$ with cost $p$, then for any $t \geq 1$ there exists an $O(t\log{n})$-PLS for $P$ with cost $O(\lceil p/t \rceil)$.  
\end{restatable}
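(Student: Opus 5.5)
We plan to simulate the given $1$-PLS $\Pi=(\Prov,\Ver)$ by spreading each node's $p$-bit label over about $t$ nearby nodes, so that each node stores only $O(\lceil p/t\rceil)$ bits. The verifier then reassembles, within $O(t\log n)$ hops, the original label of every node and of its neighbours, and runs $\Ver$. The hard part is that $G$ may contain nodes with small $t$-neighbourhoods, so there is not always room to spread. We therefore work with a clustering of the graph.

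\textbf{Step 1 (clustering).} Fix a partition of $V(G)$ into clusters. Each cluster either has at least $t$ nodes and strong diameter $O(t\log n)$, or is an entire connected component with fewer than $t$ nodes. We would obtain this by ball-carving in the style of Miller--Peng--Xu or Linial--Saks, using exponential shifts with mean about $t$, so that radii are $O(t\log n)$ w.h.p. A deterministic existence argument suffices, because the prover is computationally unbounded and only existence matters. Small leftover pieces are merged into neighbouring clusters along a BFS tree. The key point is that any cluster lying in a component with at least $t$ nodes can be grown or merged until it has at least $t$ nodes, while its radius stays $O(t\log n)$.

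\textbf{Step 2 (certifying the clustering).} The prover writes at each node:
\begin{itemize}
\item a cluster identifier, namely the ID of the cluster root, chopped into $O(\log n/t)$-bit pieces per node along the cluster's BFS tree;
\item its BFS-tree parent pointer and its depth modulo a small number, which takes $O(1)$ bits.
\end{itemize}
Since the cost bound is $O(\lceil p/t\rceil)\ge 1$, we can only afford $O(1)$ bits for this. Each node sees its whole cluster within $O(t\log n)$ hops. It therefore recomputes the spanning tree, checks that the depths and pointers are consistent and acyclic, and checks that the cluster is either of size at least $t$ or a full component. The neighbourhood view itself lets nodes distinguish clusters; if needed, $O(1)$-bit colour classes plus ID pieces separate adjacent clusters. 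Because verification radius exceeds the cluster diameter, these local checks are genuinely global within each cluster.

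\textbf{Step 3 (distributing labels).} Consider a cluster $C$ with $|C|\ge t$. The concatenation of the $\Pi$-labels of the nodes of $C$, together with the labels of their outside neighbours that each node needs, has total size $O(|C|\cdot p)$, counting each node's own label once. We place the label of $v$ itself split over $t$ nodes of $C$ in a canonical order determined by IDs, for example a DFS order of the tree. Each node then hosts at most $O(1)$ chunks of size $\lceil p/t\rceil$. Every node $v$ gathers its cluster and its neighbours' clusters, all within radius $O(t\log n)$, reconstructs $\Pi$-labels for itself and all its neighbours, and runs $\Ver$. Small clusters that are whole components are handled by running $\Ver$ with $O(p/t)$-bit pieces of the labels spread over the component; alternatively, for such a component a node sees the entire component, so verification is immediate. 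Soundness follows because any accepted labeling decodes to a well-defined $\Pi$-labeling that is consistent across neighbours, since decoding is canonical from the same stored bits. Hence $\Ver$ accepts everywhere, which contradicts the soundness of $\Pi$.

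\textbf{Main obstacle.} The main difficulty is Step 1 combined with its certification using only $O(1)$ extra bits per node. We need clusters of size at least $t$, radius $O(t\log n)$, and a verifiable structure that is robust to adversarial labels. I would first try a greedy construction: repeatedly take balls of radius $r$ chosen so that the ball at radius $r+1$ is less than twice as large. Such an $r$ exists with $r=O(t\log n)$, and then absorb the remainders.
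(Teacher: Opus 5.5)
\textbf{There is a genuine gap: the counting in Step 3 fails, and this breaks the whole plan.} If every node's $\Pi$-label is written down somewhere in its cluster, then a cluster $C$ must hold $|C|\cdot p$ bits on $|C|$ nodes. So some node stores at least $p$ bits, however the chunks are arranged. Splitting each label into $t$ pieces creates $t|C|$ chunks, i.e.\ $t$ chunks per node on average, not $O(1)$. The condition $|C|\ge t$ does not help; cluster size only matters if you need to store about $|C|/t$ labels.

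The missing idea is that most labels need not be stored at all. The verifier is computationally unbounded, so a leader of $C$ can \emph{existentially guess} the $\Pi$-labels of the interior nodes. It checks that some assignment to $C\setminus X$, together with the explicitly stored labels on a small set $X$, makes $\Ver$ accept at every node whose closed neighbourhood meets $C\setminus X$. For these independent guesses to glue into one global accepting labeling, no node may see non-$X$ vertices of two different clusters. The paper enforces this via two-separation: every path from $C\setminus X$ to $C'\setminus X$ contains two consecutive vertices of $X$. For cost $O(\lceil p/t\rceil)$ you then need $|C\cap X|\le |C|/t$ in every cluster. This is a boundary-fraction condition, not a size condition, and the labels of $C\cap X$ are then spread over $C$.

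\textbf{The decomposition you need is therefore different from Step 1.} The paper carves, in ID order, sets $B_r(v)\cap L$ of the still-alive set $L$ with $|B_r(v)\cap L|\le(1+1/t)\,|B_{r-2}(v)\cap L|$. Such an $r=O(t\log n)$ exists, since otherwise the ball would exceed $n$ vertices. The vertices of the new cluster within distance $2$ of the remaining alive set lie in its outer two layers, hence form at most a $1/t$ fraction of it, and they are put into $X$. Your suggested rule, that the ball at radius $r+1$ is less than twice as large, controls neither the $1/t$ ratio nor the 2-hop thickness needed for two-separation.

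Step 2 is also not established. Depth modulo a constant does not single out a parent, or the cluster, at a high-degree node. Also, $O(\log n/t)$-bit pieces of a root ID exceed $O(\lceil p/t\rceil)$ when $p$ and $t$ are both well below $\log n$.

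The paper gets $O(1)$ bits by making the carving locally simulable:
\begin{itemize}
\item radii are read from a shared random function, reduced to $O(\log n)$ bits by Newman's trick and broadcast by an $O(1)$-bit shared-string scheme;
\item each node stores one bit saying whether the ball centred at it was taken, and one bit saying whether it is in $X$;
\item a node's cluster is then the smallest-ID taken centre whose radius covers it.
\end{itemize}
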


Consider the problem of MST verification in graphs with integer polynomial edge weights.
Kutten and Peleg \cite{KK07} constructed a $1$-PLS with cost $O(\log^2{n})$.
Later, Feuilloley\etal \cite{FFHPP21} constructed a $t$-PLS with cost $O(\lceil \log^2{n}/t \rceil)$. However, their $t$-PLS can only be applied for $t=O(\log n)$. Thus, the scheme of \cite{FFHPP21} has cost $\Omega(\log n)$ even for very large $t$.
By applying our black-box \Cref{thm:general_main} combined with \cite{KK07}, we get a general tradeoff, and an improvement over \cite{FFHPP21} for the parameter range $t\ge \log^2 n$.
In particular, for $t=\Omega(\log^3 n)$, we get constant cost.

\begin{corollary}[$t$-PLS for MST verification]
    For any $t \geq \log{n}$, there exists a $t$-PLS for the MST verification problem with cost $O(\lceil \log^3{n}/t \rceil)$.
\end{corollary}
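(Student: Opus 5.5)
The plan is to derive the corollary as a black-box instantiation of \Cref{thm:general_main}, using the $1$-PLS of Kutten and Peleg \cite{KK07} as the base scheme. Take the configuration family $(\cG,\cI)$ to be all graphs with integer edge weights polynomial in $n$, and let $P$ be the predicate ``the marked edge set is a minimum spanning tree''. By \cite{KK07}, $P$ has a $1$-PLS with cost $p = O(\log^2 n)$. Let $c \ge 1$ be the constant hidden in the $O(t\log n)$ verification radius of \Cref{thm:general_main}, so that for every $t' \ge 1$ it yields a $(c\, t'\log n)$-PLS with cost $O(\lceil p/t' \rceil)$.

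I will use one simple fact throughout: an $r$-PLS is also an $r'$-PLS for every $r' \ge r$. A node with a larger view can just ignore everything beyond distance $r$, and completeness and soundness are unchanged. It therefore suffices to produce, for each $t \ge \log n$, a scheme with radius at most $t$ and cost $O(\lceil \log^3 n / t \rceil)$.

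\emph{Case $t \ge c\log n$.} Set $t' = \lfloor t/(c\log n) \rfloor \ge 1$, so that $t' = \Theta(t/\log n)$. \Cref{thm:general_main} gives a PLS with radius $c\, t'\log n \le t$. Its cost is
\[
O(\lceil \log^2 n / t' \rceil) = O(\lceil \log^3 n / t \rceil).
\]
By the monotonicity fact, this scheme is a $t$-PLS.

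\emph{Case $\log n \le t < c\log n$.} Here I use the $1$-PLS of \cite{KK07} directly, viewed as a $t$-PLS. Its cost is $O(\log^2 n)$. Since $t = \Theta(\log n)$ in this range, $O(\log^2 n) = O(\log^3 n / t)$.

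There is no real obstacle. The only point needing care is the bookkeeping of the constant in the $O(t\log n)$ radius. This is why the reparametrization uses a floor and why the range just above $\log n$ is handled separately with the base scheme. That separate case is also exactly where the assumption $t \ge \log n$ is needed.
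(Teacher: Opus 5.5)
Your proposal is correct and matches the paper's derivation. The paper plugs the $O(\log^2 n)$-cost $1$-PLS of \cite{KK07} into \Cref{thm:general_main} with $t' \approx t/\log n$; your floor reparametrization, radius monotonicity, and separate handling of $t<c\log n$ make rigorous the constant in the $O(t\log n)$ radius that the paper glosses over. One small correction to your closing remark: the hypothesis $t\ge\log n$ is not used in your second case, since $O(\log^2 n)=O(\log^3 n/t)$ for every $t<c\log n$; it is only what makes the paper's direct substitution $t'=t/\log n\ge 1$ legitimate.
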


Next, we prove that the Tradeoff Conjecture holds in fixed minor-free graphs, up to a single \emph{additive} $\log{n}$ factor. Previously, Fischer, Oshman and Shamir \cite{FOS21} showed a bound of $\widetilde{O}(\lceil \Delta p/\sqrt{t} \rceil)$ on the cost, where $\Delta$ is the maximum degree in the graph. Our result removes the dependence in $\Delta$, and improves the dependence on $t$ quadratically. Many well-known graph families are minor free, such as planar graphs, as well as graphs of bounded genus or treewidth. We state our results for $K_r$-minor free graphs, which are in particular $H$-minor free for any $H$ such that $|V(H)| \leq r$.

\begin{restatable}{theorem}{thmminorfree}
\label{thm:main_minor_free}
    Let $r \geq 1$ be a constant, and let $(\cG,\cI)$ be a configuration family where $\cG$ is the family of all $K_r$-minor free graphs, and let $P$ be a predicate on $(\cG,\cI)$. If there exists a $1$-PLS for $P$ in $(\cG,\cI)$ with cost $p$, then for any $t \geq 1$, there exists a $t$-PLS for $P$ in $(\cG,\cI)$ with cost $O(\lceil p/t\rceil+\log{n})$.  
\end{restatable}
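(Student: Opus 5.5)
The plan is a cluster-based compression argument. We partition $G$ into connected clusters of strong diameter $O(t)$, each with at least $\Omega(t)$ nodes unless it is the whole component, and whose cluster graph has bounded arboricity, using that $K_r$-minor-freeness is preserved under contraction. Each cluster spreads the $1$-PLS labels of its members, together with the labels those members need from neighboring clusters, evenly over its nodes. The verifier at each node gathers its $t$-hop neighborhood, reconstructs the original $1$-PLS labels of itself and its neighbors, and runs the $1$-PLS verifier. The additive $O(\log n)$ pays for certifying the clustering structure: cluster identifiers, spanning-tree pointers and distances to the cluster root. These are standard $O(\log n)$-bit certificates, and they are checked locally with $O(t)$ rounds since the clusters have radius $O(t)$.

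\textbf{Step 1: the decomposition.} We construct, for a connected $K_r$-minor-free $G$, a partition into clusters $C_1,\dots,C_k$ such that each $C_i$ induces a connected subgraph of radius $O(t)$ around a root, and $|C_i|\ge t$ unless $C_i=V(G)$. This is done by taking a BFS-type maximal set of centers at pairwise distance $>2t$, assigning nodes to the nearest center, and merging any small leftover parts into a neighbor. If the graph has fewer than $t$ nodes, or diameter $<t$, each node sees everything and we use a trivial scheme. Contracting each cluster gives a $K_r$-minor-free minor $H$, which has at most $O(r\sqrt{\log r})\cdot |V(H)|$ edges. Hence $H$ has an orientation with out-degree $d=O(1)$.

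\textbf{Step 2: storing labels.} Each cluster $C_i$ must store the $1$-PLS labels $\ell(v)$ of all $v\in C_i$, which is $p|C_i|$ bits in total. We split this string into chunks of $\lceil p/t\rceil$ bits and assign them to distinct nodes of $C_i$ in DFS order of the cluster spanning tree. This uses $|C_i|\lceil p/t\rceil$ capacity, so each node holds $O(\lceil p/t\rceil)$ bits.

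A node $v\in C_i$ reconstructs $\ell(v)$ by collecting the chunks within radius $O(t)$. For a neighbor $u\in C_j$ it likewise reads $\ell(u)$ from $C_j$, which is within distance $O(t)$ of $v$. Thus no cross-cluster duplication is needed, and bounded arboricity serves only as a safeguard. The position of each chunk is certified with $O(\log n)$-bit indices, namely DFS intervals, so that the verifier can order the chunks unambiguously and detect a missing or duplicated chunk.

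\textbf{Step 3: soundness.} Suppose every node accepts. Then the cluster structure is consistent: each root is unique in its cluster, the tree pointers are acyclic via distance counters, and the DFS intervals are consistent. Every node $v$ therefore decodes a well-defined string $\ell(v)$, and any two nodes that decode $\ell(u)$ obtain the same string, because they read the same physical chunks. Consequently the decoded labels form a global $1$-PLS labeling on which every verifier accepts, so by soundness of the $1$-PLS, $P$ holds. Completeness is immediate from the construction.

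The main obstacle is Step 1 with a constant-factor $t$ locality. The clusters must have radius $O(t)$ while containing $\Omega(t)$ nodes, and since the verification radius is exactly $t$, all constants must be rescaled (taking radius $\le t/c$). Merging small pieces can blow up the radius. I would first try the ruling-set construction with centers at distance in $(t/c,2t/c]$, which ensures each Voronoi cell contains a path of length $t/(2c)$ and thus $\Omega(t)$ nodes with radius $O(t/c)$. Minor-freeness then enters only to show that the degree in the contracted graph does not matter, which in this plan is in fact already circumvented by reading neighbor labels directly.
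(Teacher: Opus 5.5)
There is a genuine gap, and it lies in Step 2, not Step 1. The counting there is wrong. The string you want cluster $C_i$ to store has $p|C_i|$ bits, but $|C_i|$ nodes holding $\lceil p/t\rceil$ bits each have total capacity only $|C_i|\lceil p/t\rceil$, which is smaller by a factor of about $t$. Storing all $1$-PLS labels of a cluster inside that cluster forces average label size $p$, however large $|C_i|$ is and however small its radius is, because the data and the capacity both scale linearly with $|C_i|$. So the scheme gives no tradeoff at all. A sanity check makes this visible: your plan never really uses $K_r$-minor-freeness. If it worked, it would prove the conjecture for all graphs with locality $O(t)$, which is stronger than the paper's general-graph result (that result needs locality $O(t\log n)$).

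The missing idea is that interior labels should not be stored at all. The prover writes down only the $1$-PLS labels of a small set $X$. A leader of each cluster $C$ then checks existentially that some assignment of labels to $C\setminus X$ makes every node whose closed neighborhood meets $C\setminus X$ accept under $\Ver_1$.

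For these independently guessed extensions to glue into one global accepting labeling, no node may see unstored nodes of two different clusters in its closed neighborhood. Equivalently, every path from $C\setminus X$ to $C'\setminus X$ must contain two consecutive nodes of $X$; this is the paper's two-separation property. The cost is then governed by $\max_C |C\cap X|/|C|$, and this must be $O(1/t)$ in every cluster simultaneously. Nothing in your ruling-set/Voronoi construction controls what fraction of a cell lies within distance $2$ of other cells, so the real obstacle is not getting $\Omega(t)$ nodes of radius $O(t)$.

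Minor-freeness enters exactly at this point. The paper uses the $(O(\log r),O(1))$-padded decomposition scheme of Conroy--Filtser together with an averaging (mediant) argument. This finds, in any remaining set $L$, a cluster $C\subseteq L$ of weak diameter at most $t$ with $|C\cap B_2(L\setminus C)|\le (2\beta/t)|C|$ (\Cref{lem:minor_free_ts_existance}). Carving such clusters greedily gives $(2\beta/t)$-cluster-degeneracy, hence a $(t,2\beta/t)$-TS partition (\Cref{lem:padded_ts}). That partition is certified and output with $O(\log n)$-bit labels: a cluster identifier plus one bit for membership in $X$.
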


In \cite{FFHPP21}, the authors additionally asked whether the tradeoff might scale linearly with the ball growth of the graph $b(t) = \min_v |B_t(v)|$,\footnote{$B_t(v)=\{u\in V\mid\dist_G(v,u)\le t\}$ denotes the ball around $v$ of radius $t$ w.r.t. the shortest path metric of $G$.} although they considered this to be most likely false.

\begin{question}[\cite{FFHPP21} Open Problem 2] Assuming there exists a $1$-PLS for a predicate $P$ on graphs with cost $p = p(n)$, does there exist for any $t \geq 1$ a $t$-PLS for $P$ with cost $\widetilde{O}(\lceil p/b(t)\rceil)$?
\end{question}

We confirm that this does not hold, and show that even if all $t$-neighborhoods are of size $\Omega(n)$, the best scaling possible is $\Omega(\lceil p/t \rceil)$. Moreover, this phenomenon already occurs in planar graphs.

\begin{restatable}{theorem}{thmdisproof}
\label{thm:disproof_strong}
    There exists a predicate $P$ on an infinite family of planar graphs $\cG = \{G_{t,m}\}_{t,m = 1}^{\infty}$ and on some auxiliary input set $\cI$, where for $G_{t,m} \in \cG$ we have $\min_{v \in V(G_{t,m})}|B_{t}(v)| = \Omega(|V(G_{t,m})|)$, and there is a $1$-PLS for $P$ with cost $O(m)$, but any $t$-PLS for $P$ has cost $\Omega(m/t)$.
\end{restatable}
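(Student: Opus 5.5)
We construct $G_{t,m}$ so that a $t$-hop view cannot see most of the certificate, yet every ball of radius $t$ is a constant fraction of the graph. The lower bound then follows from a crossing/cut-and-paste argument over a long path.

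\textbf{Construction.} Take a path $Q=(q_0,\dots,q_L)$ with $L=\Theta(t)$, say $L=4t$. Attach a hub vertex $h$ adjacent to every $q_i$ through subdivided edges of length $t$, i.e., disjoint paths of length $t$. Also attach a star of $N$ pendant leaves to $h$ to make the graph large. Every vertex is within about $t$ of $h$ in the relevant sense, so after slight rescaling (e.g.\ making the spokes of length $t/3$ and the path of length $t/3\cdot$const) every ball $B_t(v)$ contains $h$'s star. With $N\gg tL$, this gives $|B_t(v)|=\Omega(n)$. The graph is a fan with subdivided spokes plus pendant leaves, hence planar. The spokes must be long enough that information cannot shortcut through $h$ cheaply: their length $s$ is a constant fraction of $t$. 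Choosing $s=t/3$ and path length $L=\Theta(t)$ keeps every ball of radius $t$ containing $h$.

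\textbf{The predicate.} The auxiliary input gives $q_0$ a string $x\in\{0,1\}^m$ and $q_L$ a string $y\in\{0,1\}^m$. $P$ holds iff $x=y$. A $1$-PLS with cost $O(m)$ gives every vertex the label $x$, and each vertex checks equality with its neighbors and with its own input if it has one. The graph is connected, so this is complete and sound.

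\textbf{Lower bound, the main obstacle.} We would run a crossing-sequence argument. Suppose a $t$-PLS has cost $c$. Consider the $2^m$ yes-instances $(x,x)$. The set $S$ of vertices within distance $t$ of both the "$x$ side" and the "$y$ side" of any rejecting-relevant region has size $O(t)$: at most the path vertices and spokes in a middle window, but also $h$ and its star. The star is the danger, since $h$ sees both ends. To neutralize it, the hub must be far: we would instead route the big ball-growth mass through a structure placed so that $q_0$ and $q_L$ are at distance $>2t$ from each other via every route, while each vertex still reaches the mass within $t$. Concretely, we replace the single hub by a long "mass path" parallel to $Q$, where each mass vertex carries $N$ leaves, and we cut the graph into a left half and a right half separated by a window $W$ of $O(t)$ vertices, namely the width-$2t$ strip of both paths. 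Ball sizes are then $\Omega(n)$ only if the leaf mass is concentrated in the middle strip, which we arrange.

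Swapping argument: for $x\neq x'$ with identical labels on the separating strip $W$ (of size $O(t)$ layers but constant width, hence $O(t)$ vertices), splice the left labeling of $(x,x)$ with the right labeling of $(x',x')$ on input $(x,x')$. Every vertex's $t$-view then coincides with its view in an accepting instance, so all accept, contradicting soundness. Hence $2^{c|W|}\ge 2^m$, and since $|W|=O(t)$, we get $c=\Omega(m/t)$.

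The delicate step is reconciling the two requirements. We need $|B_t(v)|=\Omega(n)$, which forces the heavy leaf mass into the strip $W$. But leaves in $W$ also carry labels, which would inflate $|W|$. We resolve this by noting that leaves have degree one, so their labels can be simulated by their parent. More precisely, we apply the pigeonhole only to the labels of the non-leaf vertices within distance $t$ of the cut, arguing that leaf labels within a view are determined by each half's labeling independently. So the counting uses only $O(t)$ non-leaf vertices.
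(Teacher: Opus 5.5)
The step that fails is the one meant to reconcile ball growth with the cut: dropping the leaves of $W$ from the pigeonhole because ``leaves have degree one, so their labels can be simulated by their parent.'' The lower bound must hold against every $t$-PLS. A leaf's label is read by every vertex within distance $t$ of it, not only by its parent. Spreading information over many pendant leaves is exactly how a scheme gets a small per-node cost. Moving those labels to the parent is a transformation of schemes that multiplies the parent's label size, so it is not something you may assume.

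For the splice to be well defined, every $u\in W$ on which the two accepting labelings may differ must be movable into $A$ or into $B$ while keeping $\dist(A,B)>2t$. This is a condition on the position of $u$, not on its degree. Leaves in the middle of the strip violate it. Take $L=4t$, rungs $q_jw_j$, $A$ the columns $<t$ and $B$ the columns $>3t$. A leaf of $w_{2t}$ lies in $B_t(w_{t+1})$, which contains $w_{t-1}\in A$, and also in $B_t(w_{3t-1})$, which contains $w_{3t+1}\in B$. So the spliced labeling has no consistent value for it. These labels must be counted, $|W|=\Omega(n)$, and the pigeonhole only gives $c=\Omega(m/n)$.

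In fact the claimed bound can be false for your graph. Suppose every mass vertex carries $N\ge m$ leaves, as in your first description. Write $x$ one symbol per leaf, ordered by identifier, at every $w_j$. Let $w_j$ check that its leaves spell the same string as those of $w_{j+1}$, and let $q_0,q_L$ compare their inputs with the strings at $w_0,w_L$. This is a $t$-PLS of cost $O(1)$ for every $t\ge 2$. If instead the leaves sit only inside the strip, then $q_0$ is at distance at least $t+2$ from all of them, so $|B_t(q_0)|=O(t)$ and ball growth fails. Either way, the construction is never pinned down in a form meeting both requirements.

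The faulty premise is that ball growth forces the mass into $W$; the mass can sit in $A\cup B$ at exactly the right distance. Take the path $q_0,\dots,q_{2t+1}$ ($t\ge 2$), with sets $S_1,S_{2t}$ of $N$ pendant leaves attached to $q_1$ and $q_{2t}$. Set $A=\{q_0\}\cup S_1$, $B=\{q_{2t+1}\}\cup S_{2t}$ and $W=\{q_1,\dots,q_{2t}\}$. Then $\dist(A,B)=2t+1$ and every $t$-ball contains $S_1$ or $S_{2t}$. Your pigeonhole over the $2t$ labels of $W$ now correctly gives $c=\Omega(m/t)$, and with $N=mt$ one still has $p/b(t)=O(1/t)$.

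The paper avoids a thin separator altogether. It uses a chain of $K_{2,m}$'s with $2t+3$ layers, so $n=\Theta(mt)$ and every $t$-ball covers about half the layers. The endpoints get inputs of $m^2$ bits, and the $1$-PLS splits them into $m$ chunks of $m$ bits across each even layer. The lower bound reduces from nondeterministic $\mathrm{EQ}_{m^2}$, using the entire labeling ($O(pmt)$ bits) as the witness: Alice checks layers $1,\dots,t+1$ and Bob the rest. This gives $pmt=\Omega(m^2)$. There the input length is scaled up to match the total label capacity of the graph, instead of making the capacity of a cut small.
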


\paragraph*{Additional related works:} The PLS model (i.e. $1$-PLS model) has been extensively studied in a broad range of topics, such as for various graph properties \cite{KK07,KKP10,GS16,CPP20}, for properties expressible in some logic frameworks \cite{FBP22,FMRT24,CKM25}, and for memberships in graph classes \cite{FFMRRT21,BFP24}. Given the very rich literature on PLS, we refer to the survey on local certification and the PLS model of Feuilloley \cite{F21}, as well as the surveys on related topics of Feuilloley and Fraigniaud \cite{FF16}, and of Suomela \cite{S13}.

\subsection{Technical Overview}
We begin by presenting a key framework for obtaining tradeoff theorems, used both in prior works and in our work, which shows how the existence of certain graph decompositions directly yields tradeoff theorems. We then introduce a new type of decomposition, called \emph{Two-Separated partitions} (TS partitions), which extends the prior decompositions while implying similar tradeoff theorems.

Following this, we show three constructions of TS partitions in graphs: \begin{inparaenum}[(a)]
    \item a warmup construction which yields a weaker tradeoff theorem for general graphs (cf. \Cref{thm:general_main}), but its ideas are used throughout the paper, \item constructing TS partitions from padded decompositions (introduced later), which implies our results for minor free graphs (\Cref{thm:main_minor_free}), and
    \item a construction which yields our tradeoff theorem for general graphs (\Cref{thm:general_main}).
\end{inparaenum} Finally, we discuss the techniques for refuting the stronger variant of the Tradeoff Conjecture (\Cref{thm:disproof_strong}).

\paragraph*{Tradeoff from partition-based decompositions:}
To obtain Theorems~\ref{thm:general_main} and \ref{thm:main_minor_free}, we use a variation on a partition-based framework previously used in \cite{FFHPP21,FOS21}. Before introducing our variation, we first informally describe this framework as it appears in prior works. 

The key combinatorial object needed for this framework is the existence of a decomposition of the graph into connected clusters $C_1, \dots, C_k$, each of diameter at most 
$\alpha\cdot t$ (for some small enough constant $\alpha\le 1$), such that for the set of boundary vertices\footnote{The boundary $X_i$ of a cluster $C_i$ consists of all the vertices in $C_i$ with neighbors out of $C_i$. Formally, $X_i = C_i \cap B_1(V \setminus C_i)$, where $B_1(S)$ is the ball of radius 1 around $S$.} $X_i$ of $C_i$ we have that $\forall_i |X_i|/|C_i| \leq \varepsilon$ for some small $\varepsilon = \varepsilon(n,t)$.
Given such a decomposition, we describe how to construct a $t$-PLS $\Pi_t$ from a $1$-PLS $\Pi_1$, while paying only roughly an $\varepsilon$-fraction of the cost of $\Pi_1$. Specifically, if $\Pi_1$ costs $p$, we need to construct a labeling for every graph $G$ satisfying $P$, such that the maximum label size is roughly $O(\lceil \varepsilon  p \rceil)$, and describe a verifier function that accepts this labeling, but rejects all labelings on graphs not satisfying $P$.

\vspace{2mm}
\noindent \emph{Label construction for $\Pi_t$:} Recall that $G$ satisfies $P$ if and only if there is some labeling $\ell_V$ for $\Pi_1$, such that all nodes accept according to $\Pi_1$. We describe a $t$-PLS $\Pi_t$ that certifies $P$ by proving to the nodes that such a labeling $\ell_V$ exists. Assume that a $1$-PLS $\ell_V$ exists with cost $p$. We create the $t$-PLS labels of the vertices in  cluster $C_i$ by taking the $1$-PLS labels of its boundary vertices $\{\ell_V(x)\}_{x\in X_i}$, and splitting them among all the cluster vertices (say lexicographically). Additionally, each vertex $v\in C_i$ will get an \emph{auxiliary label} containing additional information so that $v$ can identify its cluster $C_i$.
We denote the maximum possible size of the auxiliary label by $c$, and call it \emph{description cost}. The resulting label size of $C_i$ vertices will be $O\left(\left\lceil \frac{|X_{i}|}{|C_{i}|}\cdot p\right\rceil +c\right)=O\left(\left\lceil \eps p\right\rceil+c \right)$.
Thus the cost of $\Pi_t$ is $O\left(\left\lceil \eps p\right\rceil+c \right)$.

\vspace{2mm}
\noindent \emph{Verification under $\Pi_t$:} 
For each cluster $C_i$, a leader $v_i\in C_i$ is designated.
Given the labels of $C_i$, $v_i$ can reconstruct the labels $\ell_{V}(x)$ for all boundary vertices $x \in X_i$.
Further, let $\bar{C}_i$ be the collection of clusters with edges also incident to $C_i$ (i.e. neighboring clusters, excluding $C_i$), and by $\bar{X}_i=\cup_{C_j\in \bar{C}_i}X_j$ the union of their boundary vertices.
Note that for each cluster $C_j\in\bar{C}_i$, as all $C_j$ vertices are at distance $\le 2\alpha\cdot t$ from $v_i$, $v_i$ can reconstruct the cluster $C_j$, and the $\Pi_1$-labels of its boundary vertices $X_j$. 
In other words, $v_i$ can compute $\{\ell_V(x)\}_{x\in\bar{X}_i}$.
We observe the following:
\begin{itemize}
    \item For every $v\in C_i\setminus X_i$, all its neighbors are in $C_i$.
    \item For every $x\in  X_i$, all its neighbors are in $C_i\cup\bar{X}_i$.
\end{itemize}
All $C_i$ vertices other than the leader $v_i$ will accept regardless of the input.
$v_i$ will accept iff there are labels $\{\ell_V(v)\}_{v\in C_i\setminus X_i}$ for the non-boundary vertices so that all $C_i$ vertices (including $X_i$) will be accepted by the $\Pi_1$-verifier simultaneously (with accordance to the given labels of $\bar{X}_i\cup X_i$).
Showing completeness and soundness is now straightforward. 
If $(G,I)$ satisfies $P$, and a $\Pi_1$ labels indeed exists, each leader $v_i$ can use them to convince itself that there are satisfying labels and thus will accept.
For soundness, one can use the fact that there are no edges between yet undetermined (i.e. non-boundary) vertices in different clusters, and thus if each leader $v_i$ accepts separately, the combined labels they computed will satisfy the $\Pi_1$-verifier, and thus by the soundness of $\Pi_1$, $(G,I)$ satisfies $P$.

\vspace{2mm}
\noindent \emph{Challenge:} 
As it turns out, constructing decompositions such that the fraction of boundary vertices is small in all the clusters simultaneously is quite challenging. In the following, we describe how we can relax the requirements from the decomposition to get considerably better results.

\paragraph*{New combinatorial notions - Two-Separated partitions and cluster-degeneracy:}

In the previous part, we discussed how the existence of a decomposition into connected clusters $C_1,\dots,C_k$, with diameter $O(t)$ and boundary-to-size ratio of $|X_i|/|C_i| \leq \varepsilon$ directly implies a tradeoff theorem in which the $t$-PLS costs an $\varepsilon$-fraction of the original $1$-PLS. As our first contribution, we present the notion of a Two-Separated partition (TS partition), which generalizes the decomposition above while still implying the same tradeoff guarantees.

A $(t,\varepsilon)$-\emph{Two-Separated partition} is a pair $(\cC,X)$ where  $X \subseteq V$ is called the \emph{separating set}, and $\cC$ is a partition of $V$, satisfying
\begin{inparaenum}[(a)]
    \item for $i \ne j$, any path between two nodes $u \in C_i \setminus X$ and $v \in C_j \setminus X$ must contain two consecutive vertices in $X$, 
    \item every cluster in $\cC$ has weak diameter at most $t$,\footnote{The weak diameter of a cluster $C$ is $\max_{u,v \in C} \dist_G(u,v)$ the maximum pairwise distance w.r.t. the original distances. In particular, a cluster may be disconnected.} and
    \item $\max_i |C_i \cap X|/|C_i| \le\varepsilon$. (See \Cref{fig:TSandClusterDeg}(a) for an example)
\end{inparaenum}

Conceptually, this new decomposition allows the set $X_i = C_i \cap X$ to differ from the boundary of the cluster $C_i$. This allows, for example, for a cluster to be disconnected, as long as it has a low weak-diameter, and also allows a more flexible manner to balance how much ``boundary information'' is encoded on each cluster. Using essentially the same technique as the partition-based schemes, we show that the existence of a $(t,\varepsilon)$-TS partition implies that a $1$-PLS with cost $p$ can be turned into an ``equivalent'' $O(t)$-PLS with cost $O(\lceil \varepsilon p\rceil+c)$ for some description cost $c$.

Our main method in the paper of constructing $(t,\varepsilon)$-TS partitions is through the notion of $\varepsilon$-cluster-degeneracy, introduced next. A partition of $V$ into an ordered set of clusters $\cC = \{C_1,\dots,C_k\}$ has $\varepsilon$-cluster-degeneracy if  
\[\forall i\in[k],\quad
\left|\left\{ v\in C_{i}\mid \dist_{G}\left(v,V\setminus\cup_{j=1}^{i}C_{j}\right)\le2\right\} \right|\le\varepsilon\cdot|C_{i}|.
\]
See \Cref{fig:TSandClusterDeg}(b) for a pictorial example. We show that a decomposition into clusters with $\varepsilon$-cluster-degeneracy and weak diameter at most $t$ implies the existence of a $(t,\varepsilon)$-TS partition (\Cref{lem:ds_implies_ts}). As stated above, the existence of these decompositions directly implies a tradeoff theorem, 
where the cost is $O(\lceil\eps p\rceil)$ $+$ the description cost of the TS partition.
Thus our primary effort is aimed at showing the existence of TS partitions with small description cost. In the following parts we present three constructions of TS partitions, for both general and minor free graphs.

\begin{figure}[ht]
  \centering
  \begin{minipage}{0.5\textwidth}
    \centering     \includegraphics[width=\linewidth]{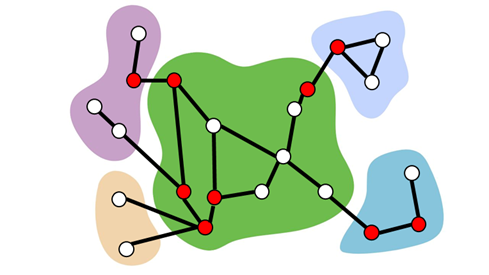}
  \end{minipage}\hfill
  \begin{minipage}{0.50\textwidth}
    \centering
     \includegraphics[width=\linewidth]{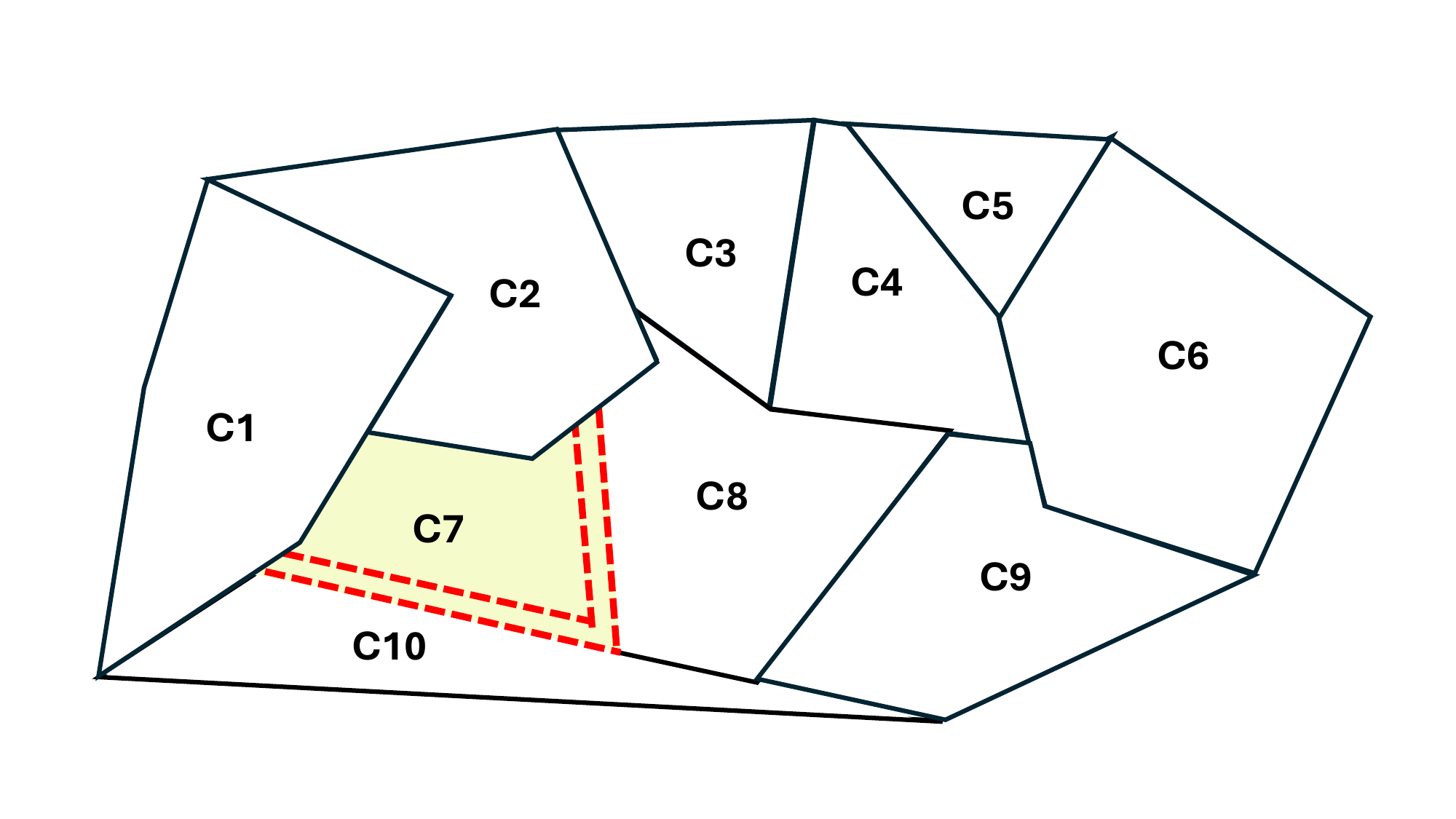}   
  \end{minipage}\hfill
  \caption{(a) Illustration of a $(5,2/3)$-TS partition. Nodes of $X$ marked in red. All clusters have weak diameter $\leq 5$, where some clusters are internally disconnected. Worst red nodes to size ratio is in the bottom-right cluster, with $\varepsilon = 2/3$. All white-to-white paths between clusters must pass two consecutive red nodes.  (b) Illustration of cluster-degeneracy. In the example, we highlight for $C_7$ its $2$-boundary with the clusters that follow it. We remark that in general, clusters can be disconnected.}
  \label{fig:TSandClusterDeg}
\end{figure}

\paragraph*{Warmup - a simple solution for general graphs:} As a warmup, we consider a simple ball carving process that outputs for any graph a decomposition into clusters with $(1/t)$-cluster-degeneracy and weak diameter $O(t\log{n})$. That is, we show that $V$ can be partitioned into clusters $C_1,\dots,C_k$, each of weak diameter $O(t\log{n})$, such that 
\[\forall i \in [k],\quad
\left|\left\{ v\in C_{i}\mid \dist_{G}\left(v,V\setminus\cup_{j=1}^{i}C_{j}\right)\le2\right\} \right|\le |C_{i}|/t.
\]

The process works in iterations. For iteration $i \geq 1$, assume we are given some set of alive nodes $\alive_i \neq \emptyset$, initialized as $\alive_1 = V$. We take an arbitrary node $v_i \in \alive_i$, and consider balls of growing radius centered at $v_i$. We say that a ball $B_j(v_i)$ of radius $j \geq 2$ is \emph{expanding} if 
$|B_{j}(v_i)\cap \alive_i|/|B_{j-2}(v_i)\cap \alive_i| > 1+1/t$. We note that there exists a non-expanding ball centered at $v_i$ of radius $O(t\log{n})$. Otherwise, due to the many expanding layers, we have $|B_{O(t\log{n})}(v_i)| \geq (1+1/t)^{\Theta(t\log{n})}  > n$.\footnote{We remark that if $B_j(v_i)$ contains all alive nodes, then $B_{j+2}(v_i)$ is a non-expanding ball.} If we find a non-expanding ball $B_j(v_i)$, we define the cluster $C_i = B_j(v_i)$ and set $\alive_{i+1} = \alive_i \setminus C_i$, terminating when $\alive_{i+1} = \emptyset$.

Using simple calculations, it is easy to see that the resulting clusters are $(1/t)$-cluster-degenerate: Each cluster is a non-expanding ball, whose cluster's size is $|B_r(v_i) \cap \alive_i|$, and its $2$-hop boundary with subsequent clusters is $|(B_{r}(v_i) \setminus B_{r-2}(v_i)) \cap \alive_i|$. Therefore by definition of a non-expanding radius, we have
\[|(B_{r}(v_i) \setminus B_{r-2}(v_i)) \cap \alive_i| = |B_{r}(v_i) \cap \alive_i| - |B_{r-2}(v_i) \cap \alive_i| \leq (1/t) \cdot|B_r(v_i) \cap \alive_i| = |C_i|/t~.\]

We conclude that a $(O(t\log{n}),1/t)$-TS partition exists in any graph. Furthermore, using standard arguments, one can show that any TS partition can be described with a cost of $O(\log{n})$ bits (See \Cref{lem:tsminorcompute}). This immediately implies that any $1$-PLS for a property $P$ with cost $p$ can be transformed to an $O(t\log{n})$-PLS for $P$ with cost $O(\lceil p/t \rceil + \log{n})$. 

However, many properties of interest (e.g. computing a spanning tree \cite{KKP10,GS16}, MST \cite{KK07}, or certifying planarity \cite{FFMRRT21}) have a $1$-PLS with logarithmic or close to logarithmic cost. Ideally we would like a $t$-PLS for any property, that converges to constant cost as $t$ grows. We could obtain this if we could reduce the additive factor in the tradeoff to be bounded by a constant. Therefore, a natural question to ask is whether there exists a TS partition with the parameters above, but with constant description cost? Another natural question is whether we can remove the multiplicative dependency of the tradeoff in $n$, at least for some large graph class? We positively answer the two questions, as discussed next.

\paragraph*{$K_r$-Minor free graphs and padded decompositions:} In \Cref{sec:minor_free_graphs}, we show that for the class of $K_r$-minor free graphs (for constant $r \geq 1$), any $1$-PLS for a property $P$ with cost $p$ can be transformed into a $t$-PLS for $P$ with cost $O(\lceil p/t \rceil + \log{n})$, which is optimal up to a single additive logarithmic factor.\footnote{See e.g. \Cref{thm:disproof_strong} for an $\Omega(1/t)$ lower bound on the tradeoff of some problems, though there are many such bounds in the literature.} The main tool we use is the existence of good ``padded decompositions'' in these graphs.

See \Cref{def:PadDecompostion} for a formal definition of padded decomposition. 
Here it will suffice to note that if a graph has padding parameter $\beta > 0$, then for any $t = \Omega(1)$, there is a \emph{randomized} partition of $V$ into clusters, such that \begin{inparaenum}[(a)]
    \item each cluster has weak diameter at most $t$, and
    \item for every node $v \in V$, 
$\Pr(B_2(v) \nsubseteq C_v) \leq 2\beta/t$,
where $C_v$ is the unique cluster containing $v$, and $B_2(v)$ is the ball of radius $2$ centered at $v$.\footnote{Recall that the partition is randomized, meaning the randomness is taken over its distribution.}
\end{inparaenum} The padding parameter $\beta$ of different graphs is a widely studied topic that has many algorithmic applications (See \cite{CF25} and the references therein). It is known that for $K_r$-minor free graphs $\beta = \Theta(\log{r})$ \cite{CF25} ($O(1)$ if $r$ is constant), while in general graphs $\beta = \Theta(\log{n})$ \cite{Bar96}. 

As we next show, the existence of padded decompositions with low padding parameter $\beta$ in a graph $G$ implies the existence of a good TS partition in $G$. We start by showing that graphs with padding parameter $\beta$ have some cluster $C \subseteq V$ with weak diameter at most $t$, and $|C \cap B_2(V\setminus C)| \leq (2\beta/t)|C|$. Let us sample a random padded decomposition with padding parameter $\beta$ and weak diameter $t$, i.e. a randomized partition into clusters $\{C_1,\dots,C_k\}$, such that the weak diameter of the clusters is at most $t$, and that property (b) holds. Let $\Gamma_2$ be the $2$-boundary of the clusters, i.e. $\Gamma_2 = \{v \mid  \dist_G(v, V\setminus C_v) \leq 2\}$. We notice that property (b) of the decomposition is equivalent to:
\[\Pr(v \in \Gamma_2) = \Pr(B_2(v) \nsubseteq C_v) \leq 2\beta/t.\]

Therefore, by linearity of expectation, $\mathbb{E}(|\Gamma_2|) \leq (2\beta/t)\cdot n$. We notice that 
\[\min_{i \in [k]} |C_i \cap \Gamma_2|/ |C_i| \leq \left(\sum_{i=1}^k |C_i \cap \Gamma_2|\right)/\left(\sum_{i=1}^k |C_i|\right) = |\Gamma_2|/n,\]
where the first inequality follows from the mediant inequality.\footnote{The mediant inequality states that if $a_1,b_1,\dots,a_\ell,b_\ell >0$ are real numbers, then $\min_{i \in [\ell]}\frac{a_i}{b_i} \leq \frac{\sum_{i=1}^\ell a_i}{\sum_{i=1}^\ell b_i}$.} In expectation, we have  \[\mathbb{E}(\min_{i \in [k]} |C_i \cap \Gamma_2|/ |C_i|) \leq \mathbb{E}(|\Gamma_2|)/n \leq 2\beta/t.\] 
We conclude that by the probabilistic method, there exists a cluster $C$ such that $|C \cap B_2(V \setminus C)| \leq (2\beta/t) |C|$. Moreover, using the same argument one can show that for any set $\alive \neq \emptyset$, we can find a cluster $C  \subseteq \alive$ with weak diameter at most $t$ and $|C \cap B_2(\alive \setminus C)| \leq (2\beta/t) |C \cap \alive|$.

Given the existence of such a cluster $C$, we can perform a similar ball carving process as in the warmup. We define a process with iterations where in iteration $i \geq 1$ we are given some set of alive nodes $\alive_i \neq \emptyset$, initialized as $\alive_1 = V$. Using the padded decomposition, we argue that there exists some cluster $C_i$ such that $|C_i \cap B_2(\alive \setminus C_i)| \leq (2\beta/t)|C_i|$. We add $C_i$ to our partition, and set $\alive_{i+1} = \alive_i \setminus C_i$, terminating if $\alive_{i+1} = \emptyset$. The resulting partition can be shown to be a partition into clusters with weak diameter $t$, and $(2\beta/t)$-cluster-degeneracy, which for $K_r$-minor free graphs translates to $O(1/t)$-cluster-degeneracy, and implies \Cref{thm:main_minor_free}.

Interestingly, by using the optimal padded decomposition of general graphs \cite{Bar96}, we also obtain an alternative method to the same result as in the warmup - that a $1$-PLS for a property $P$ with cost $p$ can be transformed to an $O(t\log{n})$-PLS for $P$ with cost $O(\lceil p/t \rceil + \log{n})$.

\paragraph*{General Graphs, with Constant Description Cost:} In \Cref{sec:general_graphs},  we show that for the class of all graphs, any $1$-PLS for a property $P$ with cost $p$ can be transformed into an $O(t\log{n})$-PLS with cost $O(\lceil p/t \rceil)$. This cost's additive factor is bounded by a constant, hence this scheme converges to a constant number of bits as $t$ grows. Our goal is to adapt the construction of the warmup to obtain an $(O(t\log{n}),1/t)$-TS partition that can be locally described using $O(1)$ bits per node.

\vspace{2mm}
\noindent\emph{Attempt 1 - Ball carving with advice:} We design a compact description for the warmup construction. Recall that in this process we took in each step some living node $v$, carved a ball $B_r(v)$ around it for some $r$, added it to our clusters, and removed $B_r(v)$ from the set of living nodes. Let us further assume that we iterate over the vertices in increasing order of id, i.e. in each step we take the node with the smallest identifier that is still alive.

We describe a label of size $O(\log{t}+\log\log{n})$ which allows nodes to recognize their cluster, given their labeled $O(t\log{n})$-neighborhood: we mark for each node $v$ if a ball $B_{r(v)}(v)$ was taken as a cluster, and if so, its radius $r(v)$. We note that a node $u$ can locally simulate the process, and deduce its cluster, by finding the node $v$ with the smallest id in its $O(t\log{n})$ neighborhood such that $u \in B_{r(v)}(v)$, and assigning itself to a cluster with identifier $\ID(v)$. Similarly, node $u$ can repeat this process for each node in its surrounding, deducing the entire cluster. Given an additional label for marking nodes in the separating set $X$, a node can verify that all the TS partition properties hold. 

The only real cost of this scheme is describing the radius for each node. One natural approach to reduce the cost is to encode this information across each cluster, however there is no guarantee the clusters are sufficiently large to ensure it can be stored by constant size labels. We instead take a different approach: We observe that sharing some common string $S$ across the graph is much cheaper than assigning each node $v$ some distinct string $S_v$ of its own (See \cite{O17,FFHPP21,FOS21}). Therefore, instead of describing the radii explicitly, we would like some shared string from which all nodes could deduce their encoded radius, as well as the radii of nodes in their $O(t\log{n})$-neighborhood.

As our key observation we show that picking a \emph{random radius} for each node is good, assuming additional alterations to the process. Therefore, assuming all nodes have access to a long shared randomness string, they can deduce the radius of every node. We then condense this string to $O(\log{n})$ bits using randomness reduction techniques, and have the prover share a good random string for our graph $G$ across all nodes. We describe the new process, and explain how to identify clusters.

\vspace{2mm}
\noindent\emph{Solution - Ball carving with randomized radius:}
 Assume we have a given random string $R$. The new process runs in iterations $i = 1,\dots,n$, where we iterate over the vertices in lexicographic order while maintaining the set of currently living (unclustered) nodes $\alive_i$, where we iterate over $v_i$ even if $v_i \notin \alive_i$. In step $i$, we check whether $v_i$ has a living node within distance $2t\log{n}$. If it does, it samples a random even radius $r_i\in [2t\log{n}+2,8t\log{n}]$ using $R$, and considers the candidate cluster $C_i=B_{r_i}(v_i)\cap\alive_i$ (which does not contain $v_i$ if $v_i \notin \alive_i$). As before, we take the cluster $C_i = B_{r_i}(v_i) \cap \alive_i$ if it does not expand, i.e. if
\[|B_{r_i}(v_i) \cap \alive_i|/|B_{r_i-2}(v_i) \cap \alive_i| \leq 1+1/t.\]
then $C_i$ is carved out of $\alive_i$ and its 2-boundary with subsequent clusters is added to $X$. The output is the carved clusters $\cC$ together with the marked boundary nodes $X$, which we show (with high probability) form an $(O(t\log n),1/t)$-TS partition. Intuitively, we show that in any fixed step $i$, the cluster $C_i$ is non-expanding with constant probability over the choice of $r_i$. Moreover, because we iterate over all nodes (even those already clustered), and since every node $u$ is contained in $B_{2t\log{n}}(v_i)$ for $\Omega(t\log n)$ different nodes $v_i$, we get that all nodes are clustered with high probability.

Unlike the prior attempt, all a node $u$ needs to know to identify its cluster is (a) the randomness string, (b) for every node $v$ in its vicinity, whether its cluster was taken, and (c) whether $v$ is in $X$. All this information can be shared with $v$ using $O(1)$ bits, using some randomness reduction technique, and a dedicated PLS.

\paragraph*{Refutation of the stronger variant of the Tradeoff Conjecture:} In \Cref{sec:lower}, we show that having $\Theta(n)$-sized $t$-neighborhoods is insufficient for obtaining a tradeoff better than $O(\lceil p/t \rceil)$, even for the case of planar graphs, refuting the stronger variant of the Tradeoff Conjecture as discussed above. We do so by defining our predicate $P$ as an equality problem between distant nodes, and reduce it to 2-player non-deterministic equality.   

Our construction is very simple. We design a layered graph with $2t+3$ layers, where odd layers have one node, and even layers have $m$ nodes, with the nodes of each layer connected to all nodes of the neighboring layers (See \Cref{fig:lowerbound}). The leftmost and rightmost nodes, $v^1$ and $v^{2t+3}$, receive inputs $I(v^1)$ and $I(v^{2t+3})$ from $\{0,1\}^{m^2}$, and we define the predicate $P$ to assert that $I(v^1) = I(v^{2t+3})$. We design a $1$-PLS where even layers relay and split $I(v^1)$ into $m$ parts, each part given to some node in the layer, which requires a label size $O(m)$ bits per node. We show that a $t$-PLS on $P$ with cost $p$ implies a $2$-player non-deterministic equality protocol of cost $O(p \cdot m \cdot t)$. In combination with the fact that $m^2$-bit equality costs $\Omega(m^2)$ in the 2-player model, we obtain a lower bound of $\Omega(m/t)$ on the cost of any $t$-PLS for $P$.

We remark that one can easily transform $P$ into a pure graph property (i.e. without any auxiliary input) by encoding $I(v^1),I(v^{2t+3})$ through two low-diameter subgraphs with $\Theta(m^2)$ edges connected to $v^1,v^{2t+3}$ respectively. However the graph then is not necessarily planar. For planar graphs, a similar graph property with $1$-PLS of cost $O(\log{m})$, versus $t$-PLS of cost $\Omega(\log{m}/t)$ can be easily shown by encoding two low-diameter graphs with $\Theta(m)$ edges next to each of the nodes $v^1,v^{2t+3}$.

\begin{figure}[ht]
  \centering  \includegraphics[width=0.7\textwidth]{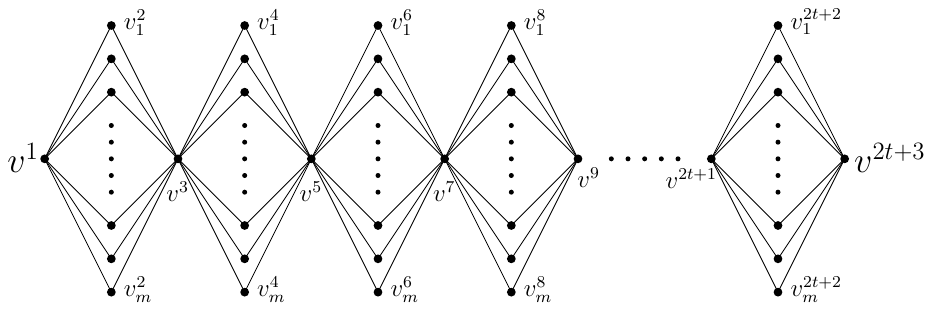}
    \caption{The graph $G_{t,m}$. It has $2t+3$ layers, and in each even layer has $m$ nodes. The leftmost and rightmost nodes of the graph $v^1,v^{2t+3}$ receive as input some strings $I(v^1),I(v^{2t+3})$, and the configuration is in $P$ if $I(v^1) = I(v^{2t+3})$.}
  \label{fig:lowerbound}
\end{figure}

\section{Preliminaries}
\label{sec:prelims}
We denote $[k] = \{1,\dots,k\}$, and $[k_1,k_2] = \{k_1,\dots,k_2\}$. Given functions $f_1,f_2$ with disjoint domains, the combined function is defined as $f = f_1 \cup f_2$, where $f(x) = f_1(x)$ if $x \in \operatorname{Dom}(f_1)$, and
$f(x) = f_2(x)$ if $x \in \operatorname{Dom}(f_2)$.

\paragraph*{Graph notation:}
For a graph $G = (V,E)$, and a pair of vertices $u,v\in V$, $\dist_G(u,v)$ denotes the shortest path metric in $G$ (that is the minimum number of edges in a $u$-$v$ path). For $V_1,V_2 \subseteq V$, let $E(V_1,V_2) = \{\{v_i,v_j\} \in E \mid v_i \in V_1,v_j \in V_2\}$ be the edges between $V_1,V_2$. $G[C] = (C,E(C,C))$ denotes the induced graph on $C$, i.e. it contains the edge set $\{\{u,v\} \in E \mid u,v \in C\}$. 
The (strong) diameter of a cluster  $C \subseteq V$ is the maximum pairwise distance in the induced graph: $\max_{u,v \in C} \dist_{G[C]}(u,v)$, while the weak diameter is the maximum pairwise distance w.r.t. original distance: $\weakdiam(C) = \max_{u,v \in C} \dist_G(u,v)$. We note that $C$ can have a finite weak diameter while not being connected. 
A partition $\cC$ of $V$ is weakly/strongly $\Lambda$-bounded if every cluster has weak/strong diameter at most $\Lambda$.
Let $N(v) \subseteq V$ denote the neighborhood of $v$. For an integer $t \geq 1$, let $B_t(v)=\{u\in V\mid \dist_G(u,v)\le t\}\subseteq V$ be the $t$-hop neighborhood of $v$ in $G$, also called the ball of radius $t$ around $v$. For a set $V' \subseteq V$, let $B_t(V') = \bigcup_{v \in V'} B_t(v)$.

A graph $H$ is said to be a \emph{minor} of $G$ if $H$ can be obtained from $G$ by a series of vertex deletions, edge deletions, and edge contractions.\footnote{Given a graph $G=(V,E)$ and an edge $e=\{u,v\}\in E$, by contracting $e$ we obtain a new graph $G'$ where $u,v$ are contracted into a new single vertex $w$, such that $w$ is adjacent to $(N(u) \cup N(v))\setminus\{u,v\}$.}
The family of $H$-minor free graphs consists of all graphs $G$ that do not contain $H$ as a minor (i.e., no series of operations leads from $G$ to $H$).

\paragraph*{String notation:}
Given a collection of strings $S_1,\dots,S_\ell$, let $(S_1,\dots,S_\ell)$ denote a single string obtained by concatenating $S_1,\dots,S_\ell$ using a fixed delimiter scheme, so that from $(S_1,\dots,S_\ell)$ one can uniquely recover the number of strings $\ell$ and each of the strings $S_1,\dots,S_\ell$. With some abuse of notation, we use $(S_1,\dots,S_\ell)$ to denote both the tuple of strings and its encoded representation as a single string; the intended meaning will always be clear from context. For an integer $p$, let $\{0,1\}^{\le p}$ denote the set of all binary strings of length at most $p$. For an integer $m$, let $\Pad(S,m)$ denote an encoding of $S$ as a string whose length is the smallest multiple of $m$ that is larger than $|S|$, such that $S$ can be uniquely recovered from $\Pad(S,m)$. We remark that both the delimiter scheme and the padding scheme can be trivially implemented by implementing a larger alphabet over the binary encoding.

\paragraph*{Proof Labeling Schemes:}
For a graph family $\cG$ and set of strings $\cI$, a $(\cG,\cI)$-\emph{configuration} is a pair $(G,I)$ where $G \in \cG$ is a graph, and $I:V \rightarrow \cI$ is an assignment of inputs to the nodes of $G$. A typical example of additional inputs $I$ that can be given to nodes is weights on incident edges, such as in the MST problem \cite{KK07}, or shortest path problems \cite{FFHPP21}.

Let $(\cG,\cI)$ be the family of all $(\cG,\cI)$-configurations, for some graph family $\cG$ and set of strings $\cI$. For a predicate $P:\cG \times \cI \rightarrow \{0,1\}$, we say that $(G,I) \in P$ if $P(G,I) = 1$.  

\begin{definition}
\label{def:labeling_scheme}
    Let $(\cG,\cI)$ be a configuration family. Then for $t \geq 1$, a $t$-hop proof labeling scheme ($t$-PLS) for a predicate $P: \cG \times \cI \rightarrow \{0,1\}$ is a pair $\Pi = (\Prov, \Ver)$ satisfying
\begin{itemize}
	\item $\Prov$, also called the honest prover, is a mapping taking as input a configuration $(G,I) \in P$, and producing a label assignment $\Prov(G,I) = ( \ell(v))_{v \in V}$ for all nodes of $G$,
		where $\ell(v) \in \{0,1\}^{*}$ for each $v \in V$.
	\item $\Ver$, also called the verifier, is a function receiving as input the size of the graph $|V(G)| = n$, and all the information in the $t$-hop neighborhood of $v$, including the $t$-hop topology of $v$ (with the corresponding identifiers), and for each node $u \in B_t(v)$, its additional inputs $I(u)$ and label assignments $\ell(u)$, and outputs a value in $\{\accept,\reject\}$ (i.e. \emph{accepts} or \emph{rejects} respectively).\footnote{Equivalently, $\Ver$ can be seen as a $t$-round deterministic distributed algorithm with unlimited bandwidth that collects all information in the $t$-neighborhood of $v$, and outputs a value in $\{\accept,\reject\}$, and additional output.} Let $\Ver(G,I,\ell,v)$ denote the output of the function at node $v \in V$, in configuration $(G,I)$, and with labels $\ell = (\ell(v))_{v \in V}$. If $\Ver$ accepts, we also allow it to output additional auxiliary values.
\end{itemize}

Additionally, the following holds:
\begin{itemize} 
	\item Completeness: for every configuration $(G, I) \in P$
		and for every node $v \in V$, we have that $\Ver(G, I, \Prov(G,I),v) = \accept$.
    \item Soundness: for every configuration $(G, I) \in (\cG,\cI)$
		and certificate assignment $\ell = ( \ell(v))_{v \in V}$,
		if $\Ver(G,I,\ell,v) = \accept$ for all $v \in V$,
		then $(G,I) \in P$.
\end{itemize}
\end{definition}

We denote the label given to a node $v$ by $\Prov(G,I)$ as $\Prov(G,I,v)$. The \emph{cost} of a $t$-PLS $(\Prov, \Ver)$ on a configuration $(G,I) \in P$ is defined as $\mathrm{cost}(\Prov,G,I) = \max_{v \in V}|\Prov(G,I,v)|$, i.e. the length of the longest label given to a node by $\Prov(G,I)$. The cost of a $t$-PLS $(\Prov, \Ver)$ on the configuration family $(\cG,\cI)$ is defined by    
\[\mathrm{cost}(\Prov, P , n) = \max{\{\mathrm{cost}(\Prov,G, I) \mid (G, I) \in P , |V(G)| = n}\}.\]


\paragraph*{Verifier knowledge and additional assumptions:} Throughout the paper, we assume that nodes have unique identifiers encoded using $O(\log{n})$ bits. We further assume that $\Ver$ is given access to the size of the graph, i.e., $|V(G)| = n$, and that the graph is connected. These latter two assumptions are only used in \Cref{sec:app_general}, and we believe they can be removed given some additional technical details, as briefly discussed at the end of that section.

\section{Two-Separated Partitions and Cluster-Degeneracy}
\label{sec:TSpartition}
In this section, we present two key notions: Two-Separated partitions (TS partitions) and $\varepsilon$-cluster-degeneracy, and relate these notions to obtaining tradeoff theorems. We start by introducing TS partitions, which are our key combinatorial objects whose existence implies the tradeoff theorems.

\begin{definition}[TS partition]\label{def:ts_partition}
A $(t,\varepsilon)$-\emph{Two-Separated partition} (TS partition) of a graph $G=(V,E)$ is a pair $(\cC,X)$ where  $X \subseteq V$, and $\cC$ is a partition of $V$, such that:
\begin{itemize}
    \item \emph{Two separation.} $\forall C,C'\in \cC$, every path $P=(v_1,\dots,v_\ell)$ from a vertex in $C\setminus X$ to a vertex in $C'\setminus X$ has two consecutive vertices in $X$. That is, for some $i \in [\ell-1]$, $v_i,v_{i+1} \in X$. 
    \item \emph{Bounded diameter.} $\forall C\in\cC$, $\weakdiam(C) \leq t$. We say that a TS partition is \emph{connected} if every $C \in \cC$ is connected.
    \item The \emph{cost ratio} of $(\cC,X)$, defined as
$R(\cC,X) = \max_i |C_i \cap X|/|C_i|$, holds $R(\cC,X)\le\varepsilon$.
\end{itemize}

Similarly, define for a TS partition $(\cC,X)$ the cost ratio of a cluster $C \in \cC$ as $|C \cap X|/|C|$. 
\end{definition}

Next, we introduce the notion of a decomposition into clusters with $\varepsilon$-cluster-degeneracy, and prove that when combined with low-diameter constraints, the existence of such a decomposition guarantees the existence of a $(t,\varepsilon)$-TS partition. This notion provides a convenient approach for constructing TS partitions in graphs, which we use in both \Cref{sec:minor_free_graphs} and \Cref{sec:general_graphs}.

Given a ordered partition  $\cC = \{C_1,\dots,C_k\}$, a node $v\in C_i$ is called a \emph{close-to-boundary} node if there is a node $u\in \cup_{j>i}C_j$ in a  cluster with a higher index than $i$, at distance at most $2$. We say that $\cC$ has $\varepsilon$-cluster-degeneracy if in every cluster $C_i$, at most $\eps$-fraction of its nodes are close-to-boundary nodes.

\begin{definition}
    A partition of $V$ into clusters $\cC = \{C_1,\dots,C_k\}$ has $\varepsilon$-cluster-degeneracy if  
\[\forall i,\quad
\left|\left\{ v\in C_{i}\mid \dist_{G}\left(v,V\setminus\cup_{j=1}^{i}C_{j}\right)\le2\right\} \right|\le\varepsilon\cdot|C_{i}|.
\]
\end{definition}

\begin{restatable}{lemma}{dsimpliests}
\label{lem:ds_implies_ts}
    Assume a graph $G = (V,E)$ has a partition into clusters $\cC = \{C_1,\dots,C_k\}$ that has $\varepsilon$-cluster-degeneracy and that is weakly $t$-bounded. Then $G$ has a $(t,\varepsilon)$-TS partition. Moreover, if all clusters in $\cC$ are connected, then $G$ has a connected $(t,\varepsilon)$-TS partition. 
\end{restatable}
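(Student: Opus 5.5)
We keep the partition $\cC$ unchanged and take $X$ to be the set of all close-to-boundary nodes:
\[
X=\bigcup_{i=1}^k X_i,\qquad X_i=\Bigl\{v\in C_i \;\Bigm|\; \dist_G\bigl(v,\textstyle\bigcup_{j>i}C_j\bigr)\le 2\Bigr\}.
\]
Two of the three TS conditions hold immediately. The bounded-diameter property is inherited from the assumption that $\cC$ is weakly $t$-bounded. Since $C_i\cap X=X_i$, the cost ratio is $|X_i|/|C_i|\le\varepsilon$ by $\varepsilon$-cluster-degeneracy. Connectivity is also inherited, because the clusters are not modified. So the only real work is the two-separation property.

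\textbf{Two-separation.} Take a path $P=(v_1,\dots,v_\ell)$ with $v_1\in C_a\setminus X$ and $v_\ell\in C_b\setminus X$, where $a\neq b$. For a vertex $v$, write $\mathrm{idx}(v)$ for the index of its cluster.

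\emph{Key observation.} If $v\in C_i\setminus X$, then every vertex within distance $2$ of $v$ lies in $C_1\cup\dots\cup C_i$. Indeed, if some vertex in a cluster $C_j$ with $j>i$ were that close, $v$ would be in $X_i$ by definition. In particular, every neighbor $u$ of $v$ satisfies $\mathrm{idx}(u)\le \mathrm{idx}(v)$.

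\emph{Locating a change in index.} Since $a\neq b$, the index is not constant along $P$, so some edge $(v_s,v_{s+1})$ has $\mathrm{idx}(v_s)\ne\mathrm{idx}(v_{s+1})$. Let $i$ be the larger of the two indices. The endpoint with the smaller index has a neighbor in $C_i$, a higher-indexed cluster, so it lies in $X$.

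\emph{Obtaining two consecutive $X$ vertices.} This is the step I expect to be the main obstacle: a single $X$ vertex at a change in index does not suffice, so I argue by contradiction. Suppose no two consecutive vertices of $P$ lie in $X$. Let $m$ be the maximum index appearing on $P$, and let $v_s$ be a vertex attaining it. By symmetry of the path, assume $\max(a,b)\le m$, and note that $v_s$ cannot equal both endpoints. Walking from $v_s$ toward an endpoint whose index is smaller than $m$ (one exists since $a\ne b$), consider the first edge $(v_q,v_{q+1})$ where the index drops below $m$. Then $v_{q+1}$ has index below $m$ and is adjacent to $C_m$, so $v_{q+1}\in X$. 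By the contradiction hypothesis, its next vertex $v_{q+2}$ is not in $X$, assuming $v_{q+2}$ exists.

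Now $v_{q+2}\notin X$ and $\dist(v_{q+2},v_q)\le 2$ with $v_q\in C_m$. By the key observation this forces $m\le \mathrm{idx}(v_{q+2})$. Since $m$ is the maximum, $\mathrm{idx}(v_{q+2})=m$. So after stepping down from $C_m$ to a single $X$ vertex, the path must return to index $m$. Iterating this, the path can never reach the endpoint with index below $m$, which is a contradiction.

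The remaining case is when $v_{q+1}$ is itself that endpoint. This is impossible, since the endpoints are not in $X$ but $v_{q+1}\in X$. Hence $P$ contains two consecutive vertices of $X$.

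In writing this up, I will handle the bookkeeping of "first descent" carefully. Specifically, the argument should be framed as follows: the set of path positions with index $m$ is closed under skipping over a single $X$ vertex, and therefore covers the whole path up to that endpoint, which contradicts that endpoint having index less than $m$.
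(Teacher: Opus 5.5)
Your proposal is correct. It uses the same separating set as the paper, $X=\bigcup_i \bigl(C_i\cap B_2(V_{>i})\bigr)$, and handles diameter, connectivity and cost ratio in the same immediate way. The two proofs differ only in the two-separation argument.

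\textbf{The paper's argument.} It works by a minimal counterexample anchored at the \emph{lower-index endpoint}. Take a shortest path from $u\in C_i\setminus X$ to $v\in C_j\setminus X$ ($i<j$) with no two consecutive $X$-vertices. Since $u\notin X$, we have $\dist_G(u,V_{>i})>2$. Hence the path has at least three edges, and the next two vertices $w_1,w_2$ lie in $C_1\cup\dots\cup C_i$, so in a different cluster from $v$. If either of them were outside $X$, the subpath from it to $v$ would be a shorter counterexample. So $w_1,w_2\in X$ are consecutive, a contradiction.

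\textbf{Your argument.} You anchor at the \emph{maximum-index} cluster $C_m$ on the path. You apply the key observation at the vertex \emph{after} a descent, looking back at $C_m$. Every exit from $C_m$ passes through a single $X$-vertex and must immediately return to $C_m$. So the walk toward a lower-index endpoint would have to land on it either inside $C_m$ or as that single $X$-vertex, and both are impossible. Formalized as you indicate, this is a valid monotone induction on the position, since each step advances by $1$ or $2$ and never overshoots the endpoint.

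\textbf{Comparison.} Your route avoids choosing an extremal (shortest) path and yields slightly more structure: from a maximum-index vertex onward, such a path stays inside $C_m$ apart from isolated $X$-vertices. The cost is the iteration bookkeeping you already flagged. The paper's minimality trick gets the contradiction in two steps with no iteration.

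\textbf{Minor point.} The phrase ``assume $\max(a,b)\le m$'' is vacuous by the definition of $m$. All you actually need is that some endpoint has index $<m$, which follows from $a\neq b$.
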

    \begin{proof}
    Let $V_{> i} = V \setminus\cup_{j=1}^{i}C_j$, and $V_{> k} = \emptyset$. Setting $X = \cup_{i=1}^{k} \left(C_i \cap B_2(V_{> i})\right)$  we argue that $(\cC,X)$ is a $(t,\varepsilon)$-TS partition, i.e. it fulfills the following three properties:
    \begin{itemize}
        \item \emph{Diameter and connectivity:}  By assumption $\weakdiam(C) \leq t$ for all $C \in \cC$. Additionally, the partition is connected if and only if each $C \in \cC$ is connected.
        \item \emph{Cost ratio:} We note that $C_i \cap X = C_i \cap B_2(V_{> i})$, since $C_i \cap B_2(V_{> i}) \subseteq C_i$, and the clusters are disjoint. Therefore, by assumption we have $ |C_i \cap X|/|C_i| = |C_i \cap B_2(V_{> i})|/|C_i| \leq \varepsilon$.
        
        \item \emph{Two Separation:} Assume for the sake of contradiction that the two separation property fails. Let $P_{u,v}$ be the shortest path between two vertices $u,v\notin X$ that belong to different clusters $\cC$, such that $P_{u,v}$ does not have two consecutive vertices from $X$. 
        
        Denote $P_{u,v}=(w_0=u,w_1,\dots,w_{\ell-1},w_\ell=v)$. Let $u \in C_i, v \in C_j$ where we assume w.l.o.g. that $i < j$. Since $u \notin X$, we have $\dist(u,V_{>i})>2$, and specifically we have $\ell \geq 3$ and $w_1,w_2 \notin C_j$.

        If $w_1 \notin X$, then $P_{w_1,v}=(w_1,\dots,w_\ell=v)$  contradicts the minimality of $P_{u,v}$, thus $w_1\in X$. Similarly, we deduce that $w_2\in X$. Since $w_1,w_2 \in X$ are consecutive on $P_{u,v}$, we obtain a contradiction, and the claim follows.
    \end{itemize}
\end{proof}

In the remainder of the section, we show that the existence of $(t,\varepsilon)$-TS partitions implies a tradeoff between $O(t)$-PLS and $1$-PLS. To set this up, we first define the verification task of outputting a $(t,\varepsilon)$-TS partition, in which we both verify the existence of a $(t,\varepsilon)$-TS partition, and have each node locally output nearby clusters and their nodes in $X$.

\begin{definition}
\label{def:ts_partition_labeling}
    We say that a $t$-PLS $\Pi = (\Prov,\Ver)$ certifies and outputs a $(t,\varepsilon)$-TS partition if $\Pi$ is a $t$-PLS for the existence of a $(t,\varepsilon)$-TS partition in the graph according to \Cref{def:labeling_scheme}, and require an additional guarantee: assuming no node rejects for some labeling $\ell$, $\Ver$ outputs for each node $v$ a set of clusters $\Comp(v)$ and a set $X_C(v) \subseteq V$ for each $C \in \Comp(v)$, such that there exists a $(t,\varepsilon)$-TS partition $(\cC,X)$ of $G$, satisfying the following properties:
    \begin{enumerate}[(1)]
        \item \emph{Local cluster knowledge.} For any node $v \in V$, let $C_v \in \cC$ denote the unique cluster containing $v$. Then we have for any $v \in V$ that $\Comp(v) = \{C_u \mid u \in B_2(C_v)\}$. In other words, each node $v$ knows $C_v$ and any cluster that is of distance at most two from $C_v$.
        \item \emph{Local boundary knowledge.} For any $v \in V$, we have $X_C(v) = X \cap C$ for any $C \in \Comp(v)$.
    \end{enumerate}
    In this case, we refer to $(\cC,X)$ as the TS partition outputted by the labeling $\ell$.
\end{definition}

Finally, we formalize the relationship between the existence of TS partitions and a tradeoff theorem. Specifically, we show that a $(t,\varepsilon)$-partition that can be outputted with cost $c$ implies that any $1$-PLS of cost $p$ can be transformed into an $O(t)$-PLS with cost $O(\lceil \varepsilon \cdot p \rceil + c)$.

\begin{restatable}[TS partition implies tradeoff]{lemma}{tstradeoff}
\label{lem:tstradeoff}
Let $(\mathcal{G},\mathcal{I})$ be a configuration family. Suppose that for some functions $t =t(n) \ge 1$ and $\varepsilon = \varepsilon(n) \in [0,1)$, every graph $G \in \mathcal{G}$ with $|V(G)| = n$ admits a $(t,\varepsilon)$-TS partition, and that such a partition can be certified and outputted by an $O(t)$-PLS with cost $c = c(n)$. Then, for any predicate $P$ on $(\mathcal{G},\mathcal{I})$ and any $1$-PLS for $P$ with cost $p$, there exists an $O(t)$-PLS for $P$ with cost $O(\lceil \varepsilon\cdot p \rceil + c)$.
\end{restatable}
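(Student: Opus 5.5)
The plan is to build $\Pi_t$ by combining the TS-partition scheme (call it $\Pi_{TS}$, cost $c$) with a distributed encoding of the $1$-PLS labels of the separating set. Fix the honest $1$-PLS labeling $\ell_V$ and the TS partition $(\cC,X)$ outputted by the honest labeling of $\Pi_{TS}$. The label of $v\in C$ has two parts: its $\Pi_{TS}$ label, and a share of the string $S_C=(\ell_V(x))_{x\in X\cap C}$, with the $x$'s ordered by identifier. Each $\ell_V(x)$ is padded to exactly $p$ bits (with $O(1)$ overhead for the padding scheme). Then $S_C$ has length $O(|X\cap C|\,p)\le O(\varepsilon|C|p)$. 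We cut it into $|C|$ consecutive chunks of length $\lceil |S_C|/|C|\rceil=O(\lceil\varepsilon p\rceil)$ and give them to the vertices of $C$ in increasing identifier order. The total cost is therefore $O(\lceil \varepsilon p\rceil + c)$.

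\textbf{Verification.} It runs in $O(t)$ rounds. Every node first runs $\Ver_{TS}$ and rejects if it rejects. Otherwise node $v$ learns $\Comp(v)$ and the sets $X_C(v)$, which by \Cref{def:ts_partition_labeling} are the true clusters within distance $2$ of $C_v$ together with their $X$-parts. Each such cluster has weak diameter $\le t$ and lies within distance $2t+2$ of $v$, so within radius $O(t)$ node $v$ sees all of its vertices and their chunks. Hence $v$ can reconstruct $\ell_V(x)$ for every $x\in X\cap C'$, $C'\in\Comp(v)$, rejecting if a decoding fails. Let the designated leader of $C_v$ be its minimum-identifier vertex. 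The leader accepts iff there exist labels $\{\ell(u)\}_{u\in C_v\setminus X}$ such that every $w\in C_v$ accepts under $\Ver_1$, where labels of $X$-vertices are the decoded ones. For this to be well defined, $v$ must check that $N(w)\subseteq C_v\cup X$ for every $w\in C_v$. This follows from two separation: if $w\in C_v$ had a neighbor $y\notin X$ in another cluster, then $w\notin X$ or $y\notin X$, and the edge $wy$ is a path violating two separation or the relevant case. I will argue this more carefully as follows. Every neighbor of $w\in C_v$ lies in a cluster of $\Comp(v)$. A neighbor outside $C_v$ that is not in $X$ has its own label chosen by another leader, so the leader's check must range only over labels of $C_v\setminus X$ together with fixed $X$-labels. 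This works because any edge $wy$ with $w\in C_v\setminus X$, $y\in C'\setminus X$, $C'\neq C_v$, is a $2$-vertex path with no two consecutive $X$-vertices, contradicting two separation. The remaining case is $w\in C_v\cap X$ adjacent to $y\in C'\setminus X$. To handle it, I would let the leader of $C_v$ check $\Ver_1$ only at vertices $w\in C_v\setminus X$, and check each $x\in X$ at the leader of $C_x$ jointly over all neighbors' free labels. This still couples clusters, so instead I plan to use the stronger structure: a vertex $x\in X$ whose neighbors include non-$X$ vertices of other clusters is verified by the leader of $C_x$ only if all its neighbors are in $C_x\cup X$. Otherwise I need that every non-$X$ neighbor lies in a single cluster.

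\textbf{Main obstacle.} The main obstacle is exactly this assignment of each $\Ver_1$ check to a cluster so that all free variables in a check belong to one cluster. My first attempt is the following. For $w\notin X$, all neighbors are in $C_w\cup X$, as shown above. For $x\in X$, the non-$X$ neighbors $y,y'$ of $x$ in different clusters would give a path $y,x,y'$ with only one $X$-vertex, violating two separation. So the non-$X$ neighbors of each $x$ lie in a single cluster $D_x$, and the check at $x$ is assigned to the leader of $D_x$ (or of $C_x$ if there are none). Vertex $x$ is within distance $1$ of $D_x$, so the leader sees it, and since $D_x\in\Comp(x)$, identifying $D_x$ is local.

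\textbf{Completeness and soundness.} Completeness holds by using the honest $\ell_V$. For soundness, if all nodes accept, take the union over clusters of the witness labels found by the leaders, together with the decoded $X$-labels. These are consistent because neighboring clusters decode the same chunks of the same true clusters. Every $\Ver_1$ check then passes, so by the soundness of the $1$-PLS, $(G,I)\in P$.
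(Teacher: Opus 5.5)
Your final scheme is correct and is essentially the paper's proof. Your cluster $D_x$ is exactly the unique cluster $C$ with $x\in H_C=\{v \mid B_1(v)\cap(C\setminus X)\neq\emptyset\}$, which the paper establishes in \Cref{lem:TS_unique_C_neighbor} by the same $y,x,y'$ path argument. As in your plan, the paper's leader of $C$ searches for labels of $C\setminus X$ that make all of $H_C$ accept under $\Ver_1$. One small difference is that vertices with $B_1(v)\subseteq X$ check themselves in the paper rather than being handed to the leader of $C_v$, and your variant is equally fine.

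You should delete your earlier claim that $N(w)\subseteq C_v\cup X$ for every $w\in C_v$, since it is false when $w\in C_v\cap X$. Correspondingly, each leader must check only the vertices assigned to it, not all of $C_v$.
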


Before presenting the proof, we analyze basic properties of TS partitions, and simple lemmas related to certification of $P$. Then, we discuss efficient encoding and retrieval of information from clusters, which is a crucial part of the prover and verifier algorithms. 

\paragraph*{Basic properties and labeling of $(t,\varepsilon)$-TS partitions }
We start with a basic claim, that for every node $v$, there is at most a single cluster $C$ such that $B_1(v)$ ($v$ and its neighborhood) contains non-boundary vertices from $C$: $C\setminus X$.

\begin{lemma}
\label{lem:TS_unique_C_neighbor}
    Let $(\cC,X)$ be a $(t,\varepsilon)$-TS partition. Then for every $v \in V$, there exists a cluster $C \in \cC$ such that $B_1(v) \subseteq C \cup X$. In particular, $C$ is unique unless $B_1(v) \subseteq X$. 
\end{lemma}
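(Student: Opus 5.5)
Fix $v \in V$. If $B_1(v) \subseteq X$, any cluster $C$ works, since $B_1(v) \subseteq X \subseteq C \cup X$. Since $\cC$ partitions $V$, at least one cluster exists, so existence is immediate in this case. Any such $C$ works here, which is why uniqueness is only claimed otherwise.

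Otherwise, some $w \in B_1(v)$ has $w \notin X$. Let $C$ be the unique cluster containing $w$. We claim $B_1(v) \subseteq C \cup X$. Suppose not, and take $u \in B_1(v)$ with $u \notin X$ and $u \notin C$, so $u \in C'$ for some cluster $C' \neq C$. Then $w \in C \setminus X$ and $u \in C' \setminus X$, and there is a short walk between them through $v$.
\begin{itemize}
\item If $u = v$ or $w = v$, the walk is the single edge $(w,u)$. Its only consecutive pair is $w,u$, and neither lies in $X$.
\item If $u \neq v \neq w$, the walk is $(w,v,u)$. Its consecutive pairs are $\{w,v\}$ and $\{v,u\}$, and each contains a vertex not in $X$.
\end{itemize}
The case $u = w$ cannot occur, since $u$ and $w$ lie in different clusters. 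In both remaining cases the walk is a genuine path because its vertices are distinct. It therefore connects $C\setminus X$ to $C'\setminus X$ without two consecutive vertices in $X$, which contradicts the two separation property of \Cref{def:ts_partition}. Hence $B_1(v) \subseteq C \cup X$.

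For uniqueness when $B_1(v) \not\subseteq X$, suppose $B_1(v) \subseteq C'' \cup X$ for some cluster $C''$. The vertex $w \in B_1(v) \setminus X$ must then lie in $C''$. Since $w \in C$ and the clusters are disjoint, $C'' = C$.

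The only delicate point is the degenerate case where $u$ or $w$ equals $v$, which is why the case split above lists the possible walks explicitly. Beyond that the argument is a direct appeal to two separation.
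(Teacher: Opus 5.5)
Your proof is correct and follows essentially the same argument as the paper: two vertices of $B_1(v)\setminus X$ in different clusters would give a path of length at most two through $v$ with no two consecutive vertices in $X$, contradicting two separation, and uniqueness then follows from disjointness of the clusters. Your explicit treatment of the case where $v$ itself is one of the two non-$X$ vertices (giving a single-edge path) is a small improvement, since the paper's proof speaks only of ``two neighbors $u_1,u_2$ of $v$'' and passes over this case.
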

\begin{proof}
    Suppose for the sake of contradiction that there is a vertex $v$ such that $B_1(v)$ is not contained in $C\cup X$ for any $C\in \cC$ (in particular $B_1(v)\nsubseteq X$).
    It follows that there are two neighbors $u_1,u_2$ of $v$, such that $u_1\in C_1\setminus X$ and $u_2\in C_2\setminus X$ for some clusters $C_1\ne C_2\in\cC$.
    But then $(u_1,v,u_2)$ is a $2$-path connecting $C_1 \setminus X$ and $C_2\setminus X$, contradiction to the two separation property. Uniqueness follows from the fact that clusters are disjoint, hence if $B_1(v) \subseteq C_1 \cup X$ and $B_1(v) \subseteq C_2 \cup X$ for some distinct clusters $C_1 \neq C_2$, then it follows that $B_1(v) \subseteq X$.
\end{proof}

Define $H_C = \{v \in V \mid B_1(v) \cap (C\setminus X) \neq \emptyset\}$. We have the following two properties of $H_C$.

\begin{lemma}
\label{lem:HC_complete}
    The following two properties hold.
    \begin{enumerate}[(1)]
        \item $\bigcup_{C \in \cC} H_C$ is equal to the set $\{u \mid B_1(u) \nsubseteq X\}$.
        \item $H_C \cap H_{C'} = \emptyset$ for any two distinct clusters $C,C' \in \cC$.
    \end{enumerate}
\end{lemma}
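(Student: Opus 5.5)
Both properties follow directly from the definition $H_C = \{v \in V \mid B_1(v) \cap (C\setminus X) \neq \emptyset\}$. For (2) we will also use \Cref{lem:TS_unique_C_neighbor}, or equivalently the two separation property.

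For (1), we prove the two inclusions by unwinding definitions. If $u \in H_C$ for some $C$, then $B_1(u)$ contains a vertex of $C\setminus X$, which lies outside $X$, so $B_1(u) \nsubseteq X$. Conversely, suppose $B_1(u) \nsubseteq X$. Pick $w \in B_1(u)\setminus X$. Since $\cC$ partitions $V$, $w$ lies in a unique cluster $C_w$. Then $w \in B_1(u) \cap (C_w \setminus X)$, so $u \in H_{C_w}$.

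For (2), suppose for contradiction that $u \in H_C \cap H_{C'}$ with $C \neq C'$. Then there are $w \in B_1(u)\cap (C\setminus X)$ and $w' \in B_1(u) \cap (C'\setminus X)$. Since $w \neq w'$ (the clusters are disjoint), we get $B_1(u) \nsubseteq X$. By \Cref{lem:TS_unique_C_neighbor} there is then a unique cluster $C''$ with $B_1(u) \subseteq C'' \cup X$. But $w \notin X$ forces $w \in C''$, so $C'' = C$; likewise $w' \in C''$ forces $C'' = C'$. This contradicts $C \neq C'$.

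Alternatively, one can argue directly: the walk $(w,u,w')$, after removing repeated vertices if $u \in \{w,w'\}$, is a path of length at most $2$ from $C\setminus X$ to $C'\setminus X$. Any two consecutive vertices on it include $w$ or $w'$, both of which are outside $X$, so the path has no two consecutive vertices in $X$, violating two separation.

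There is no real obstacle here. The only care needed is the degenerate cases $u=w$ or $u=w'$, and both are handled by invoking \Cref{lem:TS_unique_C_neighbor} rather than building the path explicitly.
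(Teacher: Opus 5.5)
Your proof is correct and follows essentially the same route as the paper. Part (1) is a direct unwinding of the definition of $H_C$, and part (2) is derived from \Cref{lem:TS_unique_C_neighbor}, which the paper uses to conclude that $u \in H_C$ forces $B_1(u)\cap(C'\setminus X)=\emptyset$. (As a small simplification, you need neither the uniqueness clause nor $w\neq w'$: the mere existence of a cluster $C''$ with $B_1(u)\subseteq C''\cup X$ already forces $C = C'' = C'$.)
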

\begin{proof} We prove the two properties above.
\begin{enumerate}[(1)]
    \item This follows trivially from the definition of $H_C = \{v \in V \mid B_1(v) \cap (C\setminus X) \neq \emptyset\}$.
    \item Let $u \in H_C$. Then by \Cref{lem:TS_unique_C_neighbor} we have $B_1(u) \cap (C' \setminus X) = \emptyset$, or in other words $u \notin H_{C'}$.   
\end{enumerate}
\end{proof}

Let $\Pi_1 = (\Prov_1,\Ver_1)$ be a $1$-PLS for a property $P$. We introduce two notions for partial labelings related to $\Pi_1$ which help us certify that $(G,I)\in P$ using an $O(t)$-PLS. Namely, we define some notion of a good labeling of nodes of $X$ (according to $\Pi_1$), and for every $C$, given a good labeling of nodes of $X$, we define the notion of a good extension of the good labeling into $C \setminus X$. We then show that if there exists a good labeling of $X$, and a good extension of the labeling into $C \setminus X$ for every $C \in \cC$, then $(G,I)\in P$.


\begin{definition}
    Let $\Pi_1 = (\Prov_1,\Ver_1)$ be a $1$-PLS for a property $P$. Let $(\cC,X)$ be a $(t,\varepsilon)$-TS partition. We define the following.
    \begin{itemize}
        \item We say that a labeling $\ell_X : X \rightarrow \{0,1\}^{\leq p}$ is a good $X$-labeling if for all nodes $v \in X$ such that $B_1(v) \subseteq X$, we have 
    $\Ver_1(G,I,\ell_{X},v) = \accept.$
    \item Let $\ell_X$ be a good $X$-labeling. For a cluster $C \in \cC$, a function $\ell_{C \cup X}: (C \cup X) \rightarrow \{0,1\}^{\leq p}$ is called a good extension of $\ell_X$ if $\ell_{C \cup X}(u) = \ell_X(u)$ for every $u \in X$, and for every $v \in H_C$ we have
    $\Ver_1(G,I,\ell_{C \cup X},v) = \accept.$
    \end{itemize}
\end{definition}

\begin{lemma}\label{lem:tstradeoff_good_extensions}
    Let $\Pi_1 = (\Prov_1,\Ver_1)$ be a $1$-PLS for a property $P$. Let $(\cC,X)$ be a $(t,\varepsilon)$-TS partition, and let $\ell_X : X \rightarrow \{0,1\}^{\leq p}$ be a good $X$-labeling. If for every cluster $C \in \cC$ there exists a good extension $\ell_{X \cup C}$, then $(G,I) \in P$.
\end{lemma}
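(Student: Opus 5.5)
We glue the given labelings into one global labeling $\ell: V \to \{0,1\}^{\le p}$ and check that every node accepts under $\Ver_1$. Soundness of $\Pi_1$ then gives $(G,I)\in P$. For $u \in X$ set $\ell(u) = \ell_X(u)$. For $u \in C\setminus X$ set $\ell(u) = \ell_{C\cup X}(u)$, where $C$ is the unique cluster containing $u$. This is well defined because the clusters of $\cC$ partition $V$ and every extension agrees with $\ell_X$ on $X$.

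Next we show every $v\in V$ accepts. The key observation is that $\Ver_1$ at $v$ depends only on the labels, inputs and topology in $B_1(v)$. Hence if $\ell$ agrees with some labeling $\ell'$ on $B_1(v)$ and $\ell'$ makes $v$ accept, then $\ell$ makes $v$ accept too. We split into two cases, following \Cref{lem:HC_complete}(1).

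\begin{itemize}
\item If $B_1(v)\subseteq X$, then $\ell$ coincides with $\ell_X$ on $B_1(v)$. Since $\ell_X$ is a good $X$-labeling, $\Ver_1(G,I,\ell_X,v)=\accept$, so $v$ accepts under $\ell$.
\item If $B_1(v)\nsubseteq X$, then $v\in H_C$ for some cluster $C$, by \Cref{lem:HC_complete}(1). By \Cref{lem:TS_unique_C_neighbor}, $B_1(v)\subseteq C'\cup X$ for some cluster $C'$. This cluster must equal $C$: $B_1(v)$ contains a node of $C\setminus X$, and clusters are disjoint. So $B_1(v)\subseteq C\cup X$. On $C\cup X$, $\ell$ equals $\ell_{C\cup X}$: on $X$ both equal $\ell_X$, and on $C\setminus X$ by definition. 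Since $v\in H_C$ and the extension is good, $\Ver_1(G,I,\ell_{C\cup X},v)=\accept$, and locality transfers this to $\ell$.
\end{itemize}

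The only subtle point is the locality step: formally, labelings such as $\ell_X$ and $\ell_{C\cup X}$ are only partial. We will state explicitly that $\Ver_1$ at $v$ reads only $B_1(v)$, and that in each case the partial labeling is defined on all of $B_1(v)$ and agrees there with $\ell$. The main obstacle is showing that the neighborhood of $v$ never touches non-$X$ nodes of two different clusters. That is exactly where the two-separation property enters, through \Cref{lem:TS_unique_C_neighbor}.
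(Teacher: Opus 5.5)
Your proof is correct and takes the same route as the paper's: glue $\ell_X$ and the extensions into one global labeling, check that every node accepts in the two cases $B_1(v)\subseteq X$ and $v\in H_C$, and then invoke soundness of $\Pi_1$. Your use of Lemma~\ref{lem:TS_unique_C_neighbor} to get $B_1(v)\subseteq C\cup X$ spells out the locality step that the paper leaves implicit when it writes $\Ver_1(G,I,\ell_V,u)=\Ver_1(G,I,\ell_{X\cup C},u)$.
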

\begin{proof}

    Let $\ell_X$ be a good $X$-labeling, and let $\ell_{C_1 \cup X},\dots,\ell_{C_k \cup X}$ be good extensions of $\ell_X$ for clusters $C_1,\dots,C_k$, which exist by the lemma's assumption. We define a labeling  $\ell_V : V \rightarrow \{0,1\}^{\leq p}$ as follows.
    \[\ell_V(u) = \begin{cases}
             \ell_{X}(u), & u \in X, \\
            \ell_{C \cup X}(u),  & u \in C \setminus X.
        \end{cases}\]
    We argue that $\Ver_1(G,I,\ell_V,u) = \accept$ for all $u \in V$, which implies that $(G,I) \in P$ by the soundness of $(\Prov_1,\Ver_1)$. We split into two cases - when $B_1(u) \subseteq X$ and otherwise. In the first case, we have by definition of $\ell_X$ being good that 
    $\Ver_1(G,I,\ell_V,u) = \Ver_1(G,I,\ell_X,u) = \accept$. In the other case, we have by \Cref{lem:HC_complete} that $u \in H_C$ for some unique $C \in \cC$. Therefore, by definition of $\ell_{C \cup X}$ being a good extension of $\ell_X$, we get $\Ver_1(G,I,\ell_V,u) = \Ver_1(G,I,\ell_{X \cup C},u) = \accept$.
\end{proof}

Finally, we discuss how to encode a string across a cluster using node labels, and how to decode it again in order to retrieve the stored information.

    \paragraph*{Encoding and decoding strings on clusters.} As motivated in the introduction, we would like to be able to encode a good $\ell_X$ on the labels of the network, where $\ell_X(v)$ for $v \in C \cap X$ is encoded using labels of cluster $C$'s vertices. We first discuss encoding and decoding strings across a cluster, which uses the internal identifier ordering to describe the part numbers of string segments spread across the cluster.

Given a cluster $C \subseteq V$ and a string $S$, we define the lexicographic encoding $\lex$ of $S$ on $C$ as follows. Let $k = |C|$, and let $u_1,\dots,u_k$ be the nodes of $C$, ordered lexicographically by their node identifiers. Let $S' = \Pad(S,k)$ be a padding of $S$ so that $|S'|$ is divisible by $k$, and let $S'_1,\dots,S'_k$ denote the strings obtained by partitioning $S'$ into $k$ substrings, each of the same length. For node $u_i$, we define $\lex(C,S,u_i) = S'_i$.\footnote{We refer back to the string notations in \Cref{sec:prelims} for the introduction of $\Pad$ and the fixed delimiter scheme.}

Relatedly, we also define a decoding function $\decodelex$. Given a cluster $C$ and a labeling $\ell$ on its vertices such that $\ell(u_i) = \lex(C,S,u_i)$ for some string $S$, the function $\decodelex(C,\ell)$ reconstructs $S$ by concatenating the labels in lexicographic order and then removing the padding.

We are now ready to prove \Cref{lem:tstradeoff}.

\begin{proof}[Proof of \Cref{lem:tstradeoff}]
    Given $(\Prov_{\TS},\Ver_{\TS})$, an $O(t)$-PLS for computing a $(t,\varepsilon)$-TS partition, and $(\Prov_1,\Ver_1)$ a $1$-PLS for certifying $P$, we construct $(\Prov_t,\Ver_t)$ as follows. We assume w.l.o.g. that the labels given by $\Prov_t$ and received by $\Ver_t$ for each node are of the form $(\var{TSLabel},\var{ProofPart})$ where $\var{TSLabel},\var{ProofPart}$ are some strings. If a label breaks this format, $\Ver_t$ rejects in the corresponding node.

    \paragraph*{Description of $\Prov_t$:}
    Let $(\cC,X)$ be a $(t,\varepsilon)$-TS partition of the graph that is outputted by $\Prov_{\TS}$ on the graph $G$. 
    For a node $v$, denote $C_v$ as the unique cluster containing $v$. 
    Let $u_1,\dots,u_k$ be the nodes of $C_v \cap X$, ordered lexicographically w.r.t. their node identifier, and let 
    \[\var{XLabel}(C_v,X) = (\Prov_1(G,I,u_1),\dots,\Prov_1(G,I,u_k)).\] 
    
    The mapping $\Prov_t$ gives each node $v$  as its label the tuple 
    \[\Prov_t(G,I,v) = \left(\Prov_{\TS}(G,I,v)~,~ \lex(C_v,\var{XLabel}(C_v,X),v)\right).\] 
    By the lemma's assumption, $\Prov_{\TS}(G,I,v)$ is of size at most $c$. For any $C \in \cC$ we have $|\var{XLabel}(C,X)| \leq p \cdot |C \cap X|$. Therefore, 
    \[|\lex(C_v,\var{XLabel}(C_v,X),v)| = O(\lceil p \cdot |C_v \cap X|/|C_v| \rceil) = O(\lceil p\cdot \varepsilon \cdot |C_v|/|C_v| \rceil) = O(\lceil \varepsilon \cdot p \rceil).\] 
    Hence the total cost of $\Prov_t$ is $O(\lceil \varepsilon \cdot p \rceil+c)$ as required.

    \paragraph*{Description of $\Ver_t$:}     
    Recall we assume that the proof labels of every node $v$ are given in a format encoding two arbitrarily large strings $(\var{TSLabel}(v),\var{ProofPart}(v))$ (otherwise we have that node reject). Let $\var{TSLabel} : V \rightarrow \{0,1\}^*, \var{ProofPart}: V\rightarrow \{0,1\}^*$ be the corresponding global functions/labelings. We describe $\Ver_t$ as a distributed algorithm, where we say that a node accepts or rejects corresponding to whether $\Ver_t$ accepts or rejects on this node.

    In the first step of the verifier we have every node $v$ run $\Ver_{\TS}$ on the labeling $\var{TSLabel}$ to certify and output some $(t,\varepsilon)$-TS partition. By the soundness of $\Pi_{\TS}$, either a node rejects, or there exists some $(t,\varepsilon)$-TS partition $(\cC,X)$ satisfying the properties of \Cref{def:ts_partition_labeling}. Assuming no node rejects, we fix this partition $(\cC,X)$. Let $C_v$ denote the unique cluster containing $v$, and denote for each cluster $C \in \cC$ by $\leader(C)$ as the node with the maximum node identifier in $C \setminus X$.

    From the labeling $\var{ProofPart}$ we globally define a labeling function $\ell_X : X \to \{0,1\}^*$ for all nodes in $X$, in the following manner. Let $C \in \cC$ be a cluster, and let $C \cap X = \{u_1,\dots,u_k\}$ be the nodes of $C \cap X$ ordered lexicographically by node id. Then we define,

\[(\ell_X(u_1),\dots,\ell_X(u_{k})) = \decodelex(C,\var{ProofPart}).\]

Node $v = \leader(C)$ asserts that $\decodelex(C,\var{ProofPart})$ does indeed encode $k = |C \cap X|$ strings, rejecting otherwise. 

We note that to compute $\ell_X(v)$ of any node $v \in C \cap X$ it is sufficient to know the topology of $C$ with its identifiers, and its labels of $\var{ProofPart}$. Therefore, any node $v$ can locally reconstruct for each of $C \in \Comp(v)$ the labels $\ell_X(u)$ of all nodes $u \in C \cap X$, and specifically for $u \in B_2(C_v) \cap X$.

Recall the definition of $H_C = \{v \mid B_1(v) \cap (C \setminus X) \neq \emptyset\}$. Every node $v$ performs the following. 
\begin{enumerate}[(1)]
    \item If $B_1(v) \subseteq X$ (including $v$), node $v$ asserts that $\Ver_1(G,I,\ell_X,v) = \accept.$
    \item If $v = \leader(C)$ for some cluster $C \in \cC$, node $v$ asserts that there exists a labeling $\ell_{C \setminus X}:C \setminus X \rightarrow \{0,1\}^{\leq p}$ such that the extension $\ell_{C \cup X} = \ell_X \cup \ell_{C \setminus X}$ is good, or in other words that
    $\forall_{u \in H_C} \Ver_1(G,I,\ell_{C \cup X},u)= \accept$. We note that $\leader(C)$ can perform this since in order to check that the extension is good, $\leader(C)$ only needs to know $\ell_X(u)$ for $u \in B_2(C) \cap X$.
    \item If all assertions passed, node $v$ accepts.
\end{enumerate}

Each node checks the assertions above, and accepts if and only if no assertion fails. This concludes the description of $\Ver_t$. We summarize all assertions made by the verifier in the following list, from the viewpoint of a node $v$ (skipping some in-between steps).
\begin{enumerate}[(1)]
    \item \label{item:ts_assert_2} Node $v$ runs $\Ver_{\TS}$ on the labeling $\var{TSLabel}$, and rejects if it rejects. Otherwise, it obtains its relevant part in $(\cC,X)$, i.e. $\Comp(v) \subseteq \cC$ and for each $C \in \Comp(v)$ a set $X_C(v)$.
    \item \label{item:ts_assert_3} If $v = \leader(C)$ for some $C \in \Comp(v)$, it asserts whether $\decodelex(C,\var{ProofPart})$ encodes $|C \cap X|$ many strings (which correspond to $(\ell_X(u))_{u \in C \cap X}$).
    \item \label{item:ts_assert_4} If $B_1(v) \subseteq X$ (including $v$), node $v$ asserts that $\Ver_1(G,I,\ell_X,v) = \accept.$
    \item \label{item:ts_assert_5} If $v = \leader(C)$ for some $C \in \Comp(v)$, then it asserts that there exists a labeling $\ell_{C \setminus X}$ of $C \setminus X$  such that the extension $\ell_{C \cup X} = \ell_X \cup \ell_{C \setminus X}$ is good, i.e. that
    $\forall_{u \in H_C} \Ver_1(G,I,\ell_{C \cup X},u) = \accept$.
\end{enumerate}

        \paragraph*{Correctness:} To show correctness, we assume that a $(t,\varepsilon)$-TS partition exists, and show that  $\Ver_t(G,I,\Prov_t(G,I),v) = \accept$, i.e. that all nodes accept given the honest prover labeling. 

        We have that $\var{TSLabel}(v) = \Prov_{\TS}(G,I,v)$, therefore by correctness of $(\Prov_{\TS},\Ver_{\TS})$ no node rejects, and we have that each node obtains a set $\Comp(v) \subseteq \cC$ and $X_C$ for each $\Comp(v)$ satisfying the guarantees of \Cref{def:ts_partition_labeling}, passing Assertion~(\ref{item:ts_assert_2}).

        Recall the definition of 
        $\var{XLabel}(C,X) = (\Prov_1(G,I,u))_{u \in C \cap X}$. For each node $v$, we have that $\var{ProofPart}(v) = \lex(C,\var{XLabel}(C_v,X),v)$, therefore for each $C \in \Comp(v)$ we have that \[\decodelex(C,\var{ProofPart}) = (\Prov_1(G,I,u))_{u \in C \cap X}.\] In particular, we do not reject in Assertion~(\ref{item:ts_assert_3}), and for every node $v \in V$, we have that it can compute $\ell_X(u)$ for all $u \in C \cap X$.

        Finally, we show that a node $v$ does not reject in Assertion~(\ref{item:ts_assert_4}),(\ref{item:ts_assert_5}). For Assertion~(\ref{item:ts_assert_4}), we note that all of $u \in B_1(v)$ have $\ell_X(u) = \Prov_1(G,I,u)$, and therefore $v$ does not reject, by correctness of $\Prov_1$. For Assertion~(\ref{item:ts_assert_5}), we note that by correctness of $\Prov_1$, we have that $\ell_X$ is a good labeling, and hence the labeling $\ell_{C \cup X}(u) = \Prov_1(G,I,u)$ is a good extension to $\ell_X$, and $v = \leader(C)$ accepts. 

        \paragraph*{Soundness:} 
        Assume that all nodes accept. By Assertion~(\ref{item:ts_assert_2}),(\ref{item:ts_assert_3}) we have some TS partition $(\cC,X)$ which the nodes collectively hold, and that each cluster $C \in \cC$ encodes $|C \cap X|$ many strings, corresponding to some labeling $(\ell_X(u))_{u \in C \cap X}$. Therefore, the labeling $\ell_X$ is well-defined. By Assertion~(\ref{item:ts_assert_4}), we have that $\ell_X$ is a good $X$-labeling according to \Cref{def:ts_partition_labeling}, and moreover by Assertion~(\ref{item:ts_assert_5}), we have for each $C \in \cC$ a good extension $\ell_{C \cup X}$. Therefore, by \Cref{lem:tstradeoff_good_extensions}, we have that $(G,I) \in P$, and the claim follows.

\end{proof}

\section{Tradeoff Theorem for Minor-Free Graphs}
\label{sec:minor_free_graphs}
In this section, our goal is to prove that the Tradeoff Conjecture holds for $K_r$-minor free graphs $\cG$ for any constant $r \geq 1$, up to a single additive $O(\log{n})$ factor. This trivially implies the conjecture for any $H$-minor free graph family, since it is $K_{r}$-minor free as well, for $r = |V(H)|$.

\thmminorfree*

On a high level, we show \Cref{thm:main_minor_free} using the existence of a good padded decomposition (introduced next) in $\cG$, and relating its existence to the existence of a good TS partition. Then we use the fact that $O(\log{n})$ is always sufficient for an $O(t)$-PLS to certify and output a TS partition.

\begin{definition}[Padded Decomposition]\label{def:PadDecompostion}
    
    Let $G = (V,E)$ be a graph. A distribution $\mathcal{D}$ over partitions of $V$ is a $(\beta,\delta,\Lambda)$-padded decomposition if every $\mathcal{P}\in{\rm supp}(\mathcal{D})$ is weakly $\Lambda$-bounded, and for every $0\le\gamma\le\delta$ and $z\in V$, the ball $B_{\gamma\Lambda}(z)$ satisfies
    $\Pr[B_{\gamma\Lambda}(z) \subseteq C_z] \ge e^{-\beta\gamma}$ ($C_z$ denotes the unique cluster in $\cP$ containing $z$).    
    We say that $G$ admits a $(\beta,\delta)$-padded decomposition scheme if for every $\Lambda>0$, there is a $(\beta,\delta,\Lambda)$-padded decomposition for $G$.
\end{definition}

We use a result by Conroy and Filtser \cite{CF25}, that shows that $K_r$-minor free graphs have a good padded decomposition.

\begin{theorem}[\cite{CF25}]
\label{thm:padded_k_r}
    For any $r \geq 1$, every $K_r$-minor free graph admits an $(O(\log{r}),O(1))$-padded decomposition scheme.
\end{theorem}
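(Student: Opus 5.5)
Since \Cref{thm:padded_k_r} is imported from \cite{CF25}, the paper only needs to cite it. If I had to reprove it, my plan would combine a structural decomposition of $K_r$-minor free graphs into ``few-neighbor'' pieces with a Miller--Peng--Xu / CKR-style random-shift clustering. The $\log r$ padding parameter should come from the standard fact that, in such clusterings, the padding loss is logarithmic in the number of centers that can compete for a given ball. Fix $\Lambda>0$ and set $\Delta=c\Lambda$ for a small constant $c$.

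\textbf{Step 1 (structure).} I would invoke a \emph{buffered cop decomposition} of the $K_r$-minor free graph $G$ with parameter $\Delta$, as in the cop-robber-inspired decompositions of Abraham et al.\ and Chang et al. This is a sequence of vertex-disjoint supernodes $\eta_1,\eta_2,\dots$ covering $V$. Each $\eta_i$ contains a shortest-path skeleton tree $T_i$ of the graph remaining when it is created, so $\eta_i$ has weak diameter $O(\Delta)$ around $T_i$. Each $\eta_i$ is adjacent to at most $r-2$ earlier supernodes, and there is a buffer: $\eta_i$ contains every surviving vertex within distance $\Omega(\Delta)$ of $T_i$. The consequence I want is that every vertex $z$ has, within distance $O(\Delta)$, only $O(r)$ ``relevant'' skeletons that can claim it.

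\textbf{Step 2 (random clustering).} I would not output the supernodes directly, since their boundaries are adversarial. Instead I would sample independent shifts $\delta_i$ for each skeleton $T_i$ from an exponential distribution with mean $\Theta(\Delta/\log r)$, truncated at $\Delta$. Each vertex $v$ is then assigned to the skeleton maximizing $\delta_i-\dist_{G_i}(v,T_i)$ among the relevant skeletons, where $G_i$ is the graph at the time $T_i$ is created. Together with Step 1, truncation gives that every cluster lies within $O(\Delta)$ of a single tree of bounded diameter in the relevant metric, hence has weak diameter at most $\Lambda$.

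\textbf{Step 3 (padding; main obstacle).} Let $z\in V$ and $\gamma\le\delta=O(1)$. By the usual ``first and second arrival'' analysis of exponential clocks, the ball $B_{\gamma\Lambda}(z)$ is cut only if the top two values of $\delta_i-\dist(z,T_i)$ differ by less than $2\gamma\Lambda$. With $k=O(r)$ competitors and rate $\lambda=\Theta(\log r/\Delta)$, the memorylessness of the exponential bounds this probability by $1-e^{-2\gamma\Lambda\lambda}$ plus the truncation error. The truncation error is $k e^{-\lambda\Delta}=O(r\cdot r^{-C})$, which can be absorbed. This yields $\Pr[B_{\gamma\Lambda}(z)\subseteq C_z]\ge e^{-O(\log r)\gamma}$.

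The hard part is justifying that only $O(r)$ skeletons genuinely compete for $z$. Distances change as supernodes are removed, so I would need the buffer property to argue two things. First, a later skeleton cannot ``reach into'' $B_{\gamma\Lambda}(z)$ without its supernode being one of the at most $r-2$ neighbors of $z$'s supernode chain. Second, $\dist_{G_i}$ and $\dist_G$ agree up to constants on the relevant range. A residual $O(\log r)$ versus $O(r)$ competitor issue also has to be handled, and I expect it to be resolved by using the truncated rather than the pure exponential.
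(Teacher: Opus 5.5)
The paper gives no proof of this statement. It is imported from \cite{CF25} and used only, via \Cref{lem:padded_ts}, inside \Cref{lem:minor_free_ts_existance}, with $\Lambda=t$ and $\gamma\Lambda=2$. Your first sentence is therefore exactly the paper's treatment, and nothing more is needed here.

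The reproof sketch, however, is not a proof. The step you defer is the heart of \cite{CF25}, not a technicality. You need every point of $B_{\gamma\Lambda}(z)$ to face the same $O(r)$ competitors, at distances within $\pm\gamma\Lambda$ of $z$'s. Nothing in a (buffered) cop decomposition makes $\dist_{G_i}$ comparable to $\dist_G$. $G_i$ is an induced subgraph, and two vertices of the same supernode at $G$-distance $2$ may have their only short connection through a vertex of an earlier supernode, so they can be far apart in $G_i$. Likewise, a point of the ball can lie in the domain of a skeleton that $z$ does not see. The cop-decomposition analysis of Abraham, Gavoille, Gupta, Neiman and Talwar reaches only padding $O(r)$, so a genuinely new argument is required at exactly this point.

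Two further steps fail as written.

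\textbf{The Step~2 diameter bound.} Skeletons are unions of fewer than $r$ shortest paths of the remaining graph, and these paths have unbounded length. So the vertices won by one skeleton can have unbounded weak diameter. You would need centers on a $\Theta(\Delta)$-net of each skeleton path. The number of net points near $z$ is then $O(1)$ only in $\dist_{G_i}$, not in $\dist_G$, which feeds back into the gap above.

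\textbf{The truncation error.} It cannot be a $\gamma$-independent additive term $ke^{-\lambda\Delta}=r^{-\Omega(1)}$. The padding guarantee must hold as $\gamma\to 0$, and that is precisely the regime the paper uses ($\gamma=2/t$ with $t$ arbitrary). There such a term exceeds the required $O(\beta/t)$ once $t$ is polynomially large in $r$. Instead, truncate by conditioning and bound the winner's density. With $\rho=\gamma\Lambda$ this gives
\[
\Pr[\text{cut}]\le 2\lambda\rho\bigl(1+ke^{-\lambda\Delta}/(1-e^{-\lambda\Delta})\bigr),
\]
which is linear in $\gamma$. That linear form is all the paper actually needs.
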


Next, we show that having a good padded decomposition implies a good TS partition. First, we show that given an ``active'' set of nodes $L$, we can construct a cluster $C \subseteq L$ such that all vertices, other than a $O(\beta/t)$ fraction, are at least of distance $2$ from all other nodes in $L$. We then use this iteratively to get a partition with $O(\beta/t)$-cluster-degeneracy, and conclude the existence of a $(t,O(\beta/t))$-TS partition.

\begin{lemma}
\label{lem:minor_free_ts_existance}
    Let $G=(V,E)$ be a graph that admits a $(\beta,O(1))$-padded decomposition scheme. Let $L \subseteq V$ be a set of nodes. Then for any $t=\Omega(1)$, there is a cluster $C\subseteq L$ that has weak diameter $\le  t$, and such that $|C \cap B_2(L \setminus C)| \le\frac{2\beta}{t}\cdot|C|$.
\end{lemma}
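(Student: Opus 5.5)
We will use the padded decomposition scheme together with the averaging argument sketched in the technical overview. We only need it inside $L$, so we apply the argument to $L$ rather than to all of $V$.

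\textbf{Step 1: sampling and padding.} Fix $\Lambda=t$ and sample a partition $\mathcal{P}$ from a $(\beta,\delta,t)$-padded decomposition of $G$, where $\delta=\Omega(1)$ is the constant given by the scheme. Every cluster of $\mathcal{P}$ has weak diameter at most $t$. Set $\gamma=2/t$; since $t=\Omega(1)$ we may take $t\ge 2/\delta$, so $\gamma\le\delta$ is admissible. For every $z\in V$ we then get
\[
\Pr[B_2(z)\subseteq C_z]\ge e^{-2\beta/t}\ge 1-\frac{2\beta}{t},
\]
so $\Pr[B_2(z)\nsubseteq C_z]\le 2\beta/t$.

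\textbf{Step 2: restrict to $L$.} Consider the induced clusters $C\cap L$ for $C\in\mathcal{P}$, discarding the empty ones. These partition $L$ into sets $D_1,\dots,D_k$, each of weak diameter at most $t$, since weak diameter is measured in $G$ and passing to a subset cannot increase it. Define $\Gamma=\{z\in L\mid B_2(z)\nsubseteq C_z\}$. If $z\in D_i$ satisfies $z\in B_2(L\setminus D_i)$, then some $w\in L\setminus D_i$ with $\dist_G(z,w)\le2$ lies outside $C_z$, so $z\in\Gamma$. Hence $\sum_i |D_i\cap B_2(L\setminus D_i)|\le|\Gamma|$. By linearity of expectation, $\mathbb{E}|\Gamma|\le (2\beta/t)|L|$.

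\textbf{Step 3: mediant inequality and the probabilistic method.} By the mediant inequality,
\[
\min_i\frac{|D_i\cap B_2(L\setminus D_i)|}{|D_i|}\le\frac{\sum_i|D_i\cap B_2(L\setminus D_i)|}{\sum_i|D_i|}\le\frac{|\Gamma|}{|L|}.
\]
Taking expectations, the left-hand side has expectation at most $2\beta/t$. Therefore some partition in the support of $\mathcal{D}$ attains a value at most $2\beta/t$. The minimizing set $C=D_i$ is the desired cluster: $C\subseteq L$, its weak diameter is at most $t$, and $|C\cap B_2(L\setminus C)|\le\frac{2\beta}{t}|C|$.

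\textbf{Main obstacle.} The delicate point is the parameter bookkeeping in Step 1: we must check that $\gamma\Lambda=2$ with $\gamma\le\delta$, which is what forces $t=\Omega(1)$, and we must use $e^{-x}\ge 1-x$ to obtain the exact constant $2\beta/t$. We also need $L\neq\emptyset$ so that the minimum is over a nonempty family. The restriction step is routine once one observes that membership in $B_2(L\setminus D_i)$ witnesses a failure of padding.
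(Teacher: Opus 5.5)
Your proposal is correct and follows the paper's proof essentially step for step: sample a $(\beta,O(1),t)$-padded decomposition with $\gamma=2/t$, bound $\mathbb{E}|\Gamma|\le(2\beta/t)|L|$ by linearity of expectation, restrict the clusters to $L$, and pick the minimizer given by the mediant inequality. The differences are minor. You take the expectation of the minimum ratio directly, whereas the paper first fixes a partition with $|\Gamma|\le(2\beta/t)|L|$, which is equivalent. You also explicitly justify $\sum_i|D_i\cap B_2(L\setminus D_i)|\le|\Gamma|$ and the requirement $L\neq\emptyset$, both of which the paper leaves implicit.
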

\begin{proof}
\sloppy
    We sample a partition $\cP = \{C_1,\dots,C_k\}$ from a $(\beta,O(1),t)$-padded decomposition. Denote $C_v$ the unique cluster containing $v$. Let $\Gamma_2(L) =\left\{v\in L\mid B_2(v)\nsubseteq C_v\right\}$. We note that $\Gamma_2(L)$ is precisely the set of second-layer boundary vertices in $L$ across all clusters of $\cP$.
    For every vertex $v$, it holds that
    \[
    \Pr\left[v\in\Gamma_2(L)\right] \leq \Pr\left[B_2(v)\nsubseteq C_v\right]\le1-e^{-\frac{2\beta}{t}}\le\frac{2\beta}{t}~.
    \]
    Thus by linearity of expectation,  $\mathbb{E}[|\Gamma_2(L)|]\le \frac{2\beta}{t}\cdot |L|$.
    In particular, there is some partition $\cP$ where $|\Gamma_2(L)|\le \frac{2\beta}{t}\cdot |L|$.
    Let $\cP$ be such a partition, and let $\cC$ be the resulting clusters, and let $\cC' = \{C \cap L \mid C \in \cC\}$. Let $C'_1,\dots,C'_k$ denote the non-empty clusters of $\cC'$. Then the following holds,
 \[
\min_{i \in [k]}\frac{\left|C'_i \cap B_2(L \setminus C'_i) \right|}{\left|C'_{i}\right|}\le\frac{\sum_{i=1}^{k}\left|C'_i \cap B_2(L \setminus C'_i)\right|}{\sum_{i=1}^{k}\left|C'_{i}\right|} \leq \frac{|\Gamma_2(L)|}{|L|} \leq \frac{2\beta}{t}~.
\]
Where the first inequality follows from the mediant inequality.\footnote{Recall that the mediant inequality states that if $a_1,b_1,\dots,a_\ell,b_\ell >0$ are real numbers, then $\min_{i \in [\ell]}\frac{a_i}{b_i} \leq \frac{\sum_{i=1}^\ell a_i}{\sum_{i=1}^\ell b_i}$.} Thus taking $C = C'_i$ as the cluster minimizing the expression on the left-hand side, we found a cluster $C$ as desired.
\end{proof}

\begin{lemma}
\label{lem:padded_ts}
    Let $G = (V,E)$ be a graph that admits a $(\beta,O(1))$-padded decomposition scheme. Then $G$ has a $(t,2\beta/t)$-TS partition.
\end{lemma}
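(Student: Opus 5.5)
We will build an ordered partition with $(2\beta/t)$-cluster-degeneracy and weak diameter at most $t$ by iterating \Cref{lem:minor_free_ts_existance}, and then conclude with \Cref{lem:ds_implies_ts}.

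\textbf{Carving process.} Initialize $L_1 = V$. In iteration $i \ge 1$, while $L_i \neq \emptyset$, apply \Cref{lem:minor_free_ts_existance} with active set $L = L_i$. This gives a nonempty cluster $C_i \subseteq L_i$ with $\weakdiam(C_i)\le t$ and
\[ |C_i \cap B_2(L_i \setminus C_i)| \le \tfrac{2\beta}{t}\,|C_i|. \]
Set $L_{i+1} = L_i \setminus C_i$. Nonemptiness of $C_i$ needs a small check. The lemma's proof takes $C$ among the nonempty clusters $C'_i$, and at least one exists because $L\neq\emptyset$. The mediant step also needs positive denominators, which holds for those clusters. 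So each iteration removes at least one vertex, and the process terminates after some $k\le n$ steps. The resulting clusters $C_1,\dots,C_k$ partition $V$.

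\textbf{Cluster-degeneracy.} By construction $L_i = V\setminus \cup_{j<i} C_j$, so $L_i\setminus C_i = V\setminus\cup_{j\le i}C_j$. Therefore the close-to-boundary set of $C_i$ is
\[ \{v\in C_i \mid \dist_G(v, V\setminus \cup_{j\le i}C_j)\le 2\} = C_i\cap B_2(L_i\setminus C_i). \]
This set has size at most $(2\beta/t)|C_i|$. Hence $\cC=\{C_1,\dots,C_k\}$ has $(2\beta/t)$-cluster-degeneracy. It is also weakly $t$-bounded, because weak diameter is measured in $G$ and each $C_i$ inherits the bound from the padded-decomposition cluster containing it. Applying \Cref{lem:ds_implies_ts} then yields a $(t,2\beta/t)$-TS partition.

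\textbf{Expected obstacle.} The only delicate point is the range of $t$. \Cref{lem:minor_free_ts_existance} assumes $t=\Omega(1)$, which is needed so that the padding radius $2=\gamma t$ satisfies $\gamma\le\delta=O(1)$. For smaller $t$, I would argue the claim is trivial or vacuous. When $2\beta/t\ge 1$, any partition into singletons works with $X=V$: the cost ratio is $1\le 2\beta/t$ and the weak diameter is $0$. Since $\beta$ is at least a constant for nontrivial graphs, the threshold below which \Cref{lem:minor_free_ts_existance} fails is covered by this trivial case, possibly after adjusting constants.
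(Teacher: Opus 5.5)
Your proof is correct and follows the paper's argument. You repeatedly apply \Cref{lem:minor_free_ts_existance} to the shrinking active set $L_i$, and note that $L_i\setminus C_i = V\setminus\cup_{j\le i}C_j$, so the close-to-boundary set of $C_i$ is exactly $C_i\cap B_2(L_i\setminus C_i)$. You then finish with \Cref{lem:ds_implies_ts}; your nonemptiness check is a detail the paper leaves implicit.

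Your final paragraph on small $t$ is unnecessary, because the paper simply inherits the standing assumption $t=\Omega(1)$ from \Cref{lem:minor_free_ts_existance}. Its justification is also loose: cliques admit $\beta=0$ with $\delta=1$, so $2\beta/t\ge 1$ need not hold below the threshold $2/\delta$. It is cleaner to state the assumption on $t$ explicitly than to argue around it.
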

\begin{proof}
     Consider the following iterative procedure, where we maintain an active set $\alive$, and construct in each step a cluster $C \subseteq \alive$ using \Cref{lem:minor_free_ts_existance}, and remove $C$ from $\alive$ until the set is empty. 
    
    Initially, set $\alive_1 = V$. In step $i \geq 1$, assume we have a set $\alive_i \subseteq V$, satisfying $|\alive_i| > 0$. By \Cref{lem:minor_free_ts_existance}, there exists a cluster $C_i$ of weak diameter $\le  t$, such that $|C_i \cap B_2(\alive_i \setminus C_i)| \le 2\beta|C_i|/t$. Define $\alive_{i+1} = \alive_i \setminus C_i$. We repeat the process until $\alive_{i+1}$ is empty. 
    
    By a simple induction, we have $\alive_i = V \setminus \cup_{j = 1}^{i-1} C_j$ for all $i \in [k]$. Therefore, the set of obtained clusters $\cC = \{C_1,\dots,C_k\}$ are a partition of $V$, having each weak diameter $\leq t$, and satisfy 
    \[\forall_i \frac{|C_i \cap B_2(V \setminus \cup_{j=1}^{i}C_j)|}{|C_i|} = \frac{|C_i \cap B_2(\alive_i \setminus C_i)|}{|C_i|} \le 2\beta/t.\] Therefore, $\cC$ has $(2\beta/t)$-cluster-degeneracy, and by \Cref{lem:ds_implies_ts}, $G$ has a $(t,2\beta/t)$-TS partition.
\end{proof}

By combining \Cref{thm:main_minor_free} and \Cref{lem:padded_ts}, we get the following as corollary.

\begin{corollary}
\label{cor:minor_free_ts}
    For any $r \geq 1$, the family of $K_r$-minor free graphs has a $(t,\Theta(\log{r}/t))$-TS partition.
\end{corollary}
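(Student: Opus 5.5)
The corollary follows by chaining two results already established: \Cref{thm:padded_k_r} and \Cref{lem:padded_ts}. (The sentence before the corollary cites \Cref{thm:main_minor_free}, but the intended input is clearly \Cref{thm:padded_k_r}, the padded decomposition result of \cite{CF25}.)

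First, fix $r \geq 1$ and a $K_r$-minor free graph $G$. By \Cref{thm:padded_k_r}, $G$ admits a $(\beta,\delta)$-padded decomposition scheme with $\beta = O(\log r)$ and $\delta = \Omega(1)$ a universal constant. This is exactly the hypothesis of \Cref{lem:padded_ts}, a $(\beta,O(1))$-padded decomposition scheme.

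Second, apply \Cref{lem:padded_ts} to get a $(t,2\beta/t)$-TS partition, that is, a $(t,O(\log r/t))$-TS partition. Two bookkeeping points need attention.

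\begin{itemize}
\item \textbf{Applicability of the padding guarantee.} \Cref{lem:minor_free_ts_existance} uses the padding guarantee for the ball of radius $2$ with $\Lambda = t$, which corresponds to $\gamma = 2/t$. We need $\gamma \le \delta$, so the argument requires $t \ge 2/\delta = \Omega(1)$. For smaller $t$ we use the trivial partition: take $\cC$ as the singletons and $X = V$. Every cluster then has cost ratio $1 = O(1/t)$ when $t = O(1)$, and two separation holds vacuously since $V \setminus X = \emptyset$. This handles the small-$t$ regime within the $O(\cdot)$.

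\item \textbf{Reading $\Theta$ as an upper bound.} The statement writes $\Theta(\log r/t)$, but a TS partition's $\varepsilon$ is only an upper bound on the cost ratio. A $(t,\varepsilon)$-TS partition is also a $(t,\varepsilon')$-TS partition for any $\varepsilon' \ge \varepsilon$. So we read the claim as $\varepsilon = O(\log r/t)$, with $\log r$ replaced by $\max(1,\log r)$ to cover $r \le 2$.
\end{itemize}

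The only real obstacle is the constant-range issue for small $t$ in the first bullet. The rest is direct substitution, and no new combinatorics is needed.
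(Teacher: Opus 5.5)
Your proposal is correct and follows the same route as the paper: \Cref{thm:padded_k_r} gives an $(O(\log r),O(1))$-padded decomposition scheme, and \Cref{lem:padded_ts} turns it into a $(t,2\beta/t)$-TS partition. You are also right that the paper's sentence cites \Cref{thm:main_minor_free} where \Cref{thm:padded_k_r} is meant, and your handling of small $t$ (singletons with $X=V$) and your reading of $\Theta$ as an upper bound correctly fill in details the paper leaves implicit; its \Cref{lem:minor_free_ts_existance} already assumes $t=\Omega(1)$.
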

 
Next, we show that a $(t,\varepsilon)$-TS partition can always be certified and outputted using an $O(t)$-PLS with cost $O(\log{n})$. We remark that this lemma is fairly standard, and similar in spirit to prior works (e.g. \cite{FOS21}).

\begin{restatable}{lemma}{tsminorcompute}
\label{lem:tsminorcompute}
    In any graph family $\cG$ and any $t(n) \geq 1,\varepsilon(n) \in [0,1)$ there exists a $O(t)$-PLS that certifies and outputs a $(t(n),\varepsilon(n))$-TS partition with cost $O(\log{n})$.
\end{restatable}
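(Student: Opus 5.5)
The plan is to have the prover write, at every node, the identifier of a designated ``center'' of its cluster together with a single bit marking membership in $X$. Concretely, fix a $(t,\varepsilon)$-TS partition $(\cC,X)$ (this is what we certify). For each $C\in\cC$, let $\mathrm{id}(C)$ be the maximum identifier in $C$. Each node $v$ gets the label $(\mathrm{id}(C_v), [v\in X])$, which costs $O(\log n)$ bits.

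The verifier runs with radius $O(t)$; radius $3t+2$ suffices. Node $v$ first defines $\hat C_w=\{u : \mathrm{id\text{-}label}(u)=\mathrm{id\text{-}label}(w)\}$ for the nodes $w$ it sees. It then checks the following.
\begin{enumerate}[(a)]
\item \emph{Center check.} The node whose identifier equals $v$'s cluster label lies within distance $t$ of $v$, and every node with the same label is within distance $t$ of $v$. Pairwise weak diameter is enforced because each $v$ checks all same-label nodes in its $t$-ball and rejects if some same-label node is farther. More precisely, $v$ rejects if some $u$ within distance $2t+1$ carries the same label but $\dist(u,v)>t$. Since every member is within distance $t$ of the center, all members are within distance $2t$ of $v$, so $v$ sees them and the check is complete. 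The center itself additionally verifies that it carries its own identifier as label.
\item \emph{Two separation.} For every edge $\{u,w\}$ with $u,w\in B_1(v)$ and differing cluster labels, and for paths of length $2$ through $v$, check that no two non-$X$ nodes of different clusters are adjacent or share a common neighbor without an $X$-pair between them.
\item \emph{Cost ratio.} The center of each cluster counts $|C\cap X|/|C|\le\varepsilon$, using its knowledge of $n$ to evaluate $\varepsilon(n)$.
\end{enumerate}

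For the outputs, $v$ takes $\Comp(v)$ to be the label classes of nodes in $B_2(\hat C_v)$, which lies within radius $3t+2$, and reads off $X_C(v)$ from the bits.

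The main obstacle is that the two-separation property is global: it quantifies over arbitrary paths. The key step is to reduce it to a local condition. The claim to prove is that $(\cC,X)$ is two-separated if and only if there is no edge between $C\setminus X$ and $C'\setminus X$, and no node of $X$ adjacent to both $C\setminus X$ and $C'\setminus X$ for $C\ne C'$.

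\emph{Necessity.} This is immediate, since paths of length $1$ or $2$ contain no consecutive pair of $X$ nodes.

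\emph{Sufficiency.} Take a path from $C\setminus X$ to $C'\setminus X$ with no two consecutive $X$ nodes. Consider the sequence of non-$X$ nodes along it. Consecutive non-$X$ nodes on the path are at path-distance $1$ or $2$ (separated by at most one $X$ node). By the local condition, such a pair lies in the same cluster. By induction along the path, the cluster never changes, which is a contradiction.

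This local condition is checkable by each node within radius $1$, so soundness follows. Completeness is immediate from the honest labels, and the cost is $O(\log n)$ as claimed.
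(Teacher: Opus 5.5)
Your proposal is correct and follows essentially the same approach as the paper: each node gets an $O(\log n)$-bit cluster name plus an $X$-bit, and a radius-$(3t+2)$ verifier locally checks weak diameter, cost ratio, and the absence of non-$X$ pairs from different clusters at distance at most $2$. The differences are minor. You name each cluster by its maximum identifier and have every node check that this center is within distance $t$, so the global label classes are themselves the clusters; the paper instead uses arbitrary names in $[n]$ and a triangle-inequality consistency argument for the locally defined clusters. You also spell out the reduction of two separation to the distance-$2$ condition, which the paper's soundness proof uses implicitly.
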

    
The proof of \Cref{lem:tsminorcompute} is deferred to \Cref{sec:app_minor}. We are now ready to prove the main theorem.

\begin{proof}[Proof of \Cref{thm:main_minor_free}]
    Recall that by \Cref{lem:tstradeoff}, it suffices to show that $\cG$ has a $(t,\Theta(1/t))$-TS partition, and that it can be certified and outputted by an $O(t)$-PLS with cost $O(\log{n})$. The theorem follows immediately by combining \Cref{cor:minor_free_ts} and \Cref{lem:tsminorcompute}.
\end{proof}

As an immediate corollary, we can obtain a weaker version of \Cref{thm:general_main} for general graphs, which costs an additional logarithmic additive factor, using the following theorem of Bartal.

\begin{theorem}[\cite{Bar96}]
\label{thm:bartal}
    Every $n$-vertex graph admits an  $(O(\log{n}),O(1))$-padded decomposition scheme.
\end{theorem}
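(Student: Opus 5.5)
We would prove this with the exponential-shift clustering of Miller, Peng and Xu, which is a clean substitute for Bartal's original truncated-exponential ball carving. Fix $\Lambda>0$. If $\Lambda$ is at least the diameter of $G$, the trivial partition $\{V\}$ is weakly $\Lambda$-bounded and padded with probability $1$. So assume $\Lambda\le n$, and note that we may also assume $\Lambda\ge 4$ or so, since otherwise all relevant balls are single vertices. Set $\lambda=c\ln n/\Lambda$ for a suitable constant $c$, and let each vertex $u$ draw an independent shift $\delta_u\sim\mathrm{Exp}(\lambda)$. Assign every $v$ to the center $u$ minimizing $\dist_G(u,v)-\delta_u$, breaking ties by identifier. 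Let $\mathcal{E}$ be the event $\max_u\delta_u<\Lambda/4$, and let $\mathcal{D}$ be the resulting distribution conditioned on $\mathcal{E}$.

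The first step is bounded diameter. On $\mathcal{E}$, every $v$ satisfies $\dist(u,v)-\delta_u\le \dist(v,v)-\delta_v\le 0$ for its chosen center $u$, so $\dist(u,v)\le\delta_u<\Lambda/4$. Each cluster therefore lies in a ball of radius $\Lambda/4$ around its center, giving weak diameter at most $\Lambda/2\le\Lambda$. Every partition in ${\rm supp}(\mathcal{D})$ is thus weakly $\Lambda$-bounded. A union bound gives $\Pr[\neg\mathcal{E}]\le n e^{-\lambda\Lambda/4}=n^{1-c/4}\le n^{-2}$ for $c=12$.

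The second step is the padding bound before conditioning. Fix $z$ and $\rho=\gamma\Lambda$, and let $X_u=\delta_u-\dist(u,z)$. If the largest value of $X_u$ exceeds the second largest by more than $2\rho$, then every $w\in B_\rho(z)$ has the same argmax, since each $\delta_u-\dist(u,w)$ moves by at most $\rho$. Hence $B_\rho(z)\subseteq C_z$ in that case. By the standard memorylessness argument, the gap between the top two values among $\{X_u\}$ dominates an $\mathrm{Exp}(\lambda)$ variable: condition on the second-largest value and on the identity of the maximizer, and the excess of the maximizer is exponential. This gives $\Pr[B_\rho(z)\subseteq C_z]\ge e^{-2\lambda\rho}=e^{-2c\ln n\cdot\gamma}$, i.e.\ $\beta_0=2c\ln n=O(\log n)$.

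The main obstacle is transferring this bound through the conditioning on $\mathcal{E}$, which we need so that every partition in the support is bounded and not just most of them. We have
\[
\Pr[\,\mathrm{pad}\mid\mathcal{E}\,]\;\ge\;\Pr[\mathrm{pad}]-\Pr[\neg\mathcal{E}]\;\ge\; e^{-\beta_0\gamma}-n^{-2}.
\]
If $\gamma\Lambda<1$, the ball is $\{z\}$ and the padding probability is $1$. Otherwise $\gamma\ge 1/\Lambda\ge 1/n$, and for $\gamma\le\delta=O(1)$ we aim for $e^{-\beta_0\gamma}-n^{-2}\ge e^{-2\beta_0\gamma}$. This follows from $e^{-\beta_0\gamma}(1-e^{-\beta_0\gamma})\ge e^{-\beta_0\delta}\cdot\Omega(\min(1,\beta_0\gamma))\ge n^{-O(c\delta)}\cdot\Omega(\log n/n)$. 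That quantity exceeds $n^{-2}$ provided $\delta$ is a small enough constant relative to $c$; if needed, we take $c$ larger to shrink $\Pr[\neg\mathcal{E}]$ correspondingly. We then obtain a $(2\beta_0,\delta,\Lambda)$-padded decomposition with $2\beta_0=O(\log n)$ and $\delta=O(1)$, for every $\Lambda$, which is the claimed scheme.
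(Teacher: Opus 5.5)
Your proof is correct, and it takes a different route from the paper. The paper gives no argument here; it simply imports the statement from \cite{Bar96}, which is Bartal's sequential carving with truncated-exponential radii. There the truncation bounds every cluster by construction, so no conditioning step is needed.

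You instead use Miller--Peng--Xu exponential shifts. The memoryless gap argument gives the padding bound directly in the exponential form $e^{-\beta\gamma}$ that the paper's definition requires. It is fully rigorous once you condition on all shifts $\delta_w$, $w\neq u$, and on $u$ being the maximizer. Then the excess of $\delta_u$ over the threshold $\dist_G(u,z)+\max_{w\neq u}X_w$ is $\mathrm{Exp}(\lambda)$ when the threshold is nonnegative, and stochastically larger otherwise. As a bonus, each cluster contains shortest paths to its center, so the clusters are connected.

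The price is the conditioning on $\mathcal{E}$. You correctly identify it as necessary, because the definition requires \emph{every} partition in the support to be weakly $\Lambda$-bounded. It doubles $\beta$ and forces $\delta$ to be a small absolute constant. Your inequality needs roughly $2c\delta+1\le c/4-1$, so $\delta<1/8$ no matter how large $c$ is; $c=12$, $\delta=1/24$ works. This is harmless, since the paper only uses $\gamma=2/t$ with $t=\Omega(1)$.

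Two loose ends are cosmetic:
\begin{itemize}
\item The reduction to $\Lambda\le n$ via the trivial partition presumes $G$ is connected. Otherwise, when $\Lambda>n$, take the partition into connected components; the paper assumes connectivity anyway.
\item The assumption $\Lambda\ge 4$ is never used, since balls with $\gamma\Lambda<1$ are already covered by your separate case.
\end{itemize}
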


Again, by \Cref{lem:tstradeoff} combined with \Cref{lem:padded_ts}, \Cref{lem:tsminorcompute} and \Cref{thm:bartal}, we obtain the following.

\begin{corollary}
        Let $(\cG,\cI)$ be a configuration family, and $P$ a predicate on $(\cG,\cI)$. For $G \in \cG$ with $|V(G)| = n$, if there exists a $1$-PLS for $P$ with cost $p$, then for any $t \geq 1$ there exists an $O(t\log{n})$-PLS for $P$ with cost $O(\lceil p/t \rceil+\log{n})$.
\end{corollary}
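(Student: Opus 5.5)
The corollary follows by chaining results already stated, with the parameter of the TS partition rescaled so that the diameter becomes $O(t\log n)$ while the cost ratio is $O(1/t)$. Fix $t\ge 1$ and set $t' = \lceil t\log n\rceil$.

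\emph{Step 1: existence of a TS partition.} By \Cref{thm:bartal}, every $n$-vertex graph admits an $(O(\log n),O(1))$-padded decomposition scheme, so we may take $\beta = c_0\log n$ for a suitable constant $c_0$. Applying \Cref{lem:padded_ts} with diameter parameter $t'$ yields a $(t',2\beta/t')$-TS partition. Since $2\beta/t' \le 2c_0\log n/(t\log n) = O(1/t)$, every $G\in\cG$ admits a $(t',O(1/t))$-TS partition.

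A small technical point: \Cref{lem:minor_free_ts_existance} requires $t'=\Omega(1)$, and the padding guarantee is only available for $\gamma\le\delta$ with $\gamma = 2/t'$. This is handled in either of two ways:
\begin{itemize}
\item for $n$ larger than a constant, $t'\ge\log n$ is large enough;
\item for small $n$ (or small $t'$), use the trivial partition consisting of the single cluster $V$ with $X=\emptyset$. This is valid because $G$ is connected (as assumed in the preliminaries), so its weak diameter is at most $n-1 = O(t')$, and the cost ratio is $0$. Alternatively, simply enlarge $t'$ by a constant factor.
\end{itemize}
Also, if $\varepsilon = 2\beta/t' \ge 1$, then $\lceil \varepsilon p\rceil = \Theta(p) = O(\lceil p/t\rceil)$ fails only when $t$ is large, which contradicts $\varepsilon\ge 1$. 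So in that regime $t = O(1)$, the original $1$-PLS already has cost $O(\lceil p/t\rceil)$, and we just use it directly.

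\emph{Step 2: certification.} By \Cref{lem:tsminorcompute}, an $O(t')$-PLS with cost $O(\log n)$ certifies and outputs a $(t',O(1/t))$-TS partition.

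\emph{Step 3: the tradeoff.} Plug this into \Cref{lem:tstradeoff} with diameter $t'$, $\varepsilon = O(1/t)$ and $c = O(\log n)$. This gives an $O(t') = O(t\log n)$-PLS for $P$ with cost
\[
O(\lceil p/t\rceil + \log n).
\]

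The only care needed is bookkeeping the parameter substitution $t\mapsto t\log n$ and the small-parameter edge cases noted above; there is no genuine obstacle.
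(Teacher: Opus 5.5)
Your proposal is correct and follows the paper's argument. The paper likewise chains Bartal's padded decomposition (\Cref{thm:bartal}) into \Cref{lem:padded_ts} with diameter parameter $\Theta(t\log n)$, certifies the resulting partition via \Cref{lem:tsminorcompute} at cost $O(\log n)$, and applies \Cref{lem:tstradeoff}; your handling of the $\gamma\le\delta$ requirement and of the $\varepsilon\ge 1$ (i.e.\ $t=O(1)$) regime is sound and just spells out edge cases the paper leaves implicit.
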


\section{Tradeoff Theorem for General Graphs}
\label{sec:general_graphs}

Our goal in this section is to show a near resolution of the Tradeoff Conjecture in general graphs, up to a single multiplicative logarithmic factor, and with only constant additive factors. In particular, the cost of this scheme converges to a constant number of bits as $t$ grows.  

\thmgeneral*

\subsection{Helpful Sequential Algorithm for Finding a \texorpdfstring{$(O(t\log{n}),1/t)$}{(O(t log n), 1/t)}-TS Partition}
Before proving \Cref{thm:general_main}, we describe and analyze a sequential randomized algorithm $\cA$ on a graph $G$, which obtains a $(O(t\log{n}),1/t)$-TS partition with high probability, and is later used as part of the labeling scheme. For ease of notation, we treat $\log{n}$ as an integer throughout the section, and note that this assumption can be easily removed using appropriate rounding functions.

\paragraph*{Description of algorithm $\cA$:}
Let $v_1,\dots,v_n$ be an ordering of $V$ according to lexicographic order. Initially, we initialize our set of clusters as $\cC = \emptyset$, and the set of active nodes as $\alive_1 = V$. We iterate over the nodes according to the ordering. In step $i \geq 1$, we check whether $\alive_i \cap B_{2t\log{n}}(v_i) = \emptyset$, and if so, we continue to step $i+1$. Otherwise, we choose a random even integer $r_i \in \{2t\log{n}+2,2t\log{n}+4,\dots,8t\log{n}\}$, and define $C_i = B_{r_i}(v_i)\cap \alive_i$. We note that $C_i$ does not necessarily contain $v_i$, and also might be disconnected. We add $C_i$ to $\cC$ if 
\begin{equation}\label{eq:general_expands}
   \frac{|B_{r_i}(v_i)\cap \alive_i|}{|B_{r_i-2}(v_i)\cap \alive_i|} \leq 1+\frac{1}{t}~.
\end{equation}
If $C_i$ was added to $\cC$, we set $\alive_{i+1} = \alive_{i} \setminus C_i$, set $X_i = C_i \cap B_2(\alive_i \setminus C_i)$. Otherwise, we set $X_i = \emptyset$ and $\alive_{i+1} = \alive_{i}$. Following this, we continue to step $i+1$. We conclude at the end of step $n$, i.e. after the final vertex $v_n$ is processed. Let $X = \bigcup_{i=1}^n X_i$. We output $(\cC,X)$ as our TS partition. 

\paragraph*{Analysis of algorithm $\cA$:} In the analysis, our goal is to show that with some high probability the resulting $(\cC,X)$ is an $(O(t\log{n}),1/t)$-TS partition of $G$. Specifically, we need to show that $\cC$ is a partition of $V$ (i.e., every node is assigned to exactly one cluster in $\cC$), that the clusters are weakly $t$-bounded, have cost ratio at most $1/t$, and are two separated. We first show that $\alive_{n+1} = \emptyset$, which implies that $\cC$ is a partition. Following this, we show that conditioned that $\cC$ is a partition, then $(\cC,X)$ is a $(O(t\log{n}),1/t)$-TS partition. 

We next show that in each step $i$, if we have some node in $\alive_i$ in the vicinity of $v_i$ (i.e. $\alive_i \cap B_{2t\log{n}}(v_i) \neq \emptyset$), then with some constant probability we create a new cluster in $\cC$ that contains these nodes. We later argue that since each node has $\Theta(t\log{n})$ nodes in its $\Theta(t\log{n})$-neighborhood (unless the graph is smaller than $\Theta(t\log{n})$), then with high probability each node is added to a cluster in some step, and the correctness of the algorithm follows.

\begin{lemma}
\label{lem:general_good_cluster}
    If at the start of step $i$ we have $\alive_i \cap B_{2t\log{n}}(v_i) \neq \emptyset$, then Eq.~(\ref{eq:general_expands}) holds with probability at least $1/2$ over the choice of $r_i \in [2t\log{n}+2,8t\log{n}]$. 
\end{lemma}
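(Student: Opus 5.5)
The plan is a counting argument over the possible even radii. We have $r_i = 2t\log n + 2j$ for $j = 1,\dots,J$, where $J = 3t\log n$, each value equally likely. Write $a_j = |B_{2t\log n + 2j}(v_i)\cap \alive_i|$ for $j = 0,1,\dots,J$. These are nondecreasing in $j$. Call the radius $r_i = 2t\log n + 2j$ \emph{bad} if Eq.~(\ref{eq:general_expands}) fails for it, that is, if $a_j > (1+1/t)\,a_{j-1}$. It then suffices to show that at most $J/2$ indices $j\in[J]$ are bad.

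The first step is to check that the ratios are well defined. By assumption $\alive_i \cap B_{2t\log n}(v_i) \neq \emptyset$, so $a_0 \ge 1$, and hence every denominator $a_{j-1}$ with $j \ge 1$ is positive. This is exactly why the minimal radius is $2t\log n + 2$ rather than $2$.

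The second step is a telescoping product. Since $a_j \ge a_{j-1}$ for all $j$ and $a_j > (1+1/t)a_{j-1}$ at every bad index, we get
\[ n \ge a_J = a_0\prod_{j=1}^{J}\frac{a_j}{a_{j-1}} > (1+1/t)^{b}, \]
where $b$ is the number of bad indices. Suppose $b > J/2 = \tfrac32 t\log n$. Using $(1+1/t)^t \ge 2$ for $t\ge 1$, this gives $(1+1/t)^b \ge 2^{b/t} > 2^{1.5\log n} = n^{1.5} \ge n$, a contradiction. Therefore $b \le J/2$, and the probability of a good radius is at least $1 - b/J \ge 1/2$.

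The only delicate point is the constant bookkeeping. The number of available radii must be at least twice the maximal number of expanding steps, which is about $t\log n$ with logs base $2$. The chosen range provides $3t\log n$ radii, which leaves slack. So I expect no real obstacle beyond being careful that $\log$ is base $2$ (or adjusting constants) and that $a_0 \ge 1$ is used.
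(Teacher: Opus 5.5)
Your proof is correct and is essentially the paper's argument. The paper chains the strict factor $(1+1/t)$ across bad even radii, using monotonicity of balls and $|B_{2t\log n}(v_i)\cap \alive_i|\ge 1$, until the count exceeds $n$; the only difference is bookkeeping, since the paper gets the sharper bound $|\Bad_i|\le t\log n$ (success probability at least $2/3$), while your telescoping with threshold $J/2$ gives exactly the $1/2$ the lemma requires.
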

\begin{proof}
    Fix a step $i$, and denote 
    \[\Bad_i = \{r \in \{2t\log{n}+2,2t\log{n}+4,\dots,8t\log{n}\} \mid r \text{ does not satisfy } Eq.~(\ref{eq:general_expands}) \text{ in step $i$}\}.\] Assume by contradiction that $|\Bad_i| \geq t\log{n}+1$, and let  $\{r'_1,\dots,r'_{t\log{n}+1}\} \subseteq \Bad_i$, ordered in ascending order. Therefore, for all $j \in [2,t\log{n}+1]$ we have that
    \[|B_{r'_{j}}(v_{i})\cap\alive_{i}|\geq(1+\frac{1}{t})|B_{r'_{j}-2}(v_{i})\cap\alive_{i}|\geq(1+\frac{1}{t})|B_{r'_{j-1}}(v_{i})\cap\alive_{i}|,\]
    where the first inequality holds by definition of $\Bad_i$, and the second inequality holds due to the fact that we considered only even indices, hence $r'_{j-1} \leq r'_{j}-2$, and the fact that a ball can only increase in size as the radius grows. By induction on $j$, we get that
    \[|B_{r'_{t\log{n}+1}}(v_{i})\cap \alive_i| \geq (1+\frac{1}{t})^{t\log{n}}|B_{r'_1}(v_i)\cap \alive_i| \geq (1+\frac{1}{t})^{t\log{n}}|B_{2t\log{n}}(v_i)\cap \alive_i|  > n,\]
    where the last inequality follows by the fact $(1+1/t)^t \geq 2$ for any $t \geq 1$, and by the lemma's assumption that $|B_{2t\log{n}}(v_i)\cap \alive_i| \geq 1$.
\end{proof}

\begin{lemma}
\label{lem:general_alive_empty}
    Assume $n \geq 2t\log{n}$. Then $\Pr(\alive_{n+1} \neq \emptyset) \leq 1/n$. 
\end{lemma}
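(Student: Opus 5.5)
We bound the probability that a fixed node $u$ is never clustered by $1/n^2$, and then take a union bound over all $n$ nodes to get $\Pr(\alive_{n+1}\neq\emptyset)\le 1/n$.

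\textbf{Many trials per node.} Fix $u\in V$. Since the graph is connected and $n\ge 2t\log n$, the ball $B_{2t\log n}(u)$ contains at least $\min(n,2t\log n+1)\ge 2t\log n$ nodes. To see this, take a shortest path from $u$ to a node at distance more than $2t\log n$, if one exists; its first $2t\log n+1$ vertices lie in the ball. If no such node exists, the ball is all of $V$. Let $v_{i_1},\dots,v_{i_m}$ be these nodes in processing order, so $m\ge 2t\log n\ge 2\log n$.

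\textbf{Each trial succeeds with probability at least $1/2$.} Consider a step $i=i_j$ with $u\in\alive_i$. Then $\alive_i\cap B_{2t\log n}(v_i)\ni u$ is nonempty, so a radius $r_i$ is sampled. By \Cref{lem:general_good_cluster}, Eq.~(\ref{eq:general_expands}) holds with probability at least $1/2$. In that case $C_i=B_{r_i}(v_i)\cap\alive_i$ is added to $\cC$. Since $r_i\ge 2t\log n+2>\dist(u,v_i)$, we have $u\in C_i$, so $u$ is removed from the alive set.

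\textbf{Independence across trials.} The radii $r_{i_1},\dots,r_{i_m}$ are drawn independently. The event ``$u$ still alive before step $i_j$'' depends only on the radii of earlier steps, and the conditional probability bound from \Cref{lem:general_good_cluster} holds for every history of those earlier radii. The lemma's hypothesis is guaranteed by $u$ itself, so it does not depend on the history. Chaining conditional probabilities therefore gives
\[\Pr(u\in\alive_{n+1})\le \prod_{j=1}^m \Pr\bigl(\text{step } i_j \text{ fails}\mid u\in\alive_{i_j}\bigr)\le 2^{-m}\le 2^{-2\log n}=n^{-2}.\]
A union bound over the $n$ nodes yields $\Pr(\alive_{n+1}\neq\emptyset)\le n\cdot n^{-2}=1/n$.

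\textbf{Expected obstacle.} The main care point is the conditioning argument: the set $\alive_i$, and hence which radii are bad, depends on earlier random choices. I would handle this by stating \Cref{lem:general_good_cluster} as a bound that holds conditionally on any fixed prefix $r_1,\dots,r_{i-1}$, which is valid because $r_i$ is independent of that prefix. The count of nodes in $B_{2t\log n}(u)$ is where the connectivity assumption and $n\ge 2t\log n$ are used. If $\log n$ is not an integer, rounding only changes constants, so the exponent stays at least $2\log n$.
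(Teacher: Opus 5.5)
Your proposal is correct and follows essentially the same argument as the paper. For a fixed node, each of the at least $2t\log n$ steps that process a vertex of its $2t\log n$-ball removes it with probability at least $1/2$ by \Cref{lem:general_good_cluster}; chaining these conditional probabilities bounds its survival probability by $n^{-2}$, and a union bound over nodes finishes. Your explicit justification of the ball-size bound via connectivity, and of conditioning on the prefix of radii, fills in details the paper leaves as ``easy to see.''
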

\begin{proof}
We consider a node $v$ and bound the probability of $v \in \alive_{n+1}$. Let $S = B_{2t\log{n}}(v)$, denote $k = |S|$, and let $v_{i_1},\dots,v_{i_k}$ be the vertices of $S$ in lexicographic order. Since we assume $n \geq 2t\log{n}$, it is easy to see that $k \geq 2t\log{n}$. When processing each $v_{i_j}$ for $j \in [k]$, if $v$ is active in this step (i.e. $v \in \alive_{i_j}$), then by \Cref{lem:general_good_cluster}, $v$ is still active at the end of the step (i.e. $v \in \alive_{i_j+1}$) with probability at most $1/2$. Therefore, 
    \[\Pr(v \in \alive_{n+1}) \leq \prod_{j=1}^k \Pr(v \in \alive_{i_j+1} \mid v \in \alive_{i_j}) \leq (1/2)^{k} \leq (1/2)^{2t\log{n}} \leq \frac{1}{n^2}.\]
    and $\Pr(\alive_{n+1} \neq \emptyset) \leq 1/n$ follows by the union bound on all vertices $v \in V$.
\end{proof}

Next, we show that if $\cC$ is a partition of $V$, then it is an $(O(t\log{n}),1/t)$-TS partition.

\begin{lemma}
\label{lem:general_if_partition_then_ts}
    If $\alive_{n+1} = \emptyset$, then $(\cC,X)$ is a $(16t\log{n},1/t)$-TS partition.
\end{lemma}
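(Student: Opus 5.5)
Assuming $\alive_{n+1}=\emptyset$, the plan is to check the three properties of \Cref{def:ts_partition_labeling}'s underlying notion, \Cref{def:ts_partition}, for the output $(\cC,X)$. The most economical route is to reduce to \Cref{lem:ds_implies_ts}. Let $C_{i_1},\dots,C_{i_k}$ be the clusters actually added, in the order they were added, and re-index them as $C_1,\dots,C_k$. Since $\alive_{n+1}=\emptyset$ and each added cluster is removed from the active set, these clusters are pairwise disjoint and cover $V$. Empty clusters cannot arise: an added cluster satisfies Eq.~(\ref{eq:general_expands}), and its denominator is a nonempty intersection. Indeed, $\alive_i\cap B_{2t\log n}(v_i)\neq\emptyset$ and $r_i-2\ge 2t\log n$, so $B_{r_i-2}(v_i)\cap\alive_i\neq\emptyset$. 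Hence $\cC$ is an ordered partition of $V$.

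\textbf{Weak diameter.} Each added cluster satisfies $C_i\subseteq B_{r_i}(v_i)$ with $r_i\le 8t\log n$. By the triangle inequality through $v_i$, any two vertices of $C_i$ are at distance at most $16t\log n$, even when $C_i$ is disconnected or does not contain $v_i$. So $\cC$ is weakly $16t\log n$-bounded.

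\textbf{Cluster-degeneracy.} At the step $i$ where $C_i$ is added, we have $\alive_i = V\setminus\bigcup$ (earlier added clusters). So $\alive_i\setminus C_i$ is exactly the union of the later clusters, and the close-to-boundary set of $C_i$ equals $X_i=C_i\cap B_2(\alive_i\setminus C_i)$. We bound this set as follows. Any $u\in C_i$ with $\dist(v_i,u)\le r_i-2$ has all of $B_2(u)\subseteq B_{r_i}(v_i)$. Hence every active vertex within distance $2$ of $u$ lies in $B_{r_i}(v_i)\cap\alive_i=C_i$, and so $u\notin X_i$. Therefore
\[X_i\subseteq (B_{r_i}(v_i)\setminus B_{r_i-2}(v_i))\cap\alive_i .\]
The size of the right-hand side is $|B_{r_i}(v_i)\cap\alive_i|-|B_{r_i-2}(v_i)\cap\alive_i|$. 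By Eq.~(\ref{eq:general_expands}) this is at most
\[\Bigl(1-\tfrac{1}{1+1/t}\Bigr)|C_i|=\tfrac{1}{t+1}|C_i|\le |C_i|/t.\]
Thus $\cC$ has $(1/t)$-cluster-degeneracy.

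\textbf{Conclusion.} \Cref{lem:ds_implies_ts} now gives a $(16t\log n,1/t)$-TS partition with separating set $\bigcup_i\bigl(C_i\cap B_2(V_{>i})\bigr)$. By the identification $\alive_i\setminus C_i=V_{>i}$ above, this set is exactly the algorithm's $X=\bigcup_i X_i$; the steps that add no cluster contribute $X_i=\emptyset$. So the output pair $(\cC,X)$ itself is the TS partition.

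The main point needing care is the bookkeeping between the algorithm's step indices and the ordered partition. We must make sure the "later clusters" in the degeneracy definition coincide with $\alive_i\setminus C_i$, which uses $\alive_{n+1}=\emptyset$. We must also confirm that the ring containment holds even though $C_i$ is only the active part of the ball.
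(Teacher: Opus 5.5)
Your proposal is correct and follows essentially the paper's argument: the same weak-diameter bound via $C_i\subseteq B_{r_i}(v_i)$ with $r_i\le 8t\log n$, and the same ring bound $|X_i|\le|B_{r_i}(v_i)\cap\alive_i|-|B_{r_i-2}(v_i)\cap\alive_i|\le|C_i|/t$ from Eq.~(\ref{eq:general_expands}). The only difference is how two separation is obtained. The paper re-proves it directly with a shortest-path argument, which is the same argument as in the proof of \Cref{lem:ds_implies_ts}. You instead invoke \Cref{lem:ds_implies_ts}, after identifying $\alive_i\setminus C_i$ with the union of the later clusters and checking that the separating set built in that lemma's proof is exactly the algorithm's $X$. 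This is legitimate, and your explicit proof of the ring containment $X_i\subseteq(B_{r_i}(v_i)\setminus B_{r_i-2}(v_i))\cap\alive_i$ fills in a step the paper leaves implicit.
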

\begin{proof}
    Recall that if $C_i$ is added to $\cC$, then it is non-empty, and that $C_i = B_{r_i}(v_i) \cap \alive_i$. Since $r_i \leq 8t\log{n}$, it follows that $\weakdiam(C_i) \leq 16t\log{n}$. Next, we show that the cost ratio and two separation conditions hold. The cost ratio follows by: 

        \begin{flalign*}
|X_{i}| & =|C_{i}\cap B_{2}(\alive_{i}\setminus C_{i})|\\
 & \le|B_{r_{i}}(v_{i})\cap\alive_{i}|-|B_{r_{i}-2}(v_{i})\cap\alive_{i}|\\
 & \le(1+\frac{1}{t})\cdot|B_{r_{i}-2}(v_{i})\cap\alive_{i}|-|B_{r_{i}-2}(v_{i})\cap\alive_{i}|\\
 & =\frac{|B_{r_{i}-2}(v_{i})\cap\alive_{i}|}{t}\le\frac{|C_{i}|}{t}~.
\end{flalign*}
We note that indeed $X_i = C_i \cap X$, since $X_i \subseteq C_i$ and the clusters are disjoint. Finally, we show the two separation property.

Assume for the sake of contradiction that the two separation property fails. Let $P_{u,v}$ be the shortest path between two vertices $u,v\notin X$ that belong to different clusters in $\cC$ such that $P_{u,v}$ does not have two consecutive vertices from $X$. By minimality $\dist_G(u,v)\le 2$ (as otherwise we could have taken a shorter path). Let $C_i,C_j\in\cC$ such that $u\in C_i$, $v\in C_j$, and suppose w.l.o.g. that $i < j$. As $\dist_G(u,v)\le|P_{u,v}|\le 2$, it follows that $u\in C_i \cap B_2(\alive_j)$, and thus $u\in X$, a contradiction.

\end{proof}

Combining \Cref{lem:general_alive_empty} and \Cref{lem:general_if_partition_then_ts}, we immediately get the following as a corollary.

\begin{corollary}
\label{cor:general_alg_is_ts}
    Assume $n \geq 2t\log{n}$. Then $(\cC,X)$ is a $(16t\log{n},1/t)$-TS partition with probability at least $1-(1/n)$. 
\end{corollary}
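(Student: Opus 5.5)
The corollary is a direct combination of \Cref{lem:general_alive_empty} and \Cref{lem:general_if_partition_then_ts}, so the plan is a short probabilistic bookkeeping argument. Define the event $\mathcal{E} = \{\alive_{n+1} = \emptyset\}$ over the randomness of the radii $r_1,\dots,r_n$ chosen by $\cA$. Under the hypothesis $n \ge 2t\log n$, \Cref{lem:general_alive_empty} gives $\Pr(\neg\mathcal{E}) \le 1/n$, and hence $\Pr(\mathcal{E}) \ge 1 - 1/n$.

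Next, I will argue that on $\mathcal{E}$ the output $(\cC,X)$ is a $(16t\log n, 1/t)$-TS partition. This is exactly \Cref{lem:general_if_partition_then_ts}, once we check that $\cC$ is a partition of $V$. The clusters added are pairwise disjoint, since each $C_i \subseteq \alive_i$ and $C_i$ is then removed from $\alive_{i+1}$. Each added cluster is non-empty, since $|C_i| \ge |B_{r_i-2}(v_i)\cap\alive_i| \ge |B_{2t\log n}(v_i)\cap \alive_i| \ge 1$. By induction, $\alive_{i} = V \setminus \bigcup_{j<i} C_j$, so $\alive_{n+1}=\emptyset$ means the clusters cover $V$. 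The remaining properties (weak diameter at most $16t\log n$, cost ratio at most $1/t$, and two separation) are supplied by that lemma's conclusion. I also note that \Cref{lem:general_if_partition_then_ts} is a deterministic statement about any run of $\cA$ satisfying its hypothesis, so it applies pointwise on every outcome in $\mathcal{E}$.

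The event that $(\cC,X)$ is a $(16t\log n,1/t)$-TS partition therefore contains $\mathcal{E}$, and so has probability at least $1-1/n$. There is no real obstacle here. The only point requiring care is the observation just made: the second lemma holds for every realization with $\alive_{n+1}=\emptyset$, rather than only in expectation, so no conditioning subtleties arise.
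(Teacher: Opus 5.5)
Your proposal is correct and follows the paper, which obtains the corollary in one line. It combines the bound $\Pr(\alive_{n+1}\neq\emptyset)\le 1/n$ from \Cref{lem:general_alive_empty} with the deterministic \Cref{lem:general_if_partition_then_ts}, applied on every outcome with $\alive_{n+1}=\emptyset$. Your extra check that the added clusters are nonempty, disjoint, and cover $V$ is already part of that lemma's conclusion, so it is harmless but redundant.
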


This concludes the description and analysis of $\cA$. We call a randomness string $R$ \emph{good} w.r.t. graph $G$ if by running $\cA$ using randomness $R$, we have that the resulting clusters $(\cC,X)$ are a $(16t\log{n},1/t)$-TS partition. In particular, \Cref{cor:general_alg_is_ts} implies that for any graph $G$, at least a $1-(1/n)$ fraction of the randomness strings are good w.r.t. $G$. 

\paragraph*{Encoding the randomness string of $\cA$ and reducing its size:} Let us consider the randomness used by $\cA$, and its encoding. Naively, we can represent it as a choice of $n$ even indices $r_i \in [2t\log{n}+2,8t\log{n}]$ for each $i \in [n]$. However, in our labeling scheme we run a sort of simulation of $\cA$ in a distributed manner, where nodes have access to the shared randomness string. The naive representation does not lend itself well to that sort of distributed implementation, as it requires knowledge of the global ordering of node identifiers (i.e., knowing the lexicographic rank of each node $u$ among all identifiers). Instead, we view the randomness as a function that assigns each possible identifier a random index, namely $R : [\mathrm{poly}(n)] \rightarrow [2t\log{n}+2,8t\log{n}]$. With this representation, a node $u$ can locally compute its random index from its identifier alone. The drawback is that describing $R$ explicitly requires $\mathrm{poly}(n)$ random bits. We therefore apply a randomness reduction argument to reduce it to $O(\log{n})$ bits, which we show later can be shared among all nodes using only $O(1)$ label bits. 

Let $\cG_n$ be the family of all graphs of size $|V(G)| = n$. Next, we prove using a standard argument that for graphs in $\cG_n$ there exists a set of randomness strings $\{R_1,\dots,R_{n^2}\}$ such that for each $G \in \cG_n$ there exists an $R_i$ that is a good randomness string for $G$.\footnote{In the literature, this type of argument is often referred to as ``Newman's trick'' \cite{N91}. It was originally used in communication complexity to show that public coin protocols can be simulated by private coin protocols, and as a general method to reduce randomness.}

\begin{lemma}
\label{lem:general_good_randomnesss}
    For any integer $n$, there exists a set $F_n:[n^2] \rightarrow \{0,1\}^*$ such that for any graph $G \in \cG_n$ there exists an index $i \in [n^2]$ such that $F_n(i)$ is a good randomness string for $G$. 
\end{lemma}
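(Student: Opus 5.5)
We will use a Newman-style probabilistic argument over the randomness space. The randomness of $\cA$ is a function $R:[\mathrm{poly}(n)]\rightarrow\{2t\log{n}+2,\dots,8t\log{n}\}$ assigning a radius to each possible identifier. Assuming $n\ge 2t\log n$, \Cref{cor:general_alg_is_ts} shows that for every fixed $G\in\cG_n$, a uniformly random $R$ is good for $G$ with probability at least $1-1/n$. (For $n<2t\log n$ we will handle the degenerate case separately, e.g.\ by taking the single cluster $V$, whose weak diameter is at most $n<16t\log n$, with $X=\emptyset$, so any $R$ may be declared good.) The difficulty is that $\cG_n$, once identifiers from a polynomial range are included, has size roughly $2^{\Theta(n^2)}\cdot \mathrm{poly}(n)^n$. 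A union bound over $\cG_n$ with only $n^2$ samples therefore cannot work directly with per-graph failure probability $1/n$.

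We therefore plan to amplify first. Draw $R_1,\dots,R_{n^2}$ independently and uniformly. For a fixed $G$, the probability that none of them is good is at most $(1/n)^{n^2}=2^{-n^2\log n}$, since the draws are independent and each fails with probability at most $1/n$.

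Next we count configurations. A graph on $n$ labeled vertices with identifiers from $[n^c]$ is determined by its identifier set, contributing at most $n^{cn}$ choices, and its edge set, contributing at most $2^{\binom n2}$ choices. So $|\cG_n|\le 2^{n^2/2+cn\log n}$.

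A union bound then gives a failure probability of at most $2^{n^2/2+cn\log n-n^2\log n}<1$ for $n$ larger than a constant. Hence some fixed choice of $R_1,\dots,R_{n^2}$ works for all $G\in\cG_n$ simultaneously, and we set $F_n(i)=R_i$. Small $n$ can be covered trivially by enlarging constants, or by noting that finitely many graphs exist and each has some good string.

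The main obstacle is exactly this counting step: the per-graph failure probability must beat $|\cG_n|$, which requires $n^2$ independent strings and the bound $(1/n)^{n^2}$. We must also make sure that goodness of $R$ depends only on $G$ together with its identifiers, which holds because $\cA$ is deterministic given $R$ and the lexicographic order. Finally, we must check that identifiers come from a polynomial range, so that the $n^{cn}$ factor is absorbed.
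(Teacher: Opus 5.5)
Your proposal is correct and is essentially the paper's argument: Newman's trick with $n^2$ independent draws, a per-graph failure probability of at most $(1/n)^{n^2}$ from \Cref{cor:general_alg_is_ts}, and a union bound over $\cG_n$. The one difference is that you union-bound over graphs together with their identifier sets ($2^{n^2/2+cn\log n}$ configurations) rather than the paper's $2^{\binom{n}{2}}$; this is the right count for the identifier-indexed form $R:[\mathrm{poly}(n)]\to\{2t\log n+2,\dots,8t\log n\}$ in which the scheme later uses $F_n(i)$, whereas the paper's count only covers its $O(n\log n)$-bit, rank-indexed strings. Two small points on the edge cases: for the finitely many $n$ below your constant threshold, your fallback gives a constant-size index set rather than $[n^2]$, which is all the scheme needs; and in your degenerate case no ``declaring'' is needed, since for connected $G$ with $n<2t\log n$ the algorithm $\cA$ itself outputs $(\{V\},\emptyset)$ for every $R$.
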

\begin{proof}
 We notice that algorithm $\cA$ uses for graphs in $\cG_n$ at most $O(n\log{n})$ random bits. Choosing a randomness string $R$ uniformly at random from $\{0,1\}^{O(n\log{n})}$, we have that for a fixed $G \in \cG_n$ that
    \[\Pr(R \text{ is good randomness for } G) \geq 1-(1/n).\]
    Therefore, if we choose $n^2$ strings $R_1,\dots,R_{n^2} \in \{0,1\}^{O(n\log{n})}$ uniformly at random, we have that  
    \[\Pr(\exists_i R_i \text{ is good randomness for } G) \geq 1-(1/n)^{n^2}.\]
    By union bound over all graphs in $\cG_n$ we have that
    \[\Pr(\forall_{G \in \cG_n}\exists_i R_i \text{ is good randomness for } G) \geq 1-2^{\binom{n}{2}} \cdot (1/n)^{n^2} > 0.\]
    Therefore, there exists a set of $n^2$ randomness strings $R_1,\dots,R_{n^2}$ such that for all $G \in \cG_n$, there exists an index $i \in [n^2]$ such that $R_i$ is a good randomness string for $G$. Setting for $i \in [n^2]$ the function $F_n(i) = R_i$, the claim follows.
\end{proof}

\subsection{Labeling Scheme for General Graphs} Next, we show how to allow nodes to locally simulate $\cA$ given $O(1)$-sized labels. Before showing the labeling scheme, we show for any $r \geq 1$ an $O(r)$-PLS that allows the honest prover to share a string $S$ of size $r$ across all nodes, with cost $O(1)$. Informally, we use this scheme to inform all nodes of an index of a good randomness string for $G$.

\begin{lemma}
\label{lem:shared_string}
    Let $r \geq 1$ be an integer. Then for any $S \in \{0,1\}^{r}$, there exists an $O(r)$-PLS $(\Prov,\Ver)$ that uses $O(1)$ bits per node, such that for the honest prover, all nodes output $S$, and for any other labeling, either all nodes output some shared string $S'$, or at least one node rejects.
\end{lemma}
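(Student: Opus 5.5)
The plan is to spread the bits of $S$ along shortest-path layers from an arbitrary fixed root. The honest prover fixes a node $s$ (say the one with smallest identifier) and gives every node $v$ the value $d(v)=\dist_G(s,v)\bmod 3$ together with one bit $b(v)=S_{(\dist_G(s,v)\bmod r)+1}$. It also gives one more bit $f(v)$ marking whether $v=s$. Every node at distance $j$ from $s$ thus stores the $(j\bmod r)$-th bit of $S$, and the label cost is $O(1)$. If $G$ has too few layers to hold $r$ consecutive bits (its diameter is below $r$), the prover instead marks this with a flag. In that case $s$ stores $S$ spread over its $r$-neighborhood, or nodes simply read $S$ from the layer structure within radius $r$ and wrap around using a repeated pattern; I expect to handle this by letting layers wrap modulo $r$ only when enough layers exist, and otherwise treating the whole graph as visible.

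For verification, each node $v$ first checks the standard mod-$3$ BFS conditions in radius $1$. Either $f(v)=1$ and $d(v)=0$, and exactly one such node is visible at radius $r$; or $v$ has a neighbor with $d\equiv d(v)-1$ and no neighbor differs by anything other than $0,\pm1$. Next, $v$ checks local consistency of the bits: for each neighbor $u$ with $d(u)\equiv d(v)$, require $b(u)=b(v)$. Then $v$ reconstructs a string by following any decreasing chain $v=w_0,w_1,\dots,w_{r-1}$ (each $w_{k+1}$ a neighbor of $w_k$ with $d$ one smaller mod $3$) and reading the bits $b(w_{r-1}),\dots,b(w_0)$. The chain is cut short at the root, and $v$ continues across the outward direction, or uses an arbitrary window of $r$ consecutive layers in its $O(r)$-ball. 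It must then verify that all such windows visible in its $2r$-ball give the same cyclic string, and output its rotation aligned by the layer index mod $r$. Alignment needs the layer index mod $r$, which a node cannot know from $d$ mod $3$ alone. So I would additionally label each node with one bit marking whether its layer index is $\equiv 0 \pmod r$, and check that markers appear exactly every $r$ layers along every descending chain.

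Soundness is then argued as follows. The local checks force consistency between neighbors: if every node accepts, adjacent nodes reconstruct the same aligned string $S'$, since their $2r$-windows overlap on $r$ consecutive layers with consistent markers. Because $G$ is connected (assumed in the preliminaries), all nodes output the same $S'$. Note that global correctness of the BFS is not even needed, only the local monotone structure and consistency, so the mod-$3$ layering serves merely to give descending chains.

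The main obstacle I expect is making the reconstructed window well defined and identical for neighbors when the labeling is adversarial and layers are not true BFS layers. I would first try to avoid BFS altogether. The fallback is a simpler witness: a $1$-bit marker plus a pointer-free periodic encoding along any path, checked so that every node sees a consistent periodic pattern of period $r$ in its $O(r)$-ball.
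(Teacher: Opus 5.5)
Your proposal has a genuine gap, in two places.

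\textbf{First gap: graphs with few layers.} One bit per BFS layer cannot hold $S$ when the layering has fewer than $r$ layers. This is the typical situation, not a corner case. In the application $r$ is the bit length of an index in $[n^2]$, so $r\approx 2\log n$, while stars, complete graphs and hypercubes have smaller diameter; a star has at most three layers from any root.

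Your remedies for this case are not schemes. These are ``$s$ stores $S$ spread over its $r$-neighborhood'', ``read $S$ from the layer structure and wrap around'', and ``treat the whole graph as visible''. With $O(1)$ bits per node you need a way to assign $r$ distinct positions of $S$ to $r$ distinct nodes, and your layering provides only as many positions as it has layers.

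Two ingredients are missing:
\begin{itemize}
\item addressing by identifier rank, i.e., writing $\lex(C,S,\cdot)$ on a set $C$ of about $r$ nodes that the verifier sees entirely (for instance all of $V$ when the diameter is below $r$, given $n\ge r$);
\item a verifiable regime flag. A node claiming the small regime checks $B_r(v)=B_{r-1}(v)$, so it sees every node and every flag.
\end{itemize}
The paper avoids the dichotomy altogether. It carves connected components of $G[V\setminus U]$ of size exactly $r$ such that every node is within distance $r$ of one (\Cref{lem:shared_string_decomp}), and writes $\lex(C,S,\cdot)$ on each. Global agreement then follows from connectivity of the cluster graph $G'(r,U)$ (\Cref{lem:cluster_graph_is_connected}).

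\textbf{Second gap: soundness.} By your own account soundness is left open, and the ingredients you state are not the right invariant.
\begin{itemize}
\item Under an adversarial labeling there are no ``layers''. Non-adjacent nodes with the same $d$ may carry different bits, and the requirement that neighbors differ by $0,\pm1$ modulo $3$ is vacuous.
\item A window that descends to the root and then ``continues outward'' reads $S_3S_2S_1S_2\cdots$, which is not a rotation of $S$. So the output is not well defined.
\end{itemize}

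Here is what does work when the root has eccentricity at least $r-1$ (take $s$ to be an endpoint of a diameter path). Call a window any path $w_0,\dots,w_{r-1}$ with $d(w_{k+1})\equiv d(w_k)-1\pmod 3$ that contains exactly one marked node; the marker fixes its rotation. Node $v$ rejects unless some window lies in $B_{2r}(v)$ and all windows inside $B_{2r+1}(v)$ align to a common string, which it then outputs.
\begin{itemize}
\item Completeness: in the honest labeling, mod-$3$ descent between adjacent nodes forces true BFS descent. Hence every window covers $r$ consecutive layers with exactly one marker, and every node has a window within distance $2r$.
\item Soundness: a window in $B_{2r}(v)$ lies in $B_{2r+1}(u)$ for every neighbor $u$, so neighbors output the same string, and connectivity finishes the argument.
\end{itemize}
With this argument and an identifier-based encoding for the low-diameter regime, your route would become a valid alternative proof. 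As written it is not one.
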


The proof of \Cref{lem:shared_string} is deferred to \Cref{sec:app_general}, and we give a brief sketch of proof. 
\begin{proof}[Proof sketch of \Cref{lem:shared_string}]
    The key idea of the proof is to show the existence of a vertex set $U \subseteq V$ that splits the graph into connected components $C_1,\dots,C_k$ of $G[V \setminus U]$ such that \begin{inparaenum}[(a)] \item $\forall_{i \in [k]} |C_i| = r$, and \item for every $v \in V$ there exists $i \in [k]$ such that $\dist_G(v, C_i) \leq r$.
\end{inparaenum}

Given this decomposition, we encode the string $S$ on each of the connected components $C_1,\dots,C_k$, assigning $O(1)$ bits of $S$ to each node in the component. Therefore, every vertex $v$ can retrieve $S$ from its closest component $C_i$. Moreover, we can guarantee that, as long as $G$ is connected, all nodes output the same string. To achieve this, we define an auxiliary graph $G'$ whose vertices correspond to the clusters, and where two clusters are adjacent in $G'$ if there exists a short path between them in $G$. We then observe that $G'$ is connected, meaning that to ensure all nodes output the same string, it suffices to guarantee that neighboring clusters encode the same string.
\end{proof}

We are now ready to prove \Cref{thm:general_main}. Let $(\Prov_{\mathrm{share}},\Ver_{\mathrm{share}})$ be the labeling scheme for sharing strings described in \Cref{lem:shared_string}. We construct an $O(t)$-PLS $(\Prov_t,\Ver_t)$ as follows.

\paragraph*{Description of $\Prov_t$:} Let $(\cC,X)$ be a $(t,\varepsilon)$-TS partition of $G$. For a randomness string $R:[\mathrm{poly}(n)] \rightarrow [2t\log{n}+2,8t\log{n}]$, let $T_R$ be the following set:
\[T_R = \{v_i \in V \mid C_i \text{ is taken into $\cC$ when running $\cA$ using randomness $R$} \}.\]

Recall that \Cref{lem:general_good_randomnesss} provides a function $F_n:[n^2]\rightarrow \{0,1\}^*$, such that for any graph in $\cG_n$ there exists an index $i \in [n^2]$ such that $F_n(i)$ is good randomness for it. Let $i_G \in [n^2]$ be a good randomness string for running $\cA$ in $G$. Let $R^* = F_n(i_G)$, let $t(v) = 1$ if $v \in T_{R^*}$, and otherwise $t(v) = 0$, and let $x(v) = 1$ if $v \in X$, and otherwise $x(v) = 0$.  

Denote $\var{Shared}(i_G,v) = \Prov_{\mathrm{share}}(G,i_G,v)$, the label given by $\Prov_{\mathrm{share}}$ to node $v$ when sharing the binary encoding of $i_G$ to all nodes. The prover gives the following label to each node $v \in V$.

\[\Prov_t(G,I,v) = \left(\var{Shared}(i_G,v),t(v),x(v) \right).\]

By \Cref{lem:shared_string}, the cost of $\Prov_{\mathrm{share}}$ is $O(1)$ bits, and the size of both $t(v)$ and $x(v)$ is a single bit. Therefore in total, the cost of $\Prov_t$ is $O(1)$ bits. 

\paragraph*{Description of $\Ver_t$:} For convenience, we describe the verifier as a distributed algorithm from the viewpoint of $v$. We assume that the proof labels of any node $v$ are given in a format encoding three strings $(\var{ShareLabel}(v),\var{InT}(v),\var{InX}(v))$, where $\var{InT}(v),\var{InX}(v)$ are single bits, and $\var{ShareLabel}(v)$ is a constant-sized string. Otherwise we have $\Ver_t$ reject.

Before describing the verifier, we define a simple sub-procedure called $\mathrm{FindMyCluster}(v,R)$, running from the viewpoint of node $v$. Node $v$ is given as input a function $R:[\mathrm{poly}(n)] \rightarrow [2t\log{n}+2,8t\log{n}]$ mapping every possible identifier to an even index in that range. The sub-procedure returns the smallest identifier of a node $u \in B_{8t\log{n}}(v)$ satisfying both \begin{inparaenum}[(a)]
    \item $\var{InT}(u) = 1$ and 
    \item $\dist_G(u,v) \leq R(\ID(u))$.
\end{inparaenum} If there is no such node, then $v$ outputs $\bot$. Informally, $\mathrm{FindMyCluster}(v,R)$ simulates algorithm $\cA$ from the viewpoint of $v$, assuming randomness string $R$, and assuming $\var{InT}$ is the set of vertices for which a cluster was taken into $\cC$. Its output is the identifier of $v$'s cluster leader, where $\bot$ represents an error. We prove a few simple facts about $\mathrm{FindMyCluster}$.

    \begin{lemma}
    \label{lem:general_same_cluster_short_dist}
        If for nodes $u,v \in V$ and some randomness string $R$ we have $\mathrm{FindMyCluster}(u,R) = \mathrm{FindMyCluster}(v,R) \neq \bot$, then $\dist_G(u,v) \leq 16t\log{n}$.
    \end{lemma}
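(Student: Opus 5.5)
The argument is a direct triangle-inequality bound. Let $w$ be the node whose identifier is the common output of $\mathrm{FindMyCluster}(u,R)$ and $\mathrm{FindMyCluster}(v,R)$. I will bound $\dist_G(u,w)$ and $\dist_G(v,w)$ separately from the definition of the sub-procedure, and then add the two bounds.

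\textbf{Step 1: identify a common node.} By the definition of $\mathrm{FindMyCluster}$, the output $\mathrm{FindMyCluster}(u,R)\neq\bot$ is the identifier of some node $w\in B_{8t\log{n}}(u)$. This node satisfies $\var{InT}(w)=1$ and $\dist_G(u,w)\le R(\ID(w))$. Since identifiers are unique, the equality $\mathrm{FindMyCluster}(v,R)=\mathrm{FindMyCluster}(u,R)$ means the output for $v$ refers to the very same node $w$. Hence also $w\in B_{8t\log{n}}(v)$ and $\dist_G(v,w)\le R(\ID(w))$.

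\textbf{Step 2: bound each distance by $8t\log{n}$.} The function $R$ maps every identifier into $[2t\log{n}+2,8t\log{n}]$, so $R(\ID(w))\le 8t\log{n}$. Independently, membership in $B_{8t\log{n}}(\cdot)$ already gives the same bound. Either way, $\dist_G(u,w)\le 8t\log{n}$ and $\dist_G(v,w)\le 8t\log{n}$.

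\textbf{Step 3: combine.} The triangle inequality gives
\[
\dist_G(u,v)\le \dist_G(u,w)+\dist_G(w,v)\le 16t\log{n}.
\]

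There is no real obstacle here. The only point needing care is Step 1: the claim that equal outputs refer to the same node $w$, rather than merely to nodes that share a label. This is exactly where the uniqueness of identifiers is used.
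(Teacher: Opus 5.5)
Your proposal is correct and follows the same route as the paper: the common output is the identifier of a single node $w$ lying in $B_{8t\log{n}}(u)\cap B_{8t\log{n}}(v)$, and the triangle inequality through $w$ gives $\dist_G(u,v)\le 16t\log{n}$. Your explicit appeal to the uniqueness of identifiers spells out a step the paper leaves implicit.
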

    \begin{proof}
        Recall that $\mathrm{FindMyCluster}$ is a distributed algorithm that has access to a local neighborhood of radius $8t\log{n}$ and returns an identifier of some node in this neighborhood. Let $w$ be a node such that $\ID(w)=\mathrm{FindMyCluster}(u,R)=\mathrm{FindMyCluster}(v,R)$. Then necessarily $w\in B_{8t\log{n}}(u)\cap B_{8t\log{n}}(v)$, and by the triangle inequality we have $\dist_G(u,v) \leq \dist_G(u,w)+\dist_G(w,v) \leq 16t\log{n}$.
    \end{proof}

    We next formalize the fact that $\mathrm{FindMyCluster}$ outputs the lexicographically smallest ``taken'' center whose ball contains the node.

    \begin{lemma}
\label{lem:findmycluster_min_taken}
Fix a randomness string $R$ and labeling $\var{InT}:V\to\{0,1\}$. For every node $u \in V$,
\[\mathrm{FindMyCluster}(u,R) = \min\left\{\ID(w) \mid\ w \in B_{8t\log{n}}(u),\ \var{InT}(w)=1,\ \dist_G(w,u) \le R(\ID(w))\right\},
\]
where $\mathrm{FindMyCluster}(u,R) = \bot$ if the set is empty.
\end{lemma}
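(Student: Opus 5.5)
This statement is essentially a restatement of the definition of $\mathrm{FindMyCluster}$, so the plan is to unfold that definition and match it against the displayed set. Recall that $\mathrm{FindMyCluster}(u,R)$ inspects exactly the nodes $w\in B_{8t\log{n}}(u)$. It selects those with $\var{InT}(w)=1$ and $\dist_G(w,u)\le R(\ID(w))$, and returns the smallest identifier among them, or $\bot$ if there is none. Call this candidate set $S_u$. The right-hand side of the claimed identity is literally $\min\{\ID(w)\mid w\in S_u\}$, with the convention that it equals $\bot$ when $S_u=\emptyset$. So the proof reduces to checking that the procedure can actually compute $S_u$ and its minimum from the information available to $u$.

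The one step that needs justification is locality: $u$ must be able to evaluate the membership condition for every $w\in B_{8t\log{n}}(u)$ from its $O(t\log{n})$-hop view. I will verify each ingredient in turn.
\begin{itemize}
\item The topology of $B_{8t\log{n}}(u)$, together with identifiers, is visible to $u$, so $\dist_G(w,u)$ is computable. Any path certifying a distance of at most $8t\log{n}$ lies entirely inside that ball.
\item $\var{InT}(w)$ is part of $w$'s label, which $u$ can read.
\item $R(\ID(w))$ is computable because $R$ is a fixed function of identifiers given as input.
\end{itemize}

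Since $R$ maps into $[2t\log{n}+2,8t\log{n}]$, every $w$ satisfying $\dist_G(w,u)\le R(\ID(w))$ already satisfies $\dist_G(w,u)\le 8t\log{n}$. Hence the constraint $w\in B_{8t\log{n}}(u)$ is not restrictive, and the set coincides with $\{w\in V\mid \var{InT}(w)=1,\ \dist_G(w,u)\le R(\ID(w))\}$. I would note this remark explicitly, since it is what lets later arguments compare the procedure with the global run of $\cA$.

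There is no real obstacle here. The only care needed is to state the convention $\min\emptyset=\bot$ and to observe that identifiers are unique, so that the minimum is well defined.
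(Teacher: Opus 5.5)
Your proposal is correct and follows the same route as the paper, which also proves the lemma by unfolding the definition of $\mathrm{FindMyCluster}$. Your extra remarks are correct but not needed for the statement itself. These are the locality check, and the observation that the constraint $w \in B_{8t\log{n}}(u)$ is redundant because $R$ takes values at most $8t\log{n}$; the paper uses the latter implicitly in its later correctness induction.
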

\begin{proof}
This follows by definition of $\mathrm{FindMyCluster}$, which returns the smallest identifier among nodes satisfying the two conditions above.
\end{proof}

We are now ready to describe $\Ver_t$:

\begin{enumerate}[(1)]
    \item Node $v$ runs $\Ver_{\mathrm{share}}$, using $\var{ShareLabel}$ as the labeling and rejects if it rejects. Otherwise, node $v$ obtains some string $i'(v)$, which it interprets as an encoding of a number. By the soundness of $(\Prov_{\mathrm{share}},\Ver_{\mathrm{share}})$, we can assume that unless some node rejects, $i' = i'(v)$ is the same for all nodes, and omit $v$ from the notation for convenience. Node $v$ computes $F_{n}(i') = \widetilde{R}$ where $\widetilde{R}:[\mathrm{poly}(n)] \rightarrow [2t\log{n}+2,8t\log{n}]$.

    \item \label{item:general_ver_findmycluster} Node $v$ rejects if $\mathrm{FindMyCluster}(v,\widetilde{R}) = \bot$.
    
    \item  For every node $u \in B_{16t\log{n}}(v)$, node $v$ runs $\mathrm{FindMyCluster}(u,\widetilde{R})$, and defines
    \[\widetilde{C}_v(v) = \{u \in B_{16t\log{n}}(v) \mid \mathrm{FindMyCluster}(u,\widetilde{R}) = \mathrm{FindMyCluster}(v,\widetilde{R}) \}.\]
    Moreover, for any $w \in B_2(\widetilde{C}_v(v))$ it defines
    \[\widetilde{C}_w(v) = \{u \in B_{16t\log{n}}(w) \mid \mathrm{FindMyCluster}(u,\widetilde{R}) = \mathrm{FindMyCluster}(w,\widetilde{R}) \}.\]
    \item $v$ adds $\widetilde{C}_w(v) \in \Comp(v)$ for every $w \in B_2(\widetilde{C}_v(v))$, and defines for each such $C \in \Comp(v)$ the set $X_C = \{u \in C \mid \var{InX}(u) = 1\}$.
    \item \label{item:general_ver_allprop}Node $v$ asserts that $\widetilde{C}_v(v)$ has weak diameter $\leq 16t\log{n}$, satisfies the cost ratio property, and that any path from $\widetilde{C}_v(v) \setminus X_{\widetilde{C}_v(v)}$ to any node $u \in V \setminus \widetilde{C}_v(v)$ with $\var{InX}(u) = 0$ passes through at least two consecutive nodes $z_1,z_2$ with $\var{InX}(z_1) = \var{InX}(z_2) = 1$.
\end{enumerate}

    \paragraph*{Correctness:} Assume every node $v$ receives the label $\Prov_t(G,I,v)$. We prove that all nodes accept, and output their corresponding part in the TS partition $(\cC,X)$ obtained by running algorithm $\cA$ with randomness string $R^*$.

\begin{lemma}
    Let $C_u$ be the cluster node $u$ belongs to at the end of an execution of algorithm $\cA$ with randomness string $R^*$. $\widetilde{C}_v(v) = C_v$, and for every $u \in B_2(C_v)$ we have $\widetilde{C}_u(v) = C_u$. 
\end{lemma}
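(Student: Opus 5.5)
The plan is to first show that, under the honest labeling, $\mathrm{FindMyCluster}(u,\widetilde{R})$ returns exactly the center of $u$'s cluster in the run of $\cA$ with randomness $R^*$, for every node $u$. Both equalities in the statement then follow from the definition of $\widetilde{C}_w(v)$ and from \Cref{lem:general_same_cluster_short_dist}.

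Under the honest labeling, the shared-string scheme of \Cref{lem:shared_string} makes every node output $i'=i_G$, so $\widetilde{R}=F_n(i_G)=R^*$. Moreover $\var{InT}(w)=1$ exactly when $w\in T_{R^*}$. Recall that $C_i=B_{r_i}(v_i)\cap\alive_i$ with $r_i=R^*(\ID(v_i))$, and that $v_i$ is processed in lexicographic order. Since $R^*$ is good, $\alive_{n+1}=\emptyset$, so each $u$ lies in a unique taken cluster $C_j$.

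\textbf{Claim:} $\mathrm{FindMyCluster}(u,R^*)=\ID(v_j)$, where $v_j$ is the center of the taken cluster $C_j$ containing $u$. I would prove this using \Cref{lem:findmycluster_min_taken}, which characterizes the output as the minimum over the candidate set
\[
W_u=\{w\in T_{R^*} \mid \dist_G(w,u)\le R^*(\ID(w))\}.
\]
The restriction to $B_{8t\log n}(u)$ in that lemma is automatic, since every radius is at most $8t\log n$. The node $v_j$ belongs to $W_u$ because $u\in C_j\subseteq B_{r_j}(v_j)$. Now suppose some $v_i\in W_u$ has $i<j$. Then $u\in\alive_j\subseteq\alive_i$, because nodes only leave the alive set. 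Together with $\dist(v_i,u)\le r_i$ this gives $u\in B_{r_i}(v_i)\cap\alive_i=C_i$, and $C_i$ was taken. Hence $u\in C_i\cap C_j$ with $i\neq j$, contradicting disjointness (once $u$ joins $C_i$ it is removed and cannot join $C_j$). So $v_j$ is the minimum. Note that the identifier order and the processing order coincide, since $\cA$ processes nodes in lexicographic order of identifiers. This step, checking that the local min-identifier rule faithfully simulates the sequential carving including the alive-set bookkeeping, is where the proof needs care; the rest is routine.

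With the claim in hand, fix any $w$ and let $c_w$ be its center. Then
\[
\widetilde{C}_w(v)=\{u\in B_{16t\log n}(w)\mid \text{the center of }u\text{ is }c_w\}=C_w\cap B_{16t\log n}(w).
\]
By \Cref{lem:general_if_partition_then_ts} (or by \Cref{lem:general_same_cluster_short_dist}), every $u\in C_w$ satisfies $\dist(u,w)\le 16t\log n$. Hence $\widetilde{C}_w(v)=C_w$. Applying this with $w=v$ gives $\widetilde{C}_v(v)=C_v$. Then for every $u\in B_2(C_v)=B_2(\widetilde{C}_v(v))$, the same argument with $w=u$ gives $\widetilde{C}_u(v)=C_u$.

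One remaining point is to justify that $v$ can actually evaluate $\mathrm{FindMyCluster}$ at all the nodes involved. These nodes lie within distance $O(t\log n)$ of $v$: distance $2+16t\log n$ to reach them, plus another $8t\log n$ for their own balls. This is within the allotted $O(t\log n)$ verification radius.
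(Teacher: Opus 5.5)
Your proof is correct and follows the paper's approach: you show that $\mathrm{FindMyCluster}(u,R^*)$ is the identifier of the center of $u$'s cluster, then read both equalities off the definition of $\widetilde{C}_w(v)$. The only difference is that you prove this key claim by a direct minimality argument (an earlier taken center whose ball contains $u$ would already have captured $u$, since alive sets only shrink), whereas the paper uses an induction over the processing order that relies on this same fact implicitly. One minor slip in your closing remark: $v$ evaluates $\mathrm{FindMyCluster}$ at nodes of $B_{16t\log n}(w)$ for $w \in B_{16t\log n+2}(v)$, so the needed radius is about $40t\log n+2$ rather than $26t\log n+2$, which is still $O(t\log n)$.
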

\begin{proof}
    Let $v_1,\dots,v_n$ be the nodes of the graph, ordered lexicographically. Recall that 
    \[T_R = \{v_i \in V \mid C_i \text{ is taken into $\cC$ when running $\cA$ using randomness $R$} \},\]
    and that $\var{InT}(v) = 1$ if and only if $v \in T$. We prove by induction on $i \geq 1$ that if $v_i \in T$, then for all $u \in C_{v_i}$, we have     
    $\mathrm{FindMyCluster}(u,R^*) = \ID(v_i).$
    Let us consider the case of $i = 1$. If $v_1 \notin T$, then the claim holds trivially. Otherwise, the algorithm $\cA$ sets $C_{v_1} = B_{R^*(v_1)}(v_1)$. Additionally, we have $\var{InT}(v_1) = 1$, and that $v_1$ is the smallest identifier in the network. Therefore for a node $u \in V$ it holds that $\mathrm{FindMyCluster}(u,R^*) = \ID(v_1)$ if and only if $u \in B_{R^*(v_1)}(v_1)$, which completes the basis case.

    Next, we assume that the induction claim holds for some $i \in [n-1]$, and we prove the claim for $i+1$. By the induction assumption, we have that for all $j \leq i$ that for any $u \in C_{v_j}$ that $\mathrm{FindMyCluster}(u,R^*) < \ID(v_{i+1})$. Therefore, the alive set satisfies
    \[\alive_i = \{u \in V \mid \mathrm{FindMyCluster}(u,R^*) \geq \ID(v_{i+1})\}.\]
    Again, if $v_{i+1} \notin T_{R^*}$, then the claim trivially holds. Otherwise, we have that $C_{v_{i+1}}$ is taken into $\cC$ in algorithm $\cA$, and recall that $C_{v_{i+1}} = B_{R^*(v_{i+1})}(v_{i+1}) \cap \alive_i$. 
    
    On the other hand, by \Cref{lem:findmycluster_min_taken}, any node $u \in \alive_i$ such that $\mathrm{FindMyCluster}(u,R^*) \geq \ID(v_{i+1})$ is equivalent to the statement that $u$ does not have a node $w \in B_{8t\log{n}}(u)$ such that $u \in B_{R^*(w)}(w)$, and that $\var{InT}(w) = 1$. But since $\var{InT}(v_{i+1}) = 1$, then for any $u \in B_{R^*(v_{i+1})}(v_{i+1})$ we have $v_{i+1} \in B_{8t\log{n}}(u)$ , implying that $\mathrm{FindMyCluster}(u,R^*) = \ID(v_{i+1})$. This concludes the proof by induction.

    The lemma immediately follows, since $\widetilde{C}_v(v)$ every $\widetilde{C}_w(v)$ for each $w \in B_2(\widetilde{C}_v(v))$ are defined by $\mathrm{FindMyCluster}$ as
    \[\widetilde{C}_v(v) = \{u \in B_{16t\log{n}}(v) \mid \mathrm{FindMyCluster}(u,\widetilde{R}) = \mathrm{FindMyCluster}(v,\widetilde{R}) \},\]
    and
    \[\widetilde{C}_w(v) = \{u \in B_{16t\log{n}}(w) \mid \mathrm{FindMyCluster}(u,\widetilde{R}) = \mathrm{FindMyCluster}(w,\widetilde{R}) \}.\]
    In other words $\widetilde{C}_v(v) = C_v$, and $\widetilde{C}_w(v) = C_w$ for each $w \in B_2(C_v)$.

\end{proof}

Recall that the prover set $\var{InX}(u)$ if and only if $u \in X$, or in other words $X = \{u \in V \mid \var{InX}(u) = 1\}$, therefore it follows that for any $C \in \Comp(v)$ we have $X_C(v) = C \cap X$. By the fact that $R^*$ is a good random string, we are guaranteed that $(\cC,X)$ is an $O((t\log{n}),1/t)$-TS partition, and therefore all clusters satisfy the weak-diameter, cost ratio, and two separation properties. Therefore, all nodes accept, and correctness follows.

\paragraph*{Soundness:}
Before proving soundness, we first prove a few simple claims. We define for every $w \in V$ the cluster
\[\widetilde{C}_w = \{u \in V \mid \mathrm{FindMyCluster}(u,\widetilde{R}) = \mathrm{FindMyCluster}(w,\widetilde{R}) \}.\]

We define $\widetilde{\cC} = \{\widetilde{C}_w \mid w \in V\}$. Additionally, define $\widetilde{X} = \{w \in V \mid \var{InX}(w) = 1\}$ and for each $C \in \widetilde{\cC}$ define $\widetilde{X}_C = C \cap \widetilde{X}$. Our overall goal is to show that the network outputs $(\widetilde{\cC},\widetilde{X})$ and that it is indeed an $(O(t\log{n}),1/t)$-TS partition. We first show that $\widetilde{C}_w = \widetilde{C}_w(v)$ for any $v \in V$.

\begin{lemma}
\label{lem:general_independent_of_v}
    For any $v,w \in V$, we have $\widetilde{C}_w = \widetilde{C}_w(v)$ assuming $\widetilde{C}_w(v)$ is defined by $v$.
\end{lemma}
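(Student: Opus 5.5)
The claim is that the locally computed cluster $\widetilde{C}_w(v)$ coincides with the global cluster $\widetilde{C}_w$. Both sets are cut out by the same predicate, namely $\mathrm{FindMyCluster}(u,\widetilde{R}) = \mathrm{FindMyCluster}(w,\widetilde{R})$. The local version only adds the restriction $u \in B_{16t\log{n}}(w)$. So the plan is to show the two inclusions, where the only nontrivial one is that every $u \in \widetilde{C}_w$ already lies in $B_{16t\log{n}}(w)$.

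First I record that $\mathrm{FindMyCluster}(u,\widetilde{R})$ is a well-defined global quantity. By \Cref{lem:findmycluster_min_taken} it is determined by $\widetilde{R}$, the labels $\var{InT}$, and the topology within $B_{8t\log{n}}(u)$. Here $\widetilde{R}$ is the same at all nodes by the soundness of $(\Prov_{\mathrm{share}},\Ver_{\mathrm{share}})$, assuming no node rejects in step (1). Hence when $v$ evaluates $\mathrm{FindMyCluster}(u,\widetilde{R})$ for some $u$ in its view, it obtains exactly the same value that $u$ itself would compute. To make this legitimate I must check that $v$ sees $B_{8t\log{n}}(u)$. Indeed $u \in B_{16t\log{n}}(w)$ and $w \in B_2(\widetilde{C}_v(v)) \subseteq B_{16t\log{n}+2}(v)$, so $u$ lies within $O(t\log{n})$ hops of $v$, and the verifier's radius is $O(t\log{n})$. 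With this, the inclusion $\widetilde{C}_w(v) \subseteq \widetilde{C}_w$ is immediate.

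For the reverse inclusion, I take $u \in \widetilde{C}_w$, so that $\mathrm{FindMyCluster}(u,\widetilde{R}) = \mathrm{FindMyCluster}(w,\widetilde{R})$. I need this common value to be different from $\bot$. Assuming all nodes accept, step (\ref{item:general_ver_findmycluster}) gives $\mathrm{FindMyCluster}(w,\widetilde{R}) \neq \bot$; this is the only place acceptance is used. Then \Cref{lem:general_same_cluster_short_dist} yields $\dist_G(u,w) \le 16t\log{n}$, so $u \in B_{16t\log{n}}(w)$ and therefore $u \in \widetilde{C}_w(v)$.

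The main subtlety is the $\bot$ case, together with the requirement that the statement be read under the standing assumption that no node rejects. Without that assumption, all $\bot$-nodes would form one global "cluster" that could be far larger than any local view. I will therefore state explicitly that the lemma is invoked in the soundness argument, where every node accepted, so step (\ref{item:general_ver_findmycluster}) rules out $\bot$ at every node. The remaining work is the radius bookkeeping: confirming that the verifier's neighborhood, of radius a constant multiple of $t\log{n}$, covers $B_{8t\log{n}}(u)$ for all relevant $u$. This only requires enlarging the hidden constant in the $O(t\log{n})$-PLS.
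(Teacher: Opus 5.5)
Your proof is correct and follows the paper's own argument. The inclusion $\widetilde{C}_w(v)\subseteq\widetilde{C}_w$ is immediate from the definitions, and the reverse inclusion follows from \Cref{lem:general_same_cluster_short_dist}. Your explicit use of Step~(\ref{item:general_ver_findmycluster}) at $w$ to exclude the $\bot$ case is a real improvement: that lemma requires a non-$\bot$ value, and without acceptance the statement can fail. The paper leaves this implicit in its soundness setting.
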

\begin{proof}
    Recall that
    \[\widetilde{C}_w(v) = \{u \in B_{16t\log{n}}(w) \mid \mathrm{FindMyCluster}(u,\widetilde{R}) = \mathrm{FindMyCluster}(w,\widetilde{R}) \}.\]
    Notice that this definition is completely independent of $v$. Clearly $\widetilde{C}_w(v) \subseteq \widetilde{C}_w$. We show the other direction. Let $u \in \widetilde{C}_w$. Then $\mathrm{FindMyCluster}(u,\widetilde{R}) = \mathrm{FindMyCluster}(w,\widetilde{R})$. By \Cref{lem:general_same_cluster_short_dist} we have $\dist_G(u,w) \leq 16t\log{n}$, i.e. $u \in \widetilde{C}_w(v)$ and the claim follows.
\end{proof}

The following follows as an immediate corollary.

\begin{corollary}
Assume all nodes accept. Then for each $v \in V$, $\widetilde{C}_v \in \Comp(v)$ and for each $w \in B_2(\widetilde{C}_v)$ we have $\widetilde{C}_w \in \Comp(v)$. Moreover, for each $C \in \Comp(v)$ we have $X_C(v) = C \cap \widetilde{X}$.
\end{corollary}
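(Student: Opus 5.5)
The corollary should follow from \Cref{lem:general_independent_of_v} combined with the fact that, once all nodes accept, every node's local sets are exactly the globally defined ones. I would proceed in three steps.

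First, I show $\widetilde{C}_v \in \Comp(v)$. Since $v$ accepts, Step~(\ref{item:general_ver_findmycluster}) gives $\mathrm{FindMyCluster}(v,\widetilde{R}) \neq \bot$, so $\widetilde{C}_v(v)$ is well defined and contains $v$. The verifier puts $\widetilde{C}_w(v)$ into $\Comp(v)$ for every $w \in B_2(\widetilde{C}_v(v))$, and $v$ itself is such a $w$. By \Cref{lem:general_independent_of_v}, $\widetilde{C}_v(v) = \widetilde{C}_v$, so $\widetilde{C}_v \in \Comp(v)$.

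Second, I handle the neighboring clusters. Take any $w \in B_2(\widetilde{C}_v)$. Because $\widetilde{C}_v = \widetilde{C}_v(v)$, we have $w \in B_2(\widetilde{C}_v(v))$, so $v$ defined $\widetilde{C}_w(v)$ and placed it in $\Comp(v)$. Applying \Cref{lem:general_independent_of_v} again gives $\widetilde{C}_w(v) = \widetilde{C}_w$. One point needs care: \Cref{lem:general_independent_of_v} uses \Cref{lem:general_same_cluster_short_dist}, which assumes the common value of $\mathrm{FindMyCluster}$ is not $\bot$. That holds here because $w$ also accepts, so $\mathrm{FindMyCluster}(w,\widetilde{R}) \neq \bot$. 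Also, $v$ can actually compute $\widetilde{C}_w(v)$ within its $O(t\log n)$-view, since $\dist(v,w) \le 16t\log n + 2$ and it then needs a further $16t\log n + 8t\log n$ hops. This only requires that the verification radius be a suitable $O(t\log n)$.

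Third, the boundary claim is immediate. For $C \in \Comp(v)$, the verifier defines $X_C(v) = \{u \in C \mid \var{InX}(u)=1\}$, which is exactly $C \cap \widetilde{X}$ by the definition of $\widetilde{X}$.

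The only delicate step is the non-$\bot$ condition needed to apply \Cref{lem:general_independent_of_v}. It is resolved by using the acceptance of $w$, not just of $v$.
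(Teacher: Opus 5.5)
Your proposal is correct and follows the same route as the paper: you apply \Cref{lem:general_independent_of_v} to identify $\widetilde{C}_v(v)$ and each $\widetilde{C}_w(v)$ with the global clusters $\widetilde{C}_v,\widetilde{C}_w$, and you read off $X_C(v)=C\cap\widetilde{X}$ directly from the definitions. You also note that the non-$\bot$ hypothesis of \Cref{lem:general_same_cluster_short_dist} is supplied by $w$'s acceptance in Step~(2) of $\Ver_t$; the paper leaves this detail implicit, and your treatment of it is correct.
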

\begin{proof}
    By \Cref{lem:general_independent_of_v} we have $\widetilde{C}_v(v) = \widetilde{C}_v$. Therefore $\widetilde{C}_v \in \Comp(v)$ and for each $w \in B_2(\widetilde{C}_v)$ we have $\widetilde{C}_w = \widetilde{C}_w(v) \in \Comp(v)$. The latter claim follows by definition of $\widetilde{X} = \{u \mid \var{InX}(u) = 1\}$, and definition of $X_C(v) = \{u \in C \mid \var{InX}(u) = 1\}$.
\end{proof}

Finally, we show that the pair $(\widetilde{\cC},\widetilde{X})$ is an $(O(t\log{n}),1/t)$-TS partition of $G$ 
\begin{lemma}
    Assuming all nodes accept, the family of subsets $\widetilde{\cC} = \{\widetilde{C}_u \mid u \in V\}$ is a partition of $V$. Moreover, $(\widetilde{\cC},\widetilde{X})$ is an $(O(t\log{n}),1/t)$-TS partition of $G$.
\end{lemma}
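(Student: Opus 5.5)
The plan has two parts. First, show that $\widetilde{\cC}$ is a partition of $V$. Second, check the three TS properties one by one, using the local assertions in Step~(\ref{item:general_ver_allprop}) together with \Cref{lem:general_independent_of_v} and the corollary following it.

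\emph{Partition.} By construction, $\widetilde{C}_w$ is the preimage under the map $u\mapsto \mathrm{FindMyCluster}(u,\widetilde{R})$ of the value at $w$. Hence $\widetilde{\cC}$ is exactly the family of fibers of this map, and every $u\in V$ lies in $\widetilde{C}_u$. Two sets $\widetilde{C}_w,\widetilde{C}_{w'}$ are either equal (same value) or disjoint (different values). No node rejected in Step~(\ref{item:general_ver_findmycluster}), so no fiber corresponds to $\bot$, and every class is labeled by a genuine identifier. This already gives a partition.

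\emph{Diameter and cost ratio.} Each cluster is $\widetilde{C}_v$ for some $v$. By \Cref{lem:general_independent_of_v}, $\widetilde{C}_v=\widetilde{C}_v(v)$. Node $v$ asserted that $\widetilde{C}_v(v)$ has weak diameter at most $16t\log n$ and that $|\widetilde{C}_v(v)\cap\widetilde X|\le|\widetilde{C}_v(v)|/t$. Here the set $X_{\widetilde C_v(v)}$ that $v$ computed equals $\widetilde{C}_v\cap\widetilde X$ by the corollary. Since $v$ accepted, both properties hold for every cluster. (Alternatively, the diameter bound follows directly from \Cref{lem:general_same_cluster_short_dist}.)

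\emph{Two separation.} This is the step needing care, because node $v$ only sees a bounded neighborhood, while two separation quantifies over arbitrary paths. Take any path $P$ from $u\in C\setminus\widetilde X$ to $u'\in C'\setminus\widetilde X$ with $C\ne C'$. The assertion of $v=u$ in Step~(\ref{item:general_ver_allprop}) covers paths from $\widetilde{C}_u\setminus\widetilde X$ to any node outside $\widetilde{C}_u$ with $\var{InX}=0$, and it can be checked locally. The reason is that we may restrict to a minimal violating path, as in the proof of \Cref{lem:general_if_partition_then_ts}. Take the shortest path between non-$\widetilde X$ nodes of different clusters that has no two consecutive $\widetilde X$ nodes. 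Its interior vertices are either in $\widetilde X$ or lie in neither endpoint's excluded role; otherwise a shorter violating subpath exists. Lacking two consecutive $\widetilde X$ vertices then forces length at most $2$. Thus the condition reduces to: no two nodes of $V\setminus\widetilde X$ at distance $\le2$ lie in different clusters, where the middle vertex of a length-$2$ path lies outside $\widetilde X$ or is irrelevant. This check is within $B_2(v)\subseteq B_{16t\log n}(v)$, and $v$ knows the clusters of all nodes in $B_2(\widetilde C_v)$ via $\Comp(v)$ by the corollary. So the local assertion of $u$ rules out the violation, and every long path inherits the property.

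The main obstacle I expect is exactly this reduction from global paths to a local radius-$2$ check. In particular, the minimality argument must correctly handle an intermediate vertex that is not in $\widetilde X$ but belongs to a third cluster. In that case I would split the path at that vertex and apply minimality to the shorter subpath, which again yields a contradiction.
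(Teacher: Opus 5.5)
Your proposal is correct and follows essentially the same route as the paper. You take $\widetilde{\cC}$ to be the family of fibers of $u\mapsto\mathrm{FindMyCluster}(u,\widetilde{R})$, and you read the diameter, cost-ratio and two-separation properties off each node's Step~(\ref{item:general_ver_allprop}) assertion, using \Cref{lem:general_independent_of_v} and the subsequent corollary to identify $\widetilde{C}_v(v)$ and $X_{\widetilde{C}_v(v)}$ with $\widetilde{C}_v$ and $\widetilde{C}_v\cap\widetilde{X}$ (the paper leaves this identification implicit). Your extra minimal-path reduction of two separation to a distance-$2$ check is not needed here, since Step~(\ref{item:general_ver_allprop}) asserts the property outright, but it is sound once the garbled sentence is read as ``every interior vertex of a minimal violating path lies in $\widetilde{X}$.''
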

\begin{proof}
    Let $u \in V$. Clearly $u \in \widetilde{C}_u$. We show that $u$ appears in exactly one cluster. Indeed, assume that $u \in \widetilde{C}_w$ for some $w \in V$. Then by definition we have 
    \[\mathrm{FindMyCluster}(u,\widetilde{R}) = \mathrm{FindMyCluster}(w,\widetilde{R}).\]
    
    And it follows that $\widetilde{C}_u = \widetilde{C}_w$. Finally, we show that $(\widetilde{\cC},\widetilde{X})$ is an $(O(t\log{n}),1/t)$-TS partition. Since $u$ does not reject, then by Step~(\ref{item:general_ver_findmycluster}) of $\Ver_t$ we have $\mathrm{FindMyCluster}(u,\widetilde{R}) \neq \bot$, and by Step~(\ref{item:general_ver_allprop}) we have that $\widetilde{C}_u(u)$ satisfies the low diameter, two separation, and cost ratio properties.
\end{proof}

We conclude that if all nodes accept, then $(\widetilde{\cC},\widetilde{X})$ is a $(O(t\log{n}),1/t)$-TS partition of $G$, and each node $v$ outputs $\Comp(v)=\{\widetilde{C}_u \mid u \in B_2(\widetilde{C}_v)\}$, with $X_C(v)=C\cap\widetilde{X}$ for every $C\in\Comp(v)$, and soundness follows.  By combining this proof labeling scheme with \Cref{lem:tstradeoff}, we obtain \Cref{thm:general_main}.

\paragraph*{Discussion on knowledge of $n$:} In this section, we assume the verifier has knowledge of $n$ when locally constructing the function $F_n$ of \Cref{lem:general_good_randomnesss}, that is used to obtain our randomness strings. However, we believe this assumption can be removed with relative ease by using the string sharing PLS of \Cref{lem:shared_string} to share the value $n'$ in addition to  the index $i'$. We note that while $n'$ might not be the real value of $n$, we are still guaranteed that the generated randomness string is identical among all nodes. Moreover, since the nodes verify the properties of the resulting TS partition, if they all accept, then the TS partition is valid even if $n'$ is not the real size of the graph. Once the assumption of knowing $n$ is removed, we can also remove the assumption that the graph is connected, since we can act on each component separately.

\section{Refuting Stronger Variants of the Tradeoff Conjecture}
\label{sec:lower}

In this section, we refute a stronger variant of the Tradeoff Conjecture, and show a problem where every node has $|B_t(v)| = \Omega(n)$, but we cannot get an improvement of better than $O(1/t)$ when considering a $t$-PLS compared to a $1$-PLS. 

\thmdisproof*

In our proof, we use a reduction to 2-party non-deterministic communication complexity. We begin with a brief overview of the 2-party non-deterministic communication model and the results relevant for our reduction. For simplicity, instead of describing the standard non-deterministic communication model, we describe a slight (but standard) variant. This variant is equivalent to the standard model up to an additive $O(1)$ term in their respective complexity measures.

\paragraph*{Non-deterministic communication complexity:}
For our purposes, the non-deterministic 2-party communication model consists of two players, Alice and Bob, with private inputs $X \in \cX$ and $Y \in \cY$, respectively, for some input domain $\cX \times \cY$. A protocol with cost $p$ in this setting is defined as follows. Initially, both Alice and Bob are given a shared string $W \in \{0,1\}^{\leq p}$, called the \emph{witness string}. Without communicating, both Alice and Bob output either \emph{accept} or \emph{reject} based solely on their private input and the witness $W$.  For a predicate $P : \cX \times \cY \rightarrow \{0,1\}$, we say that a non-deterministic $p$-cost protocol solves $P$ if:

\begin{itemize}

\item \textbf{Correctness:} For any $(X,Y) \in \cX \times \cY$ such that $P(X,Y) = 1$, there exists a witness string $W$ with $|W| \leq p$ for which both players accept.

\item \textbf{Soundness:} For any $(X,Y) \in \cX \times \cY$ such that $P(X,Y) = 0$, and for every witness string $W$ with $|W| \leq p$, at least one of Alice or Bob rejects.

\end{itemize}

We briefly remark that in the standard model, nodes are allowed to communicate, and the cost is the size of $W$ plus the total number of bits communicated. Though, using a standard argument, the difference in cost between the two models is at most two bits. Informally, this is because instead of having lengthy communication, we can always encode the transcript inside $W$ and have both nodes verify the transcript using a single bit each (See \cite{KN97} for an overview).

Finally, we describe the equality function and the lower bound needed for our results. The equality function $\mathrm{EQ}_k : \{0,1\}^k \times \{0,1\}^k \to \{0,1\}$ is defined by $\mathrm{EQ}_k(X,Y) = 1$ if and only if $X = Y$.

\begin{lemma}[\cite{KN97} Chapter 2.1]
\label{lem:non_deterministic_eq}
    Any non-deterministic 2-player protocol solving $\mathrm{EQ}_k$ has cost $\Omega(k)$.
\end{lemma}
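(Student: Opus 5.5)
The statement to prove is \Cref{lem:non_deterministic_eq}: non-deterministic equality on $k$-bit inputs costs $\Omega(k)$. I will use the standard fooling-set / rectangle argument, adapted to the witness-only model defined above.

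\textbf{Step 1: each witness accepts only diagonal pairs.} Fix a protocol of cost $p$ solving $\mathrm{EQ}_k$. For each witness $W \in \{0,1\}^{\le p}$ define
\[A_W = \{X : \text{Alice accepts on }(X,W)\}, \qquad B_W = \{Y : \text{Bob accepts on }(Y,W)\}.\]
Both players decide without communicating, so $W$ makes both accept on $(X,Y)$ exactly when $(X,Y) \in A_W \times B_W$. This set is a combinatorial rectangle.

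By soundness, no rectangle $A_W \times B_W$ contains a pair $(X,Y)$ with $X \neq Y$. Suppose two distinct diagonal pairs $(X,X)$ and $(X',X')$ both lie in $A_W \times B_W$. Then $X \in A_W$ and $X' \in B_W$, so $(X,X') \in A_W \times B_W$. Since $X \neq X'$, this contradicts soundness. Hence every witness covers at most one diagonal pair.

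\textbf{Step 2: count witnesses.} By correctness, each of the $2^k$ diagonal pairs $(X,X)$ has some accepting witness. Since a witness serves at most one diagonal pair, the number of available witnesses must be at least $2^k$. That number is $|\{0,1\}^{\le p}| = 2^{p+1}-1$. So $2^{p+1} > 2^k$, which gives $p \ge k-1$, and therefore $p = \Omega(k)$.

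\textbf{Step 3: transfer to the standard model.} If one wants the bound for the standard model, where players also communicate, the paper's remark applies: the two models differ in cost by at most an additive constant, so the $\Omega(k)$ bound carries over. The only step needing care is Step 1, namely that acceptance under a fixed witness is a product set. This holds directly because each player's decision depends only on its own input and the shared $W$.
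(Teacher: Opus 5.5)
Your proof is correct and is the standard fooling-set argument: each witness $W$ induces an acceptance rectangle $A_W\times B_W$ containing at most one diagonal pair, so $2^{p+1}-1\ge 2^k$. This is the argument from \cite{KN97}, Chapter~2.1, which the paper cites without reproving; as a minor remark, since $p$ is an integer your counting actually gives $p\ge k$, not merely $p\ge k-1$.
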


\paragraph*{Proof of \Cref{thm:disproof_strong}:}
Next, we prove \Cref{thm:disproof_strong}. We begin by describing $G_{t,m}$ and $P$, and continue to show the existence of an $O(m)$ cost $1$-PLS for $P$, and the lower bound of $\Omega(m/t)$ on the cost of any $t$-PLS for $P$.  The graph $G_{t,m}$ is defined as follows for integer $t \geq 1$, and odd integer $m \geq 3$. For $i \in [2t+3]$, we define the set of vertices, \[V_i = 
       \begin{cases}
             \{v^i\},  & i \text{ is odd} , \\
            \{v^i_1,\dots,v^i_m\}, & i \text{ is even}.
        \end{cases}\] 

Let $V = \bigcup_{i \in [2t+3]} V_i$ be a union of the vertex sets, and note that $|V| = \Theta(mt)$. Let $G_{t,m} = (V,E)$ be a graph where $E = \{\{u,w\} \mid i \in[2t+2], u \in V_i, w \in V_{i+1}\}$, i.e. $G_{t,m}$ contains for every $i \in [2t+2]$ all edges between $V_i$ and $V_{i+1}$ (See \Cref{fig:lowerbound}). 

We define the problem $P$ as follows. In this problem, both nodes $v^1 \in V_1, v^{2t+3} \in V_{2t+3}$ receive an input $I(u) \in \{0,1\}^{m^2}$. We ignore all additional inputs on nodes not in $V_1 \cup V_{2t+3}$, i.e. for nodes with even degree. We define $P$ so that $(G_{t,m},I) \in P$ if and only if $I(v^1) = I(v^{2t+3})$. Denote $\deg(v)$ as the degree of a vertex $v$. We recall that $m$ is an odd integer, and note the following trivial properties.

\begin{lemma}
    The following holds for any $v \in V$.
    \begin{enumerate}
        \item If $\deg(v) = 2$, then $v$ is at an even layer.
        \item If $\deg(v) > 2$ and is odd, then $v \in V_1 \cup V_{2t+3}$. Moreover, $\deg(v) = m$.
        \item If $\deg(v) > 2$ and is even, then $v$ is at an odd layer, but not in $V_1 \cup V_{2t+3}$. Moreover, $\deg(v) = 2m$.
    \end{enumerate}
\end{lemma}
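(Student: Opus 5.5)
The plan is to compute the degrees directly from the layered structure of $G_{t,m}$. By definition of $E$, a vertex in layer $V_i$ is adjacent to exactly the vertices of $V_{i-1}\cup V_{i+1}$, whichever of these exist. So $\deg(v)=|V_{i-1}|+|V_{i+1}|$, with the convention $|V_0|=|V_{2t+4}|=0$. This gives a complete case table.

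\emph{Even layer} $i\in\{2,\dots,2t+2\}$: both neighboring layers $V_{i-1},V_{i+1}$ exist, since $2\le i\le 2t+2$, and both are odd layers of size $1$. Hence $\deg(v)=2$.

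\emph{Odd interior layer} $i\in\{3,\dots,2t+1\}$: both neighboring layers are even layers of size $m$. Hence $\deg(v)=2m$.

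\emph{Endpoint layers} $i=1$ or $i=2t+3$: there is only one neighboring layer, $V_2$ or $V_{2t+2}$ respectively, of size $m$. Hence $\deg(v)=m$.

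With the table in hand, each item follows by elimination, using $m\ge 3$ odd.
\begin{enumerate}
\item If $\deg(v)=2$, then $v$ is not interior odd, since $2m\ge 6$. It is not an endpoint, since $m\ge 3$. So $v$ lies on an even layer.
\item If $\deg(v)>2$ is odd, then $v$ is not on an even layer, whose degree is $2$. It is not on an interior odd layer, whose degree $2m$ is even. So $v\in V_1\cup V_{2t+3}$ and $\deg(v)=m$.
\item If $\deg(v)>2$ is even, then $v$ is not on an even layer, whose degree is exactly $2$. It is not an endpoint, whose degree $m$ is odd. So $v$ is on an odd layer other than $V_1,V_{2t+3}$, and $\deg(v)=2m$.
\end{enumerate}

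There is no real obstacle. The only care needed is to use the standing assumptions that $m$ is odd and $m\ge 3$, which separate the three degree values $2$, $m$ and $2m$. We also need $t\ge 1$, which guarantees that layers $V_1,\dots,V_{2t+3}$ exist with $2t+3\ge 5$. This ensures the two endpoints are distinct and every even layer has two neighboring odd layers.
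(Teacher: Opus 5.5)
Your proposal is correct: the layer-by-layer count $\deg(v)=|V_{i-1}|+|V_{i+1}|$ gives degree $2$ on even layers, $2m$ on interior odd layers, and $m$ on the two endpoints, and the standing assumption that $m\ge 3$ is odd separates these three values. The paper calls these properties trivial and gives no proof; your degree table is exactly the intended argument.
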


\begin{lemma}
    There exists a $1$-PLS for $P$ in $G_{t,m}$ with cost $O(m)$.
\end{lemma}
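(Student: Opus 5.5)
The plan is to have every even layer carry a full copy of $I(v^1)$, split into $m$ pieces of $m$ bits each, and then check equality layer by layer from left to right. Write $S = I(v^1) \in \{0,1\}^{m^2}$ and split it into $S = (S_1,\dots,S_m)$ with $|S_j| = m$. The honest prover labels as follows.
\begin{itemize}
\item Each even-layer node $v^i_j$ gets $(S_j, j)$, which costs $m + O(\log m)$ bits.
\item Each odd-layer node $v^i$ gets its layer index $i$ in $O(\log t)$ bits, or just $i \bmod 4$, which costs $O(1)$ bits.
\item Even-layer nodes also get $i \bmod 4$.
\end{itemize}
The layer parity information is needed because nodes only see their neighborhoods and IDs. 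The preceding lemma already lets a node tell from degrees whether it is an endpoint ($\deg = m$, odd), an internal odd-layer node ($\deg = 2m$), or an even-layer node ($\deg = 2$).

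The verifier runs as follows.
\begin{itemize}
\item An endpoint $v \in V_1 \cup V_{2t+3}$ checks that its $m$ neighbors carry indices forming a permutation of $[m]$. It then checks that concatenating their pieces in index order yields exactly $I(v)$.
\item An internal odd-layer node $v^i$ has two neighbor groups, the two adjacent even layers. It cannot separate them by degree, but it can use the stored $i \bmod 4$ labels, which distinguish layers $i-1$ and $i+1$ since they differ by $2 \bmod 4$. It checks that each group has exactly $m$ nodes with indices forming a permutation of $[m]$. It also checks that for every $j$, the piece with index $j$ in the left group equals the piece with index $j$ in the right group.
\item Even-layer nodes check only formatting. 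Consistency of the mod-4 labels is checked locally, with each node comparing its neighbors' values to its own plus or minus one.
\end{itemize}

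Completeness is immediate: if $I(v^1) = I(v^{2t+3})$, the honest labels make every layer's pieces concatenate to the same string, so all nodes accept.

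For soundness, suppose all nodes accept. Let $T_i$ be the string reconstructed from even layer $i$ by ordering its pieces by index; this is well defined because some adjacent odd node checked that the indices form a permutation. The endpoint checks give $T_2 = I(v^1)$ and $T_{2t+2} = I(v^{2t+3})$. The check at odd node $v^{i}$ gives $T_{i-1} = T_{i+1}$. Chaining these equalities gives $I(v^1) = I(v^{2t+3})$.

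The one point needing care is that the internal odd node knows which neighbors belong to which side. I handle this with the mod-4 layer labels and would verify that no adversarial labeling can confuse the partition. If labels are inconsistent, say a node's neighbors do not split into two groups with values $\pm 1$ of its own, the node rejects. Otherwise the split into two groups coincides with the true layers, because within a single true even layer all labels must agree. That agreement holds since every member of the layer is adjacent to the same odd node $v^i$, which checks that exactly two values appear among its neighbors, each exactly $m$ times, with one value coming from each side. The total cost is $O(m + \log m) = O(m)$.
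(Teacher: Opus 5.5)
Your verifier is actually sound, but your soundness argument has a false step. You claim that the mod-$4$ labels split the neighborhood of an internal odd node $v^i$ into its two true layers $V_{i-1}$ and $V_{i+1}$. The justification is circular: $v^i$ can only count that each of $a_i\pm1$ occurs $m$ times. It cannot check that one value comes from each side, and that is exactly what needs proving.

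The local checks do not force it either. Modulo $4$, an even-layer node whose two neighbors carry $0$ and $2$ accepts with label $1$ and with label $3$ alike, since $\{b-1,b+1\}=\{0,2\}$ in both cases. So label the odd layers $0,2,0,2,\dots$, label $V_{i-1}$ by $(1,\dots,1,3)$ and $V_{i+1}$ by $(3,\dots,3,1)$, and keep alternating these two patterns through the remaining even layers. Every mod-$4$ check passes, yet each of the two groups at $v^i$ mixes nodes from both layers. Then ``each group is a permutation of $[m]$'' says nothing about $V_{i-1}$ or $V_{i+1}$ individually. So neither the well-definedness of $T_{i\pm1}$ nor $T_{i-1}=T_{i+1}$ follows as you wrote it.

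The repair is the paper's argument, and it makes the layer labels unnecessary. Whatever the grouping, your check at $v^i$ implies that for every $j\in[m]$ exactly two of its $2m$ neighbors carry index $j$, and their pieces are equal. Now induct from the left.
\begin{itemize}
\item The check at $v^1$ shows that $V_2$ carries each index exactly once, with pieces $S_1,\dots,S_m$ concatenating to $I(v^1)$.
\item If $V_{i-1}$ carries each index exactly once with pieces $S_1,\dots,S_m$, the count at $v^i$ forces $V_{i+1}$ to do the same, with the same pieces.
\item At $V_{2t+2}$, the check of $v^{2t+3}$ gives $I(v^{2t+3})=S_1\circ\cdots\circ S_m=I(v^1)$.
\end{itemize}
This is why the paper gives odd-layer nodes empty labels and has $v^i$ simply check ``exactly two neighbors per index, with equal strings,'' with no notion of sides.

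A smaller point: your alternative of storing the full layer index costs $\Theta(\log t)$ bits. Since $t$ and $m$ are independent, this is not $O(m)$ when $t$ is large compared to $2^m$.
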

\begin{proof}
We describe a $1$-PLS for $P$, denoted $(\Prov_1,\Ver_1)$, with cost $O(m)$.
    \paragraph*{Description of $\Prov_1$}
    Let $I_1,\dots,I_m$ be a partition of $I(v^1)$ into $m$ equal-sized segments. For any $j \in [2t+3]$, if $j$ is odd, then $\Prov_1(G,I,v) = \phi$, i.e. an empty label. If $j$ is even, then for $i \in [m]$, node $v_i^j$ is given the label we set $\Prov_1(G,I,v_i^j) = (i,I_i)$. We remark that indeed all label sizes are of size at most $O(m)$, hence the cost is at most $O(m)$.
    \paragraph*{Description of $\Ver_1$} We define $\Ver_1$ as a distributed protocol from the viewpoint of $v$:
    \begin{enumerate}
        \item If $\deg(v) = 2$ (i.e. $v$ is at an even layer), $v$ accepts. 
        \item If $\deg(v) > 2$ and is even (i.e. $v$ is at an odd layer, but not in $V_1 \cup V_{2t+3}$), then $v$ sets $m = \deg(v)/2$, and asserts that for each $i \in [m]$ there are exactly two labels of neighbors $(i_1,S_1)$ and $(i_2,S_2)$ such that $i = i_1 = i_2$ and $S_1 = S_2$.
        \item If $\deg(v) > 2$ and is odd (i.e. $v \in V_1 \cup V_{2t+3}$), $v$ asserts that for each $i \in [m]$ there is exactly one label of neighbors $(i_1,S_1)$ where $i = i_1$, and that its input is equal to the concatenation of strings $I(v) = S_1 \circ \dots \circ S_m$.
    \end{enumerate}
    
    \paragraph*{Correctness}~ We split into cases by layer. Every odd layer vertex always accepts. For each of the layers $V_1,V_{2t+3}$ we see that indeed for their unique node $v$, they have $m$ neighbors, each with label $(i,I_i)$, and indeed $I(v) = I_1 \circ \dots \circ I_m$.
    Similarly, for the rest of the odd layers, each node receives for each $i \in [m]$ the label $(i,I_i)$ exactly twice from its neighbors, and we conclude that all nodes accept.

    \paragraph*{Soundness}~Assume all nodes accept. We prove by induction over all even integers in $j \in [2t+3]$ that for all $i \in [m]$ there is exactly one node $u_{i,j} \in V_j$ that has the label $(i,I_i)$. For the case $j = 2$ this must hold, otherwise $v^1$ rejects. Assume this for even layer $j \geq 2$, therefore layer $j+2$ must also have exactly one node $u_{i,j+2} \in V_{j+2}$ that has the label $(i,I_i)$, otherwise $v^{j+1}$ rejects. Finally, we must have
    $I(v^{2t+3}) = I_1 \circ \dots \circ I_m$,
    otherwise $v^{2t+3}$ rejects. We conclude that $I(v^1) = I(v^{2t+3})$.
    
\end{proof}

Finally, we show that any $t$-PLS certifying $P$ in $G_{t,m}$ has cost $\Omega(m/t)$, which is shown by reduction to non-deterministic 2-player equality.

\begin{lemma}
    Any $t$-PLS certifying $P$ in $G_{t,m}$ has cost $\Omega(m/t)$.
\end{lemma}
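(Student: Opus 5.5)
We argue by reduction to non-deterministic two-party equality (\Cref{lem:non_deterministic_eq}). Fix a $t$-PLS $(\Prov,\Ver)$ for $P$ on $G_{t,m}$ with cost $p$. Since the graph has $2t+3$ layers, the middle layer is $V_{t+2}$. Every node in layers $1,\dots,t+1$ is at distance at least $1$ from $V_{t+2}$, and $v^1$ is at distance exactly $t+1$ from it. We cut the graph into a left side $L=V_1\cup\dots\cup V_{t+1}$ (Alice) and a right side $R=V_{t+3}\cup\dots\cup V_{2t+3}$ (Bob), with the middle layer shared. The key point is that $I(v^1)$ lies at distance more than $t$ from every node of $R\cup V_{t+2}$ except the nodes in layers at most $t+1$. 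Symmetrically for $I(v^{2t+3})$.

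Given inputs $X,Y\in\{0,1\}^{m^2}$, Alice sets $I(v^1)=X$ and Bob sets $I(v^{2t+3})=Y$. The witness $W$ is the list of labels of all nodes in the layers $V_2,\dots,V_{2t+2}$, i.e., all nodes except $v^1$ and $v^{2t+3}$. Alternatively, we can even include their labels too, since labels of $v^1$ and $v^{2t+3}$ may be placed in $W$ as well. This gives $|W|\le p\cdot|V| = O(p\cdot mt)$. Alice simulates $\Ver$ on every node $u$ with $\dist(u,v^{2t+3})>t$, i.e., on all nodes in layers $1,\dots,t+2$. Bob simulates $\Ver$ on all nodes in layers $t+2,\dots,2t+3$. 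Their views never contain the other player's input: a node in layer $j\le t+2$ has its $t$-ball within layers $\le 2t+2$, so it sees no $v^{2t+3}$, and symmetrically for Bob. The topology and identifiers are fixed and known to both. Each player accepts iff all their simulated nodes accept.

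For correctness, if $X=Y$, use $W$ equal to the honest labels $\Prov(G,I)$; completeness of the PLS makes every node accept. For soundness, if $X\ne Y$ and both accept on some $W$, then every node of $G_{t,m}$ accepts under the labeling encoded by $W$. This holds because the two sets of simulated nodes cover $V$, and each simulation exactly matches the true verifier's view. Soundness of the PLS then gives $(G,I)\in P$, a contradiction.

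Hence $O(pmt)\ge\Omega(m^2)$ by \Cref{lem:non_deterministic_eq}, so $p=\Omega(m/t)$. The only delicate step is checking the distance bookkeeping, namely that layer-$(t+2)$ balls of radius $t$ avoid both endpoints. Layer $t+2$ is at distance $t+1$ from $v^1$ and from $v^{2t+3}$, so this holds.
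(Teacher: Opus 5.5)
Your proof is correct and matches the paper's argument: the witness is the full labeling of $G_{t,m}$ (size $O(p\cdot mt)$), Alice checks the left layers and Bob the right layers, and the $\Omega(m^2)$ bound for $\mathrm{EQ}_{m^2}$ gives $p=\Omega(m/t)$. The paper assigns layers $1,\dots,t+1$ to Alice and layers $t+2,\dots,2t+3$ to Bob, so letting both of you simulate layer $t+2$ is a harmless redundancy. You should commit to the version of $W$ that includes the labels of $v^1$ and $v^{2t+3}$; otherwise $\Ver$ cannot be evaluated near the endpoints without an extra step in which Alice (resp.\ Bob) quantifies existentially over that label.
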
  
\begin{proof}
     Let $\Pi_t = (\Prov_t,\Ver_t)$ be a $t$-PLS for certifying $P$ in $G_{t,m}$ with cost $p$. We show that this implies a non-deterministic 2-player protocol for $(m^2)$-bit equality with cost $O(p \cdot m \cdot t)$. Combining with the $\Omega(m^2)$ complexity lower bound on $\mathrm{EQ}_{m^2}$ (\Cref{lem:non_deterministic_eq}), we obtain by rearranging that $p = \Omega(m/t)$.

We describe a 2-player protocol that uses $\Pi_t$ to obtain the complexity mentioned above. Let $X,Y$ be the private inputs of Alice and Bob respectively. Alice and Bob each construct the graph $G_{t,m}$. Alice assigns $X$ to be the input of $v^1$, and similarly, Bob assigns $Y$ to be the input of $v^{2t+3}$. Both players interpret $W = (\ell_1,\dots,\ell_{|V(G)|})$ as a labeling on the nodes $V$ (rejecting if $W$ does not encode $|V(G)|$ labels). Finally, Alice checks whether all nodes in $\bigcup_{i=1}^{t+1} V_{i}$ accept, and Bob checks whether all nodes of $\bigcup_{i=t+2}^{2t+3} V_i$ accept. If so, they accept, and otherwise reject. 

We analyze the correctness of the protocol above. Assuming $X = Y$, we show there exists a witness string $W$ causing both players to accept. Indeed, taking $W = (\Prov_t(G,I,v))_{v \in V}$, and noting that $X=Y$ implies $I(v^1) = I(v^{2t+3})$, then by correctness of $\Pi_t$ all nodes accept.  Otherwise, assume that $X \neq Y$, i.e. $I(v^1) \neq I(v^{2t+3})$. Therefore, by the soundness of $\Pi_t$, for any labeling $\ell:V\rightarrow\{0,1\}^*$ the verifier $\Ver_t$ rejects on at least one node $v$. We note that either $v \in \bigcup_{i=1}^{t+1} V_i$ or $v \in \bigcup_{i=t+2}^{2t+3} V_i$. In the first case, Alice rejects, and in the latter Bob rejects.  
\end{proof}

\section*{Acknowledgments}
This work is supported in part by the Israel Science Foundation, grant No. 1042/22 and 800/22.

\bibliographystyle{alpha}
\bibliography{bibliography}

\appendix
\section{Missing Proofs from \texorpdfstring{\Cref{sec:minor_free_graphs}}{Section~\ref{sec:minor_free_graphs}}}
\label{sec:app_minor}

In this section, we prove \Cref{lem:tsminorcompute}. Namely, we show that any $(t,\varepsilon)$-TS partition can be certified and outputted in any graph  using a $O(t)$-PLS with cost $O(\log{n})$.

\tsminorcompute*

     Let $r = 3t+2$. We construct an $r$-PLS $(\Prov,\Ver)$ certifying and outputting a $(t,\varepsilon)$-TS partition as follows.

    \paragraph*{Description of $\Prov$:} Let $(\cC,X)$ be a $(t,\varepsilon)$-TS partition of $G$. For each cluster $C \in \cC$, the honest prover assigns a unique identifier $\var{ID}(C) \in [n]$, and gives node $v$ the label \[\Prov(G,I,v) = (\var{ID}(C_v),x(v)),\] where $x(v)$ is a single bit set to $x(v) = 1$ if $v \in X$, and $x(v) = 0$ otherwise. We note that the size of the label for each node is $O(\log{n})$ bits.
    
    \paragraph*{Description of $\Ver$:}  For convenience, we describe $\Ver$ as a distributed algorithm running from the viewpoint of node $v$. We assume that $v$ receives for every node $u \in B_r(v)$ a label $(\var{CompID}(u),\var{InX}(u))$, where $\var{InX}(u)$ is a single bit (otherwise $v$ rejects). Given the $r$-hop topology of $v$, node $v$ defines 
    \[\widetilde{C}_v(v) = \{w \in B_r(v) \mid \var{CompID}(v) = \var{CompID}(w)\},\]
    and for each $u \in B_{2}(\widetilde{C}_v(v))$ defines the cluster \[\widetilde{C}_u(v) = \{w \in B_{r}(u) \mid \var{CompID}(u) = \var{CompID}(w)\}.\] 
    It then defines $\Comp(v) = \{\widetilde{C}_u(v) \mid u \in B_2(\widetilde{C}_v(v))\}$, and for each $C \in \Comp(v)$ defines the set $X_C(v) = \{w \in C \mid \var{InX}(w) = 1\}$.
    
     Node $v$ asserts the following conditions, and rejects if at least one assertion fails.
    \begin{enumerate}[(1)]
        \item \label{item:log_ts_assert_1} Node $v$ asserts that for each $C \in \Comp(v)$ that $\weakdiam(C) \leq t$, and the cost ratio is at most $|X_C(v)|/|C| \leq \varepsilon$.
        \item \label{item:log_ts_assert_2} Node $v$ asserts that there are no two nodes $u \in \widetilde{C}_v(v)$ and $w \notin \widetilde{C}_v(v)$, with $\var{InX}(u) = \var{InX}(w) = 0$, such that $\dist_G(u,w) \leq 2$.
    \end{enumerate} 
    If all assertions are true, then $\Ver$ accepts and outputs $\Comp(v)$, and for each $C \in \Comp(v)$ outputs the set $X_{C}(v)$ as its corresponding $X$-set. This concludes the description of the verifier. 
    
    \paragraph*{Correctness:} Since $(\cC,X)$ is a $(t,\varepsilon)$-TS partition, and each $C \in \cC$ is given a distinct identifier, then $\widetilde{C}_v(v) = C_v$, and for any $u \in B_2(C_v)$, it similarly holds that $\widetilde{C}_u(v) = C_u$. Moreover, for each $u \in B_2(C_v)$, the cluster $\widetilde{C}_u(v) = C_u$ satisfies the weak-diameter, two separation, and cost ratio conditions. Therefore, all assertions of $\Ver$ running on $v$ are true, and it accepts and outputs $\Comp(v) = \{C_u \mid u \in B_2(C_v)\}$, and for each $C \in \Comp(v)$ the set $X_C(v) = X_C$.
    
    \paragraph*{Soundness:} Assume that for every node $v \in V$, $\Ver$ accepts for some labeling $\ell$. We begin by showing a few simple claims. 

    \begin{lemma}
        \label{lem:logn_cert_simple} If $u \in B_r(v)$, and $u \notin \widetilde{C}_v(v)$, then $\var{CompID}(u) \neq \var{CompID}(v)$.
    \end{lemma}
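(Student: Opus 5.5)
The claim follows directly from the definition of $\widetilde{C}_v(v)$, and I would prove it by contraposition. By definition,
\[
\widetilde{C}_v(v) = \{w \in B_r(v) \mid \var{CompID}(v) = \var{CompID}(w)\}.
\]
So membership of a node $u$ in $\widetilde{C}_v(v)$ is decided by exactly two conditions: $u \in B_r(v)$, and $\var{CompID}(u) = \var{CompID}(v)$.

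Let $u \in B_r(v)$ with $u \notin \widetilde{C}_v(v)$. Since the first condition holds by hypothesis, the second must fail, which is precisely $\var{CompID}(u) \neq \var{CompID}(v)$.

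One point of care: $\var{CompID}$ here is the label read from the (possibly dishonest) labeling $\ell$, not the honest cluster identifier. The argument never relies on honesty. It only needs that $\var{CompID}(u)$ is well defined for every $u \in B_r(v)$. This holds because $v$ rejects whenever some label in its $r$-ball is malformed, and we are assuming that all nodes accept.

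There is no real obstacle. The only step needing attention is that $\widetilde{C}_v(v)$ is computed from exactly the ball $B_r(v)$ in which $u$ lies, so that the radius restriction in the definition does not exclude $u$ for any other reason.
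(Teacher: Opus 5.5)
Your proposal is correct and matches the paper's proof: both read the conclusion directly off the definition $\widetilde{C}_v(v)=\{w\in B_r(v)\mid \var{CompID}(w)=\var{CompID}(v)\}$. Your argument is in fact direct rather than by contraposition, and your remark that the labels are well-formed because all nodes accept is a harmless extra observation that the paper leaves implicit.
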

    \begin{proof}
        By definition, $\widetilde{C}_v(v) = \{w \in B_{r}(v) \mid \var{CompID}(w) = \var{CompID}(v)\}$. Hence, if $u \in B_r(v)$ and $u \notin \widetilde{C}_v(v)$, then necessarily $\var{CompID}(u) \neq \var{CompID}(v)$.
    \end{proof}

    \begin{lemma}
    \label{lem:clusters_consistent}
        Assume all nodes accept. For any $u,v,w \in V$, if $\widetilde{C}_w(u) \in \Comp(u)$ and $\widetilde{C}_w(v)  \in \Comp(v)$, then $\widetilde{C}_w(u) = \widetilde{C}_w(v)$. Moreover, in that case $X_{\widetilde{C}_w}(u) = X_{\widetilde{C}_w}(v)$.
    \end{lemma}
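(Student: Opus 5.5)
The idea is to show that every viewing node reconstructs exactly the set that $w$ reconstructs for itself, $\widetilde{C}_w(w)$. Both claimed equalities then follow, since $\widetilde{C}_w(u)=\widetilde{C}_w(w)=\widetilde{C}_w(v)$.

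The subtle point is locality. A node $v$ only sees its $r$-hop neighborhood $B_r(v)$. So the set $v$ actually computes when it forms $\widetilde{C}_w(v)$ is
\[
\{x\in B_r(w)\cap B_r(v) \mid \var{CompID}(x)=\var{CompID}(w)\}.
\]
This may a priori depend on $v$, because $w$ can be as far as $r+2$ from $v$. Hence the statement is not immediate, and the plan is to use the acceptance of $v$ and of $w$ to show that the truncation by $B_r(v)$ removes nothing.

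\textbf{Step 1 (the cluster as seen by its own representative).} I would use acceptance at $w$ itself. Since $\widetilde{C}_w(w)\in\Comp(w)$, Assertion~(\ref{item:log_ts_assert_1}) at $w$ gives $\weakdiam(\widetilde{C}_w(w))\le t$. As $w\in\widetilde{C}_w(w)$, every $x\in B_r(w)$ with $\var{CompID}(x)=\var{CompID}(w)$ satisfies $\dist_G(x,w)\le t$. Here $w$ sees all of $B_r(w)$, so $\widetilde{C}_w(w)$ is exactly the untruncated set $\{x\in B_r(w)\mid \var{CompID}(x)=\var{CompID}(w)\}$, and it lies inside $B_t(w)$.

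\textbf{Step 2 (the viewer is close to $w$).} Since $\widetilde{C}_w(v)\in\Comp(v)$, we have $w\in B_2(\widetilde{C}_v(v))$. Acceptance at $v$ gives $\weakdiam(\widetilde{C}_v(v))\le t$, and $v\in\widetilde{C}_v(v)$. Therefore $\dist_G(v,w)\le t+2$.

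\textbf{Step 3 (no truncation).} By Steps 1 and 2, every $x\in\widetilde{C}_w(w)$ satisfies $\dist_G(v,x)\le 2t+2\le r=3t+2$. Hence $x\in B_r(v)$, and $v$ sees both $x$ and its label. Consequently the set $v$ computes equals $\widetilde{C}_w(w)$. The inclusion ``$\subseteq$'' is trivial, and ``$\supseteq$'' holds because every element of $\widetilde{C}_w(w)$ is visible to $v$. The identical argument applied to $u$ gives $\widetilde{C}_w(u)=\widetilde{C}_w(w)$, so $\widetilde{C}_w(u)=\widetilde{C}_w(v)$.

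\textbf{Step 4 (the $X$-sets).} $X_C(\cdot)$ is defined as $\{y\in C\mid \var{InX}(y)=1\}$, which depends only on the set $C$ and the global labels. Both $u$ and $v$ see all of $C=\widetilde{C}_w(w)$ by Step 3, so $X_{\widetilde{C}_w}(u)=X_{\widetilde{C}_w}(v)$.

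The main obstacle is Step 3. One has to notice that the definition is only nominally independent of the viewer, because the computation is truncated to $B_r(v)$. The choice $r=3t+2$, together with the weak-diameter assertions made at two different nodes ($w$ and $v$), is exactly what rules out this truncation.
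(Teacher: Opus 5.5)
Your proof is correct, but it takes a different route from the paper's.

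\emph{The paper's argument.} It is purely definitional. It reads $\widetilde{C}_w(\cdot)$ as the global set $\{z\in B_r(w)\mid \var{CompID}(z)=\var{CompID}(w)\}$, which visibly does not depend on the viewer. So a node $z\in\widetilde{C}_w(v)\setminus\widetilde{C}_w(u)$ would have to lie both inside and outside $B_r(w)$. Neither the acceptance hypothesis nor the value $r=3t+2$ is ever used.

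\emph{Your argument.} You take seriously that $v$ only sees $B_r(v)$, so what it can actually compute is a truncation of that set. You then use acceptance at two nodes:
\begin{itemize}
\item at $w$, which gives $\widetilde{C}_w(w)\subseteq B_t(w)$;
\item at $v$, which gives $\dist_G(v,w)\le t+2$.
\end{itemize}
Together these show $\widetilde{C}_w(w)\subseteq B_{2t+2}(v)\subseteq B_r(v)$, so the truncation removes nothing and every viewer reconstructs exactly $\widetilde{C}_w(w)$.

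\emph{What each buys.} The paper's version is a one-liner, but it silently assumes $v$ can evaluate membership in $B_r(w)$. A priori $w$ may be at distance $r+2$ from $v$, so $B_r(w)$ can reach distance $2r+2$, well outside $v$'s view. Your version closes exactly this locality gap. It shows the verifier is implementable in $r$ rounds, provided $v$ checks the weak diameter of $\widetilde{C}_v(v)$ before looking at $B_2(\widetilde{C}_v(v))$. It also directly justifies the paper's later convention $\widetilde{C}_u:=\widetilde{C}_u(u)$. The needed distance certificates are visible to $v$ as well: every shortest path from $w$ to a point of $\widetilde{C}_w(w)$ has length at most $t$, so it stays inside $B_{2t+2}(v)$.

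One small correction to your closing remark: this lemma only needs $r\ge 2t+2$. The extra slack up to $3t+2$ is what the paper uses afterwards, when it shows that the clusters $\widetilde{C}_u$ form a partition.
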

    \begin{proof}    
        Assume by contradiction that there exists a node $z \in \widetilde{C}_w(v) \setminus \widetilde{C}_w(u)$. Since $z \in \widetilde{C}_w(v)$, by definition we have $z \in B_r(w)$ and $\var{CompID}(z) = \var{CompID}(w)$. On the other hand, since $z \notin \widetilde{C}_w(u)$ while $\var{CompID}(z) = \var{CompID}(w)$, again by definition of $\widetilde{C}_w(u)$ it must be that $z \notin B_r(w)$, a contradiction. Therefore $\widetilde{C}_w(v) = \widetilde{C}_w(u)$.

        Moreover, the corresponding sets $X_{\widetilde{C}_w}(u)$ and $X_{\widetilde{C}_w}(v)$, defined as the set of vertices $y \in \widetilde{C}_w$  with $\var{InX}(y)=1$, are equal as well.
\end{proof}

    \Cref{lem:clusters_consistent} shows that given a labeling $\ell$ where all nodes accept, the clusters $\widetilde{C}_u(v)$ are consistent whenever they are defined. In the remainder of the section, we denote $\widetilde{C}_u = \widetilde{C}_u(u)$, and $X_{\widetilde{C}_{u}} = X_{\widetilde{C}_u}(u)$. We note that this is defined for all $u \in V$, since we always have $\widetilde{C}_u(u)\in \Comp(u)$. We let $\widetilde{\cC} = \{\widetilde{C}_u \mid u \in V\}$, and $\widetilde{X} = \bigcup X_{\widetilde{C}_{u}}$. We show that $(\widetilde{\cC},\widetilde{X})$ is an $(t,\varepsilon)$-TS partition.
    
    \begin{lemma}
    \label{lem:simple_ts_partition_good}
        Assuming all nodes accept, the family of subsets $\widetilde{\cC} = \{\widetilde{C}_u \mid u \in V\}$ is a partition of $V$. Moreover, $(\widetilde{\cC},\widetilde{X})$ is an $(t,\varepsilon)$-TS partition of $G$.
    \end{lemma}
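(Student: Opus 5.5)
We prove the lemma in two parts: first that $\widetilde{\cC}$ is a partition of $V$, and then that it satisfies the three TS conditions.

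\emph{Partition.} Every $u\in V$ lies in $\widetilde{C}_u$, since $u\in B_r(u)$ and it trivially shares its own $\var{CompID}$. So it suffices to show that any two clusters $\widetilde{C}_u$ and $\widetilde{C}_w$ that intersect are equal. Suppose $z\in\widetilde{C}_u\cap\widetilde{C}_w$. Then $\var{CompID}(u)=\var{CompID}(z)=\var{CompID}(w)$, with $z\in B_r(u)\cap B_r(w)$.

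The potential problem is that two far-apart clusters might reuse the same identifier, so agreeing on $\var{CompID}$ alone does not settle it. We would handle this with assertion (1) at $z$. Since $\widetilde{C}_z=\widetilde{C}_z(z)\in\Comp(z)$, node $z$ checked $\weakdiam(\widetilde{C}_z)\le t$. Now $u$ and $w$ are both within $r=3t+2$ of $z$ and carry $z$'s identifier, so $u,w\in\widetilde{C}_z$ by definition, and hence $\dist(u,w)\le t$ and $\dist(z,u)\le t$. Next take any $y\in\widetilde{C}_u$. We have $\var{CompID}(y)=\var{CompID}(z)$ and $\dist(z,y)\le \dist(z,u)+\dist(u,y)\le t+r$, which can exceed $r$. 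To close this gap we would instead apply the weak-diameter assertion at $u$ itself: $\widetilde{C}_u$ has weak diameter at most $t$, so $\dist(u,y)\le t$ and therefore $\dist(z,y)\le 2t\le r$. Thus $y\in\widetilde{C}_z$, giving $\widetilde{C}_u\subseteq\widetilde{C}_z$. The symmetric argument gives the reverse inclusion, and the same reasoning yields $\widetilde{C}_w=\widetilde{C}_z$. This is the step where the choice $r=3t+2$ and the diameter checks must be combined carefully, and it is the main obstacle.

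\emph{TS conditions.} Weak diameter and cost ratio follow directly from assertion (1) at $u$ applied to $\widetilde{C}_u$. Here we use that $X_{\widetilde{C}_u}=\widetilde{C}_u\cap\widetilde{X}$, which holds because $\widetilde{X}$ is exactly the set of nodes with $\var{InX}=1$ and the clusters are disjoint.

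For two separation, we argue exactly as in the proof of \Cref{lem:ds_implies_ts}. Take a shortest violating path between $a\in \widetilde{C}_a\setminus\widetilde{X}$ and $b\notin\widetilde{C}_a\cup\widetilde{X}$, i.e.\ one with no two consecutive vertices in $\widetilde{X}$. If some interior vertex lies outside $\widetilde{X}$, we can shorten the path: that vertex is in some cluster, so either it is in $\widetilde{C}_a$ or it is outside it, and in either case it forms a shorter violating pair with one of the endpoints. Hence all interior vertices lie in $\widetilde{X}$, and because no two consecutive vertices are in $\widetilde{X}$, the path has length at most $2$. Such a pair $a,b$ is exactly what assertion (2) at $a$ forbids; note $b\in B_2(a)\subseteq B_r(a)$, so $a$ sees $b$ and correctly recognizes $b\notin\widetilde{C}_a(a)$ by \Cref{lem:logn_cert_simple}.

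Finally, \Cref{lem:clusters_consistent} together with the partition argument shows that each node's output $\Comp(v)$ and sets $X_C(v)$ agree with $(\widetilde{\cC},\widetilde{X})$, as \Cref{def:ts_partition_labeling} requires.
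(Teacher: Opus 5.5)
Your proof is correct and follows essentially the paper's argument: disjointness comes from matching $\var{CompID}$ values, the weak-diameter checks of Assertion~(1), and the triangle inequality inside radius $r$, and the TS conditions are then read off Assertions~(1) and~(2). The differences are cosmetic. You route through the common node's cluster $\widetilde{C}_z$, which needs only $r\ge 2t$, where the paper bounds $\dist_G(u,w)\le 2t$ and uses $3t\le r$ directly. You also spell out the two-separation step, via a shortest violating path of length at most $2$, which the paper leaves implicit.
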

    \begin{proof}
        To show that $\widetilde{\cC}$ is a partition, we show that every node appears in some cluster, and that any two clusters that the clusters are disjoint. The fact that every $u \in V$ appears in some cluster follows from $u \in \widetilde{C}_u$ and $\widetilde{C}_u \in \widetilde{\cC}$. 

        Assume that $z \in \widetilde{C}_u \cap \widetilde{C}_v$. Then $\var{CompID}(u) = \var{CompID}(z) = \var{CompID}(v)$. Since $\widetilde{C}_u \in \Comp(u)$ and $\widetilde{C}_v \in \Comp(v)$, both $u$ and $v$ accept Assertion~(\ref{item:log_ts_assert_1}), and hence $\weakdiam(\widetilde{C}_u) \le t$ and $\weakdiam(\widetilde{C}_v) \le t$. It follows that $\dist_G(u,v) \le 2t$. We claim that $\widetilde{C}_u = \widetilde{C}_v$. Indeed, let $y \in \widetilde{C}_u$. Then $\var{CompID}(y) = \var{CompID}(u) = \var{CompID}(v)$. Moreover, since $\weakdiam(\widetilde{C}_u) \le t$, we have $\dist_G(u,y) \le t$, and thus by the triangle inequality $\dist_G(v,y) \le \dist_G(v,u)+\dist_G(u,y) \le 2t+t = 3t \le r$. Therefore $y \in B_r(v)$ and has the same component identifier as $v$, i.e., $y \in \widetilde{C}_v$. The symmetric argument shows $\widetilde{C}_v \subseteq \widetilde{C}_u$, concluding $\widetilde{C}_u = \widetilde{C}_v$.
        
        We conclude that since every node passes Assertions~(\ref{item:log_ts_assert_1}),(\ref{item:log_ts_assert_2}), then all clusters $\widetilde{C}_v$ have the weak diameter, cost ratio, and two separation properties. In other words, the pair $(\widetilde{\cC},\widetilde{X})$ is a $(t,\varepsilon)$-TS partition.
    \end{proof}

    Finally, we show that all nodes output their respective local part of $(\widetilde{\cC},\widetilde{X})$, and \Cref{lem:tsminorcompute} follows.
    
    \begin{lemma}
        Assume all nodes accept. Then for each $v \in V$, $\widetilde{C}_v \in \Comp(v)$ and for each $u \in B_2(\widetilde{C}_v)$ we have $\widetilde{C}_u \in \Comp(v)$. Moreover, for each $C \in \Comp(v)$ we have $X_C(v) = C \cap \widetilde{X}$.
    \end{lemma}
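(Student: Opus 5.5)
The plan is to unfold definitions and invoke \Cref{lem:clusters_consistent}, with one small radius check. Fix $v$. By definition of the verifier, $\Comp(v)=\{\widetilde{C}_u(v)\mid u\in B_2(\widetilde{C}_v(v))\}$. Since $v\in B_2(\widetilde{C}_v(v))$, we have $\widetilde{C}_v(v)\in\Comp(v)$. Also $\widetilde{C}_v(v)=\widetilde{C}_v(v)$ is exactly the notation $\widetilde{C}_v$ fixed after \Cref{lem:clusters_consistent}. Hence $\widetilde{C}_v\in\Comp(v)$, and moreover $B_2(\widetilde{C}_v)=B_2(\widetilde{C}_v(v))$.

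Next take any $u\in B_2(\widetilde{C}_v)$. The verifier at $v$ defines $\widetilde{C}_u(v)\in\Comp(v)$, and $u$ itself defines $\widetilde{C}_u(u)\in\Comp(u)$. Applying \Cref{lem:clusters_consistent} with $w=u$ and the two nodes $u,v$ gives $\widetilde{C}_u(v)=\widetilde{C}_u(u)=\widetilde{C}_u$, so $\widetilde{C}_u\in\Comp(v)$.

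One point needs checking: that $v$ can actually compute $\widetilde{C}_u(v)=\{w\in B_r(u)\mid \var{CompID}(w)=\var{CompID}(u)\}$ from its view. Here $u$ is within distance $2+t$ of $v$, since $\widetilde{C}_v$ has weak diameter at most $t$ by Assertion~(\ref{item:log_ts_assert_1}). So $B_r(u)$ lies inside $B_{r+t+2}(v)$, and the verifier's locality radius must be taken large enough (e.g. $O(t)$) to cover this. This radius bookkeeping is the only real obstacle. I would resolve it by noting the verifier radius is $O(t)$ with slack, i.e. enlarging the view constant. This does not affect the proof of \Cref{lem:clusters_consistent}, which is purely definitional.

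Finally, for $C\in\Comp(v)$ we have $C=\widetilde{C}_u$ for some $u$, and $X_C(v)=\{w\in C\mid \var{InX}(w)=1\}$. By the second part of \Cref{lem:clusters_consistent}, this equals $X_{\widetilde{C}_u}(u)=X_{\widetilde{C}_u}$. Recall $\widetilde{X}=\bigcup_u X_{\widetilde{C}_u}$, which is $\{w\mid\var{InX}(w)=1\}$ since every $w$ lies in $\widetilde{C}_w$. Therefore $X_C(v)=C\cap\widetilde{X}$.

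Combined with \Cref{lem:simple_ts_partition_good}, this shows each $\Comp(v)$ contains $\{\widetilde{C}_u\mid u\in B_2(\widetilde{C}_v)\}$. By construction $\Comp(v)$ contains nothing else. So the local cluster knowledge and boundary knowledge of \Cref{def:ts_partition_labeling} hold, completing \Cref{lem:tsminorcompute}.
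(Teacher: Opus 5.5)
Your proposal is correct and follows the paper's proof: $v\in B_2(\widetilde{C}_v(v))$ gives $\widetilde{C}_v\in\Comp(v)$, and \Cref{lem:clusters_consistent} identifies each $\widetilde{C}_u(v)$ with $\widetilde{C}_u$. Your two small deviations are also fine. For the $\widetilde{X}$ part you observe directly that $\widetilde{X}=\{w\mid \var{InX}(w)=1\}$, instead of going through the partition property of \Cref{lem:simple_ts_partition_good}. You also rightly note that computing $\widetilde{C}_u(v)$ needs a view somewhat larger than the stated radius $r=3t+2$; this is harmless, since the lemma only claims an $O(t)$-PLS.
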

    \begin{proof}
        By definition $\widetilde{C}_v = \widetilde{C}_v(v)$ and indeed is defined by $v$ and added to $\Comp(v)$. Therefore, $B_2(\widetilde{C}_v) = B_2(\widetilde{C}_v(v))$. In particular, $v$ defines for any $u \in B_2(\widetilde{C}_v)$ a cluster $\widetilde{C}_u(v)$ and adds it to $\Comp(v)$. By \Cref{lem:clusters_consistent}, we have $\widetilde{C}_u(v) = \widetilde{C}_u$, and the first part of the claim follows.

        For latter claim, let $C \in \Comp(v)$. By definition of $\Comp(v)$, there exists $u \in B_2(\widetilde{C}_v)$ such that $C = \widetilde{C}_u(v)$. By \Cref{lem:clusters_consistent}, $C = \widetilde{C}_u$ and moreover $X_C(v) = X_{\widetilde{C}_u}$. Since $\widetilde{\cC}$ is a partition of $V$ (\Cref{lem:simple_ts_partition_good}), we have $X_{\widetilde{C}_u} = \widetilde{C}_u \cap \widetilde{X} = C \cap \widetilde{X}$.
    \end{proof}

\section{Missing Proofs from \texorpdfstring{\Cref{sec:general_graphs}}{Section~\ref{sec:general_graphs}}}
\label{sec:app_general}

In this section, we prove \Cref{lem:shared_string}. We start by first proving the following decomposition lemma.

\begin{lemma}
    \label{lem:shared_string_decomp}
    Let $r \geq 1$ be an integer parameter, and let $G = (V,E)$ be a connected graph with $|V| > r$. There exists a vertex set $U \subseteq V$ that satisfies the following properties. Let $C_1,\dots,C_k$ be the connected components of $G[V \setminus U]$. Then,
    \begin{inparaenum}[(a)]
        \item \label{item:randomness_1} For every $i \in [k]$, $|C_i| = r$, and
        \item \label{item:randomness_2} For every $v \in V$, there exists $i \in [k]$ such that $v \in B_r(C_i)$.
    \end{inparaenum} 
\end{lemma}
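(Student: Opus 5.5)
We will build $U$ greedily by carving out connected pieces of size exactly $r$, keeping the removed parts "separated" by $U$. Fix a BFS (or any) spanning tree $T$ of $G$. The first natural idea is to cut $T$ into connected subtrees of size exactly $r$ plus leftovers, but the components of $G[V\setminus U]$ are computed in $G$, not in $T$. So the plan is to choose the pieces and then put into $U$ every vertex not in a chosen piece; the chosen pieces must then be pairwise non-adjacent in $G$ so that they are exactly the components of $G[V\setminus U]$.

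\textbf{Construction.} Maintain a set $Z$ of chosen pieces and a set $D$ of "dead" vertices, namely the chosen vertices together with all their neighbors; initially both are empty. Repeat the following. If some vertex $v$ has $\dist_G(v,\bigcup Z)>r$ (with $\bigcup Z$ denoting the union of chosen pieces, and any vertex qualifying while $Z=\emptyset$), we try to grow a new piece near $v$. Run BFS from $v$ in $G$ and take the first $r$ vertices discovered; call this set $S$. It is connected and has size $r$ because $G$ is connected and $|V|>r$. It lies within $B_{r-1}(v)$. Since every vertex of $S$ is at distance $<r$ from $v$, we get $\dist(S,\bigcup Z)\ge \dist(v,\bigcup Z)-(r-1)>1$. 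Hence $S$ is disjoint from and non-adjacent to all previously chosen pieces. Add $S$ to $Z$. When no such $v$ remains, stop and set $U=V\setminus \bigcup Z$. The process terminates because each step enlarges $\bigcup Z$.

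\textbf{Verification.} The pieces in $Z$ are pairwise disjoint and non-adjacent, and each is connected of size exactly $r$. Therefore the components of $G[V\setminus U]=G[\bigcup Z]$ are precisely the pieces, which gives (a). At termination every $v$ satisfies $\dist(v,\bigcup Z)\le r$, i.e.\ $v\in B_r(C_i)$ for some $i$, which gives (b). The key quantitative point is the triangle-inequality step: BFS truncation keeps $S$ within radius $r-1$ of a vertex at distance at least $r+1$ from existing pieces, so $S$ stays at distance at least $2$ from them. This distance-$2$ gap is exactly what prevents merging of components, and it is the step I would double check, including the edge case where BFS needs exactly radius $r-1$, e.g.\ on a path. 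I expect no other obstacle.
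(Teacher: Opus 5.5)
Your proof is correct, and it reaches the lemma by a somewhat different greedy carving than the paper's. You choose each new piece as a BFS prefix of $r$ vertices around a vertex $v$ with $\dist_G(v,\bigcup Z)\ge r+1$, and you stop once every vertex is within distance $r$ of the chosen pieces.

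The paper instead maintains a residual set $V_i$. It picks an arbitrary connected $r$-subset $C_i$ of any component of $G[V_i]$ of size at least $r$, then deletes $C_i$ together with its neighbors $U_i$ inside $V_i$. It stops when every residual component has fewer than $r$ vertices, and sets $U=V_{k+1}\cup\bigcup_j U_j$.

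In the paper's version, non-adjacency of the pieces is automatic, so any connected $r$-subset may be used. Property (b), however, needs an argument. A leftover vertex lies in a residual component of size at most $r-1$. By connectivity of $G$, that component has a neighbor outside $V_{k+1}$. This neighbor must lie in some $U_i$ rather than in $C_i$, because neighbors of $C_i$ inside $V_i$ were placed in $U_i$. This gives distance at most $r$ to $C_i$.

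In your version, (b) is simply the stopping condition. The price is that the pieces must be ball-like. Because the $j$-th BFS-discovered vertex is at distance at most $j-1$ from $v$, $S\subseteq B_{r-1}(v)$. The triangle inequality then gives $\dist_G(S,\bigcup Z)\ge 2$, which is exactly what makes the pieces the components of $G[V\setminus U]$; the path case with radius exactly $r-1$ is covered by this bound. Your uses of connectivity and $|V|>r$ are sound: BFS always finds $r$ vertices, and the first iteration always fires, so $k\ge 1$.

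The spanning tree $T$ and the set $D$ of dead vertices you introduce are never used and can be dropped.
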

\begin{proof}
    We use the following simple iterative procedure to construct the set $U$ and clusters $C_1,\dots,C_k$, proving this decomposition's existence. At each step we have our set of active nodes $V_i$, initially set at $V_1 = V$. At the $i$-th step, if there is no connected component remaining in the graph with size at least $r$, we terminate the process. Otherwise, set $C_i$ to be an arbitrary connected subgraph of such a component, of size exactly $r$, and let
    \[U_i = \{u \in V_i \setminus C_i \mid u \in B_1(C_i)\}.\]
    We set $V_{i+1} = V_i \setminus (C_i \cup U_i)$. Assuming the process ended at step $k+1$, we set $U = V_{k+1} \cup \bigcup_{j=1}^{k} U_j$.

    Next we prove correctness of this process. 
    \begin{itemize}
        \item Property~(\ref{item:randomness_1}) follows trivially from the fact that each cluster is taken to be a connected subgraph of size exactly $r$.
        \item For Property~(\ref{item:randomness_2}), we consider a few simple cases. If $v \in C_i \cup U_i$ for some $i \in [k]$, then $v \in B_1(C_i)$. Otherwise, $v \in V_{k+1}$, i.e. $v$ is one of the nodes remaining at the end of the process. Since $k+1$ is the last step, the connected component of $v$ in $V_{k+1}$ is of size at most $r-1$. In particular, $\dist_G(v,\cup_{i=1}^{k}U_i) \leq r-1$, implying $v \in B_r(C_i)$ for some $i \in [k]$.
    \end{itemize} 
\end{proof}

Next, we show an implication of Property~\Cref{lem:shared_string_decomp}(\ref{item:randomness_2}), which we use for the soundness of the protocol. For an integer parameter $r \geq 1$, and set $U \subseteq V$, define the graph $G'(r,U)$ as follows. Let $C_1,\dots,C_k$ be the maximal connected components of $G[V \setminus U]$. Let $G'(r,U)$ be the graph with vertices $V' = \{v'_1,\dots,v'_k\}$, such that two vertices $v'_i,v'_j \in V'$ are connected by an edge if and only if $\dist_G(C_i,C_j) \leq 2r+1$. 
\begin{lemma}
\label{lem:cluster_graph_is_connected}
    Let $r \geq 1$ be an integer parameter, let $G = (V,E)$ be a connected graph, and let $U \subseteq V$. If $\max_{v \in V} \dist_G(v, V\setminus U) \leq r$, then $G'(r,U)$ is connected.
\end{lemma}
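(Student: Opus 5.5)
We argue by contradiction via a partition of the cluster-vertices. Suppose $G'(r,U)$ is disconnected. Then its vertex set splits into two nonempty parts $A' \cup B'$ with no edge between them. Let $A = \bigcup_{v'_i \in A'} C_i$ and $B = \bigcup_{v'_j \in B'} C_j$; both are nonempty subsets of $V \setminus U$ with $\dist_G(A,B) > 2r+1$. (If $V \setminus U = \emptyset$ the hypothesis $\max_v \dist_G(v,V\setminus U)\le r$ fails, so $k \ge 1$; if $k=1$ the graph is trivially connected.)

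The key step is to assign every vertex of $G$ to a side. Define $V_A = \{v \in V \mid \dist_G(v,A) \le r\}$ and $V_B = \{v \in V \mid \dist_G(v,B) \le r\}$. By hypothesis every $v$ is within distance $r$ of $V\setminus U = A \cup B$, so $V_A \cup V_B = V$. These sets are disjoint: a vertex in both would give $\dist_G(A,B) \le 2r$, contradicting $\dist_G(A,B) > 2r+1$. Both are nonempty since $A \subseteq V_A$ and $B \subseteq V_B$.

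Since $G$ is connected, some edge $\{x,y\}$ has $x \in V_A$ and $y \in V_B$. Then $\dist_G(A,B) \le \dist_G(A,x) + 1 + \dist_G(y,B) \le 2r+1$. This contradicts the choice of $A', B'$, since some pair $C_i \subseteq A$, $C_j \subseteq B$ would then satisfy $\dist_G(C_i,C_j) \le 2r+1$ and be adjacent in $G'(r,U)$. Hence $G'(r,U)$ is connected.

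The argument is short. The only point needing care is checking that the threshold $2r+1$ exactly absorbs the two radius-$r$ detours plus the crossing edge. Taking the distance between unions as the minimum over constituent component pairs makes the final contradiction immediate.
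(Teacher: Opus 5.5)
Your proof is correct and rests on the same idea as the paper's: every vertex lies within distance $r$ of some component of $G[V\setminus U]$, so an edge of $G$ between vertices attached to different components certifies a pair of components at distance at most $r+1+r=2r+1$. The paper argues directly, walking along a path in $G$ and mapping each vertex to its nearest component. You argue contrapositively, lifting a cut of $G'(r,U)$ to a cut of $V$ via the radius-$r$ neighborhoods $V_A,V_B$. The two arguments are equivalent.
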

\begin{proof}
    Consider a path $p = (w_1 = u,w_2,w_3,\dots,w_\ell =v)$ between two nodes $u,v \in V \setminus U$. For $j \in [\ell]$, let $C_{i_j}$ denote the maximal connected component in $G[V \setminus U]$ closest to $w_j$ in $G$ (breaking ties arbitrarily). By assumption, $\dist_G(w_{j},C_{i_j}) \leq r$. It follows by the triangle inequality that $\dist_G(C_{i_j},C_{i_{j+1}}) \leq 2r+1$, i.e. $v'_{i_j}$ is connected by an edge to $v'_{i_{j+1}}$ in $G'(r,U)$, for all $1 \leq j \leq \ell-1$. Hence, $v'_i,v'_j$ are connected in $G'(r,U)$ if there is a path between the connected components $C_i$ and $C_j$ in $G$. Since $G$ is connected, we conclude that $G'(r,U)$ is also connected.
\end{proof}

Before the proof of \Cref{lem:shared_string}, recall the functions $\lex$ and $\decodelex$ for encoding and decoding strings across a cluster $C$ that were introduced in \Cref{sec:TSpartition}. In short, $\lex(C,S,v)$ pads $S$ to a multiple of $|C|$, and returns the $i$-th block, where $i$ is the lexicographic rank of $v$ in $C$. Conversely, $\decodelex(C,\ell)$, where $\ell(v) = \lex(C,S,v)$ for $v \in C$, reconstructs the original string $S$ by concatenating these blocks in order and removing the padding.

\paragraph*{Proof of \Cref{lem:shared_string}}
    We show a $(4r+2)$-PLS for the problem described. We assume that the labeling given to a node $v$ is of the form $(\var{InU}(v),\var{PartS}(v))$, where $\var{InU}(v)\in\{0,1\}$ is one bit and $\var{PartS}(v)\in\{0,1\}^*$ is an arbitrary string (otherwise, the verifier rejects). Given a labeling $(\var{InU},\var{PartS})$ on all nodes, let $V_0 = \{v \in V \mid \var{InU}(v) = 0\}$. A $0$-cluster $C \subseteq V$ is a maximal connected component in $G[V_0]$, i.e. a maximal connected component in the graph induced by nodes with $\var{InU}(v) = 0$.
    
    \paragraph*{Description of $\Prov$:} Fix a set $U \subseteq V$ satisfying the properties described in \Cref{lem:shared_string_decomp}. Set $\Prov$ as a function which gives labels $(\var{InU}(v),\var{PartS}(v))$ to each node $v$, where
    \begin{inparaenum}[(a)]
        \item Set $\var{InU}(v) = 1$ if $v \in U$, and $\var{InU}(v) = 0$ otherwise.
        \item If $v \in U$, set $\var{PartS}(v) = \emptyset$. Otherwise, let $C_v$ be the $0$-cluster that contains $v$, and set $\var{PartS}(v) = \lex(C_v,S,v)$.
    \end{inparaenum}

    \paragraph*{Description of $\Ver$:}
    We describe the verifier from the viewpoint of $v$, and say that node $v$ rejects if $\Ver(G,I,v) = \reject$, and node $v$ accepts otherwise. First, $v$ rejects if its label is not in the format $(\var{InU}(v),\var{PartS}(v))$ with $\var{InU}(v)\in\{0,1\}$. Let $\var{SeenClusters}(v) = \{C \subseteq B_{4r+1}(v) \mid C \text{ is a $0$-cluster}\}$. Node $v$ rejects if $|\var{SeenClusters}(v)| = 0$,  if $\min_{C \in \var{SeenClusters}(v)} \dist_G(v,C) > r$, or if there is a $0$-cluster $C$ such that $\dist_G(v,C) \leq r$, but such that $|C| \neq r$. 
    
    Then $v$ runs for all $C \in \var{SeenClusters}(v)$ the function $\decodelex(C,\var{PartS})$ and obtains a string $S'(C)$. It then asserts that for all $C,C' \in \var{SeenClusters}(v)$ we have $S'(C) = S'(C')$. If this assertion is false, $v$ rejects. Otherwise, there is some string $S' = S'(C)$ for all $0$-clusters $C \in \var{SeenClusters}(v)$. $v$ accepts and outputs the string $S'$.

    \paragraph*{Correctness:} We need to show that $\Ver(G,I,\Prov(G,I),v)$ accepts for all $v \in V$, and that all nodes output the string $S$. By \Cref{lem:shared_string_decomp}(\ref{item:randomness_2}) there exists a $0$-cluster $C$ such that $\dist_G(v,C) \le r$, and in particular $C \subseteq B_{4r+1}(v)$, so $|\var{SeenClusters}(v)| \geq 1$ and $\min_{C' \in \var{SeenClusters}(v)} \dist_G(v,C') \le r$. Let $C \in \var{SeenClusters}(v)$. We note that for any $u \in C$ we have $\var{PartS}(u) = \lex(C,S,u)$. Therefore, $\decodelex(C,\var{PartS}) = S$, i.e. $S'(C) = S$ for all $C \in \var{SeenClusters}(v)$. Correctness follows.
        
    \paragraph*{Soundness:} Assume that all nodes $v \in V$ accept on some labeling $\ell = (\var{InU},\var{PartS})$. Let $U(\ell) = \{v \in V \mid \var{InU}(v) = 1\}$. We show that the following holds.
    
    \begin{lemma}
        If $\Ver(G,I,v) = \accept$ for all $v \in V$, then:
        \begin{enumerate}[(1)]
        \item The cluster graph $G'(r,U(\ell))$ is connected.
        \item \label{item:share_clusters_same_size} For any $0$-cluster $C$, we have that $|C| = r$.
        \item Let $C,C'$ be $0$-clusters that are connected by an edge in $G'(r,U(\ell))$, then $S'(C) = S'(C')$.
        \end{enumerate}
    \end{lemma}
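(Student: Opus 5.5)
We prove the three items in the order (2), (1), (3), since (2) is used in the other two and (1) reduces to \Cref{lem:cluster_graph_is_connected}.

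\emph{Item (2).} Let $C$ be any $0$-cluster and pick $v \in C$. Then $\dist_G(v,C) = 0 \le r$, and $v$ accepted. The verifier at $v$ rejects whenever some $0$-cluster within distance $r$ has size different from $r$. The only subtlety is that $v$ must actually be able to see $C$, because it only examines $B_{4r+1}(v)$. If $|C| > r$, some $(r+1)$-vertex connected subset of $C$ containing $v$ already lies in $B_r(v)$, so $v$ detects that its component exceeds $r$ vertices. If $|C| \le r$, the whole component lies within distance $r-1$ of $v$, and $v$ sees it entirely, including the $U$-boundary certifying maximality. In either case acceptance forces $|C| = r$. We should state explicitly that the verifier interprets ``$|C|\neq r$'' via this local view; this is the routine point to check.

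\emph{Item (1).} Every $v$ accepted, so $\min_{C \in \var{SeenClusters}(v)} \dist_G(v,C) \le r$. In particular some vertex of $V_0 = V\setminus U(\ell)$ is within distance $r$ of $v$, that is, $\max_v \dist_G(v, V\setminus U(\ell)) \le r$. Since $G$ is connected by assumption, \Cref{lem:cluster_graph_is_connected} applied with $U = U(\ell)$ gives that $G'(r,U(\ell))$ is connected.

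\emph{Item (3).} This is the main step: we must find a single node that sees both clusters. Let $C, C'$ be adjacent in $G'(r,U(\ell))$, so there are $x \in C$ and $y \in C'$ with $\dist_G(x,y) \le 2r+1$. Take $v = x$. By item (2), $|C| = r$ and $C$ is connected, so every vertex of $C$ is within distance $r-1$ of $x$, and $C \subseteq B_{4r+1}(x)$. Likewise every vertex of $C'$ is within $r-1$ of $y$, hence within $3r \le 4r+1$ of $x$, so $C' \subseteq B_{4r+1}(x)$. Moreover, since $|C|=|C'|=r$, both clusters are fully visible, together with their neighbouring $U$-vertices, within radius $4r+1$. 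So $x$ correctly identifies them as maximal $0$-clusters in $\var{SeenClusters}(x)$. Because $x$ accepted, it asserted $S'(C) = S'(C')$, which proves item (3).

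Combining the three items, connectivity of $G'(r,U(\ell))$ propagates equality of $S'$ across all $0$-clusters. Every node outputs the $S'$ of a nearby cluster, so all nodes output one common string, which completes the soundness of \Cref{lem:shared_string}.
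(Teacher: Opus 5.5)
Your proof is correct and follows essentially the same route as the paper: item (1) comes from the acceptance condition $\min_{C\in\var{SeenClusters}(v)}\dist_G(v,C)\le r$ together with \Cref{lem:cluster_graph_is_connected}, item (2) from the size check at a node $v\in C$, and item (3) by exhibiting a node of $C$ whose $(4r+1)$-ball contains both $C$ and $C'$ (you take the endpoint $x$ of a short $C$--$C'$ path, whereas the paper takes an arbitrary $v\in C$). The only difference is that you spell out the local-visibility points the paper glosses over: that $v$ can detect $|C|\neq r$ even when $C\not\subseteq B_{4r+1}(v)$, and that the $U$-boundaries of $C$ and $C'$ are visible, so both are recognized as maximal $0$-clusters.
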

    \begin{proof} We show that indeed all the properties above hold.
        \begin{enumerate}[(1)]
            \item Since every node $v$ accepts, it holds that $\min_{C \in \var{SeenClusters}(v)} \dist_G(v,C) \le r$. In particular $\dist_G(v, V \setminus U(\ell)) \le r$ for all $v \in V$, and hence by \Cref{lem:cluster_graph_is_connected} the graph $G'(r,U(\ell))$ is connected.
            \item  Let $v \in C$, then $C \in \var{SeenClusters}(v)$ (otherwise $v$ rejects). This implies that $|C| = r$, since $v$ asserts that $|C| = r$ for every $C \in \var{SeenClusters}(v)$.
            \item Since $C,C'$ are connected by an edge, then by definition $\dist_{G}(C,C') \leq 2r+1$, and by (\ref{item:share_clusters_same_size}) we have $|C| = r$. Therefore, for a vertex $v \in C$ we have that $C,C' \subseteq B_{4r+1}(v)$. In particular, $C,C' \in \var{SeenClusters}(v)$ and $v$ asserts that $S'(C) = S'(C')$.
        \end{enumerate}
    The lemma implies that there exists some string $S'$, such that for every $0$-cluster $C$, $S'(C) = S'$. This follows from the cluster graph $G'(r,U(\ell))$ being connected, and that for every edge $\{C,C'\} \in E(G'(r,U(\ell)))$ we have that $S'(C) = S'(C')$. Moreover, since for any $v \in V$, we have $|\var{SeenClusters}(v)| \geq 1$ (otherwise $v$ rejects), then node $v$ outputs the string $S'$, and soundness follows.        
    \end{proof}

\end{document}

%% file: macros.tex
\theoremstyle{plain}
\newtheorem{theorem}{Theorem}[section]

\newtheorem{lemma}[theorem]{Lemma}

\newtheorem{corollary}[theorem]{Corollary}
\newtheorem{conjecture}[theorem]{Conjecture}
\newtheorem{question}[theorem]{Question}
\newtheorem{definition}{Definition}

\newcommand{\Prov}{\mathrm{Prov}}
\newcommand{\Ver}{\mathrm{Ver}}

\newcommand{\etal}{{\emph{ et al.}~}}

\def\cA{\ensuremath{\mathcal{A}}}  
  
\def\cC{\ensuremath{\mathcal{C}}}

\def\cG{\ensuremath{\mathcal{G}}}

\def\cI{\ensuremath{\mathcal{I}}}

\def\cP{\ensuremath{\mathcal{P}}}

\def\cX{\ensuremath{\mathcal{X}}}
\def\cY{\ensuremath{\mathcal{Y}}}

\def\eps{\varepsilon}

\newcommand{\var}[1]{\mathrm{#1}}
\newcommand{\accept}{\mathrm{accept}}
\newcommand{\reject}{\mathrm{reject}}
\newcommand{\leader}{\mathrm{leader}}

\DeclareMathOperator{\TS}{TS}

\DeclareMathOperator{\dist}{dist}

\DeclareMathOperator{\lex}{Lex}
\DeclareMathOperator{\decodelex}{DecodeLex}

\newcommand{\alive}{L}
\DeclareMathOperator{\Comp}{Comp}

\DeclareMathOperator{\Pad}{Pad}

\DeclareMathOperator{\Bad}{Bad}

\DeclareMathOperator{\ID}{ID}

\newcommand{\weakdiam}{\mathrm{weak\text{-}diam}}

\usepackage[colorinlistoftodos,prependcaption,textsize=tiny]{todonotes}

%% file: abstract.tex
\begin{abstract}
In the $t$-Proof Labeling Scheme model ($t$-PLS model), our goal is to certify that a network of nodes satisfies a given property $P$. A prover assigns a label to each node, and each node decides to accept or reject based on its labeled $t$-hop neighborhood. If $P$ holds, there exists a labeling that makes all nodes accept. If $P$ does not hold, in all labelings at least one node rejects. The cost of a scheme is its maximum label size. 

The \emph{Tradeoff Conjecture} [Feuilloley, Fraigniaud, Hirvonen, Paz, and Perry, DISC 18, Dist. Comput.~21]  hypothesizes that the existence of a $1$-PLS for a property $P$ with cost $p$  implies the existence of a $t$-PLS for $P$ with cost $O(\lceil p/t \rceil)$. The conjecture was initially shown to hold for specific graph classes, such as trees, cycles, and grids.
Later, a weaker $\widetilde{O}(\lceil \Delta p/\sqrt{t} \rceil)$ cost was shown for fixed minor-free graphs, where $\Delta$ is the maximum degree.

In this work we resolve the Tradeoff Conjecture, up to a single logarithmic factor. In general graphs, we show that the existence of a $1$-PLS with cost $p$ implies the existence of an $O(t\log{n})$-PLS with cost $O(\lceil p/t \rceil)$ for the same property. For fixed minor-free graphs (which include e.g. planar graphs), we show that the existence of a $1$-PLS with cost $p$ implies the existence of a $t$-PLS with cost $O(\lceil p/t \rceil+\log{n})$ for the same property. We also refute a previously suggested stronger variant of the Tradeoff Conjecture, and show that having very large $t$-hop neighborhoods is an insufficient condition for obtaining a tradeoff better than $O(\lceil p/t \rceil)$.    
\end{abstract}

	\vfill
	\begin{multicols}{2}
		{
   \small 
			\setcounter{secnumdepth}{5}
			\setcounter{tocdepth}{2} \tableofcontents
		}
	\end{multicols}